\newtheorem{proposition}{Proposition}
\newtheorem{lemma}{Lemma}
\newtheorem{remark}{Remark}
\newtheorem{assumption}{Assumption}
\DeclareMathAlphabet{\mathcalligra}{T1}{calligra}{m}{n}
\DeclareMathOperator{\diag}{diag}
\newcommand{\vertiii}[1]{{\left\vert\kern-0.25ex\left\vert\kern-0.25ex\left\vert #1 
    \right\vert\kern-0.25ex\right\vert\kern-0.25ex\right\vert}}
\algrenewcommand\algorithmicrequire{\textbf{Initialization:}}
\algrenewcommand\algorithmicensure{\textbf{For each $i$, execute for $k\geq 0$:}}
\algrenewcommand\ALG@beginalgorithmic{\small}
\newcommand\fs@betterruled{%
  \def\@fs@cfont{\bfseries}\let\@fs@capt\floatc@ruled
  \def\@fs@pre{\vspace*{5pt}\hrule height.8pt depth0pt \kern2pt}%
  \def\@fs@post{\kern2pt\hrule\relax}%
  \def\@fs@mid{\kern2pt\hrule\kern2pt}%
  \let\@fs@iftopcapt\iftrue}
\begin{document}
\title{{Fast Distributed Coordination of Distributed Energy Resources over Time-Varying Communication Networks}}

\author{Madi Zholbaryssov, Christoforos N. Hadjicostis,~\IEEEmembership{Fellow,~IEEE}, \\Alejandro D. Dom\'{i}nguez-Garc\'{i}a,~\IEEEmembership{Senior Member,~IEEE}
        \thanks{M. Zholbaryssov and A. D. Dom\'{i}nguez-Garc\'{i}a are with the ECE Department at the University of Illinois at Urbana-Champaign, Urbana, IL 61801, USA. E-mail:  \{zholbar1, aledan\}@ILLINOIS.EDU.}
\thanks{C. N. Hadjicostis is with the ECE Department at the University of Cyprus, Nicosia, Cyprus, and also with the ECE Department at  the University of Illinois at Urbana-Champaign, Urbana, IL 61801, USA. E-mail:  chadjic@UCY.AC.CY.}
}
\maketitle
\begin{abstract}
In this paper, we consider the problem of optimally coordinating the response of a group of distributed energy resources (DERs) so they collectively meet the electric power demanded by a collection of loads, while minimizing the total generation cost and respecting the DER capacity limits. This problem can be cast as a convex optimization problem, where the global objective is to minimize a sum of convex functions corresponding to individual DER generation cost, while satisfying (i) linear inequality constraints corresponding to the DER capacity limits and (ii) a linear equality constraint corresponding to the total power generated by the DERs being equal to the total power demand. We develop distributed algorithms to solve the DER coordination problem over time-varying communication networks with either bidirectional or unidirectional communication links. The proposed algorithms can be seen as distributed versions of a centralized primal-dual algorithm. One of the algorithms proposed for directed communication graphs has geometric convergence rate even when communication out-degrees are unknown to agents.
We showcase the proposed algorithms using the standard IEEE~$39$--bus test system, and compare their performance against other ones proposed in the literature.
\end{abstract}
\IEEEpeerreviewmaketitle

\section{Introduction}
It is envisioned that present-day power grids, which are dependent on centralized power generation stations, will transition towards more decentralized power generation mostly based on distributed energy resources (DERs). One of the obstacles in making this shift happen is to find effective control strategies for coordinating DERs. In this regard, and partly due to high variability introduced by renewable-based generation resources, DERs will need to more frequently adjust their set-points, which entails development of fast control strategies. Also, because of the communication overhead, it may not be feasible to use a centralized approach to coordinate a large number of DERs over a large geographic area. This necessitates DER coordination using distributed control strategies that scale well to power networks of large size.

In this work, we consider a group of DERs and electrical loads, which are interconnected by an electric power network, and can exchange information among themselves via some communication network. Each DER is endowed with a power generation cost function, which is unknown to other DERs, and its power output is upper- and lower-limited by some capacity constraints. A computing device attached to each DER is able to communicate with the computing devices of other DERs located within its communication range. Then, the objective is to determine, in a distributed manner, DER optimal power outputs so as to satisfy total electric power demand while minimizing the total generation cost and respecting DER capacity limits. This DER coordination problem can be cast as a convex optimization problem (see, e.g., \cite{ZhCh12, DoCaHa12,KaHu12,ZhPa14,CaDoHa15,ChZh16,WuJo17}), where the global objective is to minimize a sum of convex functions corresponding to the costs of generating power from the DERs, while satisfying linear inequality constraints on the power produced by each DER, and a linear equality constraint corresponding to the total generated power being equal to the total power consumed by the electrical loads. 

Since we aim to solve the DER coordination problem in a distributed manner, we also address the issue of achieving resilient and fault-tolerant operation, which requires a control design that is robust to communication delays and random data packet losses. In this paper, we focus on the challenges that arise due to the time-varying nature of the underlying communication network, and address the DER coordination problem via distributed algorithms that are capable of operating over time-varying communication graphs with either (i) bidirectional or (ii) unidirectional communication links. These algorithms also have geometric convergence rate, which is a desirable feature for ensuring fast performance.
We believe that the proposed algorithms can be extended to solve more complex DER coordination problems with additional constraints, e.g., line flow constraints, voltage constraints, or reactive power balance constraints, as long as these are linear and have a separable structure, i.e., each constraint is local or involves only a pair of neighboring nodes.

A vast body of work has focused on solving the DER coordination problem in a distributed way (see, e.g., \cite{ZhCh12, DoCaHa12,YaTa13,KaHu12, ZhPa14, CaDoHa15,ChCo15,ChZh16,WuJo17,DuYa18,YaWu19}). Earlier works focused on time-invariant communication networks (see, e.g., \cite{ZhCh12, DoCaHa12,YaTa13,ChCo15}). In one of the earliest works, the authors of \cite{ZhCh12} proposed a distributed approach in which agents' local estimates are driven to the optimal incremental cost via the leader-follower consensus algorithm. The authors of \cite{DoCaHa12} utilize the so-called \textit{ratio-consensus algorithm} (see, e.g., \cite{DoHa12,HaVaDo16}) to distributively compute the solution to the dual formulation of the DER coordination problem. Later works focused on time-varying communication networks (see, e.g., \cite{KaHu12, ZhPa14, CaDoHa15,ChZh16,WuJo17}). For example, in \cite{WuJo17}, the authors propose a robustified version of the so-called subgradient-push method (see, e.g., \cite{Nedic15}) that operates over time-varying directed communication networks; the algorithm utilizes the so-called push-sum protocol (see, e.g., \cite{Kempe03,BeBl10,DoHa11}) to converge to a consensual solution. In \cite{KaHu12}, the authors propose a distributed algorithm that uses a consensus term to converge to a common incremental cost, and a subgradient term to satisfy the total load demand; the algorithm is designed assuming that generation cost functions are quadratic. However, convergence of the algorithms proposed in \cite{WuJo17} and \cite{KaHu12} is not guaranteed to be geometrically fast and might be slow due to the fact that the algorithms use a diminishing stepsize. In \cite{DuYa18} and \cite{YaWu19}, the authors propose distributed algorithms based on the dual-ascent method that have geometric convergence rate but require the agents to know their communication out-degrees.

Our starting point in the design of the algorithms is a primal-dual algorithm (first order Lagrangian method), where the dual variable associated with the power balance constraint depends on the total power imbalance (supply-demand mismatch). We then develop distributed versions of this primal-dual algorithm by having DERs closely emulate the iterations of the primal-dual algorithm. To this end, each node with a DER maintains an estimate of the dual variable and updates it using a local estimate of the total power imbalance and the neighbors' estimates of the total power imbalance. The update of the total power imbalance estimate is based on the gradient tracking idea that appeared in \cite{Nedic17}.
To enable agents to operate over time-varying directed communication graphs when their communication out-degrees are unknown to them, we propose a robust distributed primal-dual algorithm that converges geometrically fast. 

Each proposed algorithm is viewed as a feedback interconnection of the (centralized) primal-dual algorithm representing the nominal system and the error dynamics due to the nature of the distributed implementation. The key ingredient for establishing the convergence results is to show that both systems are finite-gain stable, which then allows us to use the small-gain theorem (see, e.g., \cite{Khalil}) to show the convergence of the feedback interconnected system. The small-gain-theorem-based analysis first appeared in \cite{Nedic17} in the context of distributed algorithms for solving an unconstrained consensus optimization problem.
\section{Preliminaries}
\label{sec:preliminaries}
In this section, we formulate the DER coordination problem and give an overview of the small-gain theorem for discrete-time systems.
\subsection{DER Coordination Problem}
We consider a collection of DERs and electrical loads interconnected by a power network. Let $p_i$ denote the power output of the DER at bus $i$, $1\leq i\leq n$, and let $\ell_i$ denote the power consumed by the load at bus $i$, $1\leq i\leq n$. Let $\underline{p}_i$ and $\overline{p}_i$ denote the lower and upper limits on the power that the DER at bus $i$ can generate. Also, let $f_i(\cdot)$ denote the cost function associated with the electric power generated by the DER at bus $i$. We assume the power consumed by the loads is fixed and known. Then, our main objective is to determine power generated by the DERs in order to collectively satisfy total electric power demand, $\sum_{i=1}^n \ell_i$, while minimizing the total generation cost, $\sum_{i=1}^n f_i(p_i)$. 

 
More formally, we consider the following DER coordination problem that has been studied in \cite{ZhCh12, DoCaHa12,KaHu12,ZhPa14,ChZh16,WuJo17}:
\begin{subequations}\label{ED}
\begin{align}
\underset{p\in\mathds{R}^n}{\mbox{minimize}} & \mbox{ } \sum\limits_{i=1}^n f_i(p_i) \label{cost}\\
\mbox{subject to} & \mbox{ } \mathbf{1}^\top p = \mathbf{1}^\top \ell, \label{flow_constr}\\
&\mbox{ }\underline{p}\leq p\leq \overline{p},\label{box_constr1}
\end{align}
\end{subequations}
where $p=[p_1,\dots,p_n]^\top$, $\ell=[\ell_1,\dots,\ell_n]^\top$, $\underline{p} = [\underline{p}_1,\dots,\underline{p}_n]^\top$, $\overline{p} = [\overline{p}_1,\dots,\overline{p}_n]^\top$, and $\mathbf{1}$ is the all-ones vector (its size should be clear from the context). We assume that $\mathbf{1}^\top\underline{p} \leq \mathbf{1}^\top\ell \leq \mathbf{1}^\top\overline{p}$, which makes~\eqref{ED} feasible. 
Additionally, we make the following assumption regarding the objective function. 
\begin{assumption}\label{objective_assumption}
Each cost function $f_i(\cdot)$ is twice differentiable and strongly convex with parameter $m>0$, i.e., $f_i^{\prime\prime}(x)\geq m$, $\forall x \in [\underline{p}_i, \overline{p}_i]$, $\forall i\in\mathcal{V}$.
\end{assumption}
The main objective of our work in this paper is to design a distributed algorithm for solving~\eqref{ED} geometrically fast over time-varying communication networks.
\subsection{The Small-Gain Theorem}
In the following, we give a brief overview of the main analysis tool used in later developments---the small-gain theorem (see, e.g., \cite[Theorem~5.6]{Khalil}) for discrete-time systems.
For the forthcoming developments, we adopt the appropriate metric for measuring energy content of the signals of interest. For a given sequence of iterates, $\{x[k]\}_{k=0}^{\infty}$, where $x[k]\in\mathds{R}^n$, consider the following norm (previously used in \cite{Nedic17}): \[\|x\|_2^{a,K}\coloneqq \max\limits_{0\leq k\leq K}a^{-k}\|x[k]\|_2,\] for some $a \in (0,1)$, where $\|\cdot\|_2$ is the Euclidean norm. If $\|x\|_2^{a,K}$ is bounded for all $K\geq 0$, then, $a^{-k}\|x[k]\|_2$ is bounded for all $k\geq 0$, and, thus, it follows that $x[k]$ converges to zero at a geometric rate $\mathcal{O}(a^k)$.

Now, consider a feedback connection of two discrete-time systems~$\mathcal{H}_1$ and $\mathcal{H}_2$ such that
\begin{align*}
e_2[k+1] &= \mathcal{H}_1(e_1[k]),\\
e_1[k+1] &= \mathcal{H}_2(e_2[k]).
\end{align*}
We assume that $\mathcal{H}_1$ and $\mathcal{H}_2$ are finite-gain stable in the sense of the norm $\|\cdot\|_2^{a,K}$, namely, the following relations hold:
\begin{subequations}\label{finite-gain-stability}
\begin{align}
\|e_2\|_2^{a,K} &\leq \gamma_1\|e_1\|_2^{a,K}+\beta_1,\\
\|e_1\|_2^{a,K} &\leq \gamma_2\|e_2\|_2^{a,K}+\beta_2,
\end{align}
\end{subequations}
for some nonnegative constants $\beta_1$, $\beta_2$, $\gamma_1$, and $\gamma_2$.
From~\eqref{finite-gain-stability}, we have that
\begin{align}\label{small-gain-th-1}
\|e_2\|_2^{a,K} &\leq \gamma_1\|e_1\|_2^{a,K}+\beta_1\nonumber\\
&\leq \gamma_1\gamma_2\|e_2\|_2^{a,K}+\gamma_1\beta_2+\beta_1,
\end{align}
which by rearranging yields
\begin{align*}
\|e_2\|_2^{a,K} &\leq \frac{\gamma_1\beta_2+\beta_1}{1-\gamma_1\gamma_2}.
\end{align*}
Similarly,
\begin{align*}
\|e_1\|_2^{a,K} &\leq \frac{\gamma_2\beta_1+\beta_2}{1-\gamma_1\gamma_2}.
\end{align*}
Then, if $\gamma_1\gamma_2<1$, $\|e_1\|_2^{a,K}$ and $\|e_2\|_2^{a,K}$ are bounded, and $e_1[k]$ and $e_2[k]$ converge to zero at a geometric rate $\mathcal{O}(a^k)$. 

\section{DER Coordination Over Time-Varying Undirected Graphs}\label{sec:DOD_un}
In this section, we present a distributed algorithm for solving the DER coordination problem~\eqref{ED} over time-varying undirected communication graphs. 
\subsection{Communication Network Model}\label{subsec:cyber_layer_un}
Here, we introduce the model describing the communication network that enables the bidirectional exchange of information between DERs. 
Let $\mathcal{G}^{(0)} = (\mathcal{V},\mathcal{E}^{(0)})$ denote an undirected graph, where each element in the node set $\mathcal{V} \coloneqq \{1,2,\dots,n\}$ corresponds to a DER, and $\{i,j\} \in \mathcal{E}^{(0)}$ if there is a communication link between DERs $i$ and $j$ that allows them to exchange information.
During any time interval $(t_k,t_{k+1})$, successful data transmissions among the DERs can be captured by the undirected graph $\mathcal{G}^{(c)}[k]=(\mathcal{V},\mathcal{E}^{(c)}[k])$, where $\mathcal{E}^{(c)}[k]\subseteq \mathcal{E}^{(0)}$ is the set of active communication links, with $\{i,j\} \in \mathcal{E}^{(c)}[k]$ if nodes $i$ and $j$ simultaneously exchange information with each other during time interval $(t_k,t_{k+1})$. 
Let $\mathcal{N}_i \coloneqq\{j\in\mathcal{V}:\{i,j\} \in\mathcal{E}^{(0)}\}$ denote the set of nominal neighbors, and $d_i \coloneqq |\mathcal{N}_i|+1$ the nominal degree of node~$i$.
Let $\mathcal{N}_i[k]$ denote the set of neighbors of node~$i$ during time interval $(t_k,t_{k+1})$, i.e., $\mathcal{N}_i[k]\coloneqq\{j\in\mathcal{V}:\{i,j\} \in \mathcal{E}^{(c)}[k]\}$.
We make the following standard assumption regarding the connectivity of the network (see, e.g., \cite{Nedic15,Nedic17}). 
\begin{assumption}
\label{assume_comm_model_un1}
There exists some positive integer $B$ such that the graph with node set $\mathcal{V}$ and edge set $\bigcup_{l= kB}^{(k+1)B-1}\mathcal{E}^{(c)}[l]$ is connected for $k=0,1,\dots$. 
\end{assumption}

\subsection{Distributed Primal-Dual Algorithm}
Our starting point to solve~\eqref{ED} is the following primal-dual algorithm \cite[Chapter~4.4]{NonlinearProgramming} with the additional projection:
\begin{subequations}\label{central_alg}
\begin{align}
p_i[k+1]&=\Big[p_i[k]-s f^\prime_i(p_i[k])+s\xi\overline{\lambda}[k]\Big]_{\underline{p}_i}^{\overline{p}_i},\\
\overline{\lambda}[k+1] &= \overline{\lambda}[k]-s\mathbf{1}^\top (p[k]-\ell),
\end{align}
\end{subequations}
where $p[k]=[p_1[k],\dots,p_n[k]]^\top$, $[\cdot]_{\underline{p}_i}^{\overline{p}_i}$ denotes the projection onto the interval $[\underline{p}_i,\overline{p}_i]$, $s>0$ is a constant stepsize, $\xi \in (0,1]$ is a constant parameter, and $\overline{\lambda}[k]$ is the estimate of the Lagrange multiplier at time $k$ associated with the power balance constraint, $\mathbf{1}^\top p = \mathbf{1}^\top\ell$. 
Algorithm~\eqref{central_alg} does not conform to the general communication model described in Section~\ref{subsec:cyber_layer_un} because in order to execute it, the total power imbalance, $\mathbf{1}^\top (p[k]-\ell)$, at time $k$ is needed to update $\overline{\lambda}[k]$. 

To design a distributed version of~\eqref{central_alg}, each node~$i$ needs to have a local estimate of $\overline{\lambda}[k]$, denoted by $\lambda_i[k]$. To update $\lambda_i[k]$, it should also have an estimate of $\mathbf{1}^\top (p[k]-\ell)$. One such estimate that can be constructed purely based on the local power imbalance is $\hat{n}(p_i[k]-\ell_i)$, where $\hat{n}$ is some estimate of $n$ that every node has, e.g., $\hat{n}$ can be one, which leads us to the following distributed algorithm:
\begin{subequations}\label{primal_dual_un_ac}
\begin{align}
p_i[k+1]&=\Big[p_i[k]-sf^\prime_i(p_i[k])+s\xi\lambda_i[k]\Big]_{\underline{p}_i}^{\overline{p}_i},\\
\lambda_i[k+1]&= (1-\sum_j a_{ij}[k])\lambda_i[k]+\sum_j a_{ij}[k]\lambda_j[k]\nonumber\\&\quad-s\hat{n}(p_i[k]-\ell_i),\label{lmbd_update}
\end{align}
\end{subequations}
where $a_{ij}[k]= a_{ji}[k]\geq \eta$ if $\{i,j\} \in \mathcal{E}^{(c)}[k]$, $a_{ij}[k]=0$ if $\{i,j\} \notin \mathcal{E}^{(c)}[k]$, and the constant $\eta > 0$ is chosen so that $1-\sum_j a_{ij}[k]\geq\eta$. Even if $\hat{n}$ is an accurate estimate of $n$, $\hat{n}(p_i[k]-\ell_i)$ is a very crude estimate of $\mathbf{1}^\top (p[k]-\ell)$, and results in poor performance as will be demonstrated later via numerical simulations. 

A better approach is to let each node estimate the total power imbalance by using its local power imbalance and the estimates of its neighbors. 
To elaborate on this further, we let $y_i$ denote node~$i$'s estimate of the total power imbalance. Then, one way to update $y_i$ is as follows:
\begin{align}
y_i[k+1]&= (1-\sum_j a_{ij}[k])y_i[k]+\sum_j a_{ij}[k]y_j[k]\nonumber\\&\quad+\hat{n}(p_i[k+1]-p_i[k]), \label{y_update}
\end{align}
where $y_i[0] = \hat{n}(p_i[0]-\ell_i)$. In~\eqref{y_update}, node~$i$ first computes the average of its estimate and the estimates of its neighbors, and then adds $\hat{n}(p_i[k+1]-p_i[k])$ to ensure that the average of all total power imbalance estimates is always equal to $({\hat{n}}/{n})\mathbf{1}^\top (p[k]-\ell)$, which is equal to the total power imbalance, $\mathbf{1}^\top (p[k]-\ell)$ if $\hat{n}=n$. This second step allows local estimates to remain close to the total power imbalance. 
Below, we provide the complete update formula for the primal and dual variables:
\begin{subequations}
\begin{align}
p_i[k+1]&=\Big[p_i[k]-sf^\prime_i(p_i[k])+s\xi\lambda_i[k]\Big]_{\underline{p}_i}^{\overline{p}_i},\label{u_update_un}\\
\lambda_i[k+1]&= (1-\sum_j a_{ij}[k])\lambda_i[k]+\sum_j a_{ij}[k]\lambda_j[k]\nonumber\\&\quad-sy_i[k],\label{lmbd_update}\\
y_i[k+1]&= (1-\sum_j a_{ij}[k])y_i[k]+\sum_j a_{ij}[k]y_j[k]\nonumber\\&\quad+\hat{n}(p_i[k+1]-p_i[k]).\label{y_update_un}
\end{align}
\label{primal_dual_un}%
\end{subequations}
Note that in~\eqref{lmbd_update}, node~$i$ computes the (weighted) average of its estimate and the estimates of its neighbors, which yields a good estimate of $\overline{\lambda}$. 

\subsection{Feedback Interconnection Representation of the Distributed Primal-Dual Algorithm}
In the following, we represent~\eqref{primal_dual_un} as a feedback interconnection of a nominal system, denoted by $\mathcal{H}_1$, and a disturbance system, denoted by $\mathcal{H}_2$, which allows us to utilize the small-gain theorem for convergence analysis purposes. To this end, let $e[k] \coloneqq \lambda[k]-(\frac{1}{n}\mathbf{1}^\top\lambda[k])\mathbf{1}$, $\lambda[k]=[\lambda_1[k],\dots,\lambda_n[k]]^\top$, and $\hat{\lambda}[k] \coloneqq \frac{1}{\hat{n}}\mathbf{1}^\top\lambda[k]$; then, we  define the nominal system, $\mathcal{H}_1$, as follows:
\begin{subequations}\label{H1}
\begin{numcases}{\mathcal{H}_1:}
\begin{aligned}p[k+1]&=\Big[p[k]-s\nabla f(p[k])+s\xi\frac{\hat{n}}{n}\mathbf{1}\hat{\lambda}[k]\\&\quad+s\xi e[k]\Big]_{\underline{p}}^{\overline{p}},\end{aligned}\label{H1_1}\\
\hat{\lambda}[k+1] = \hat{\lambda}[k]-s\mathbf{1}^\top (p[k]-\ell),\label{H1_2}
\end{numcases}
\end{subequations}
where $p[k] = [p_1[k],p_2[k],\dots,p_n[k]]^\top$, $\nabla f(p[k]) = [f_1^\prime(p_1[k]),f_2^\prime(p_2[k]),\dots,f_n^\prime(p_n[k])]^\top$, $\ell = [\ell_1,\ell_2,\dots,\ell_n]^\top$, and $[\cdot]_{\underline{p}}^{\overline{p}}$ denotes the projection onto the box $[\underline{p},\overline{p}]$. Note that in order to obtain~\eqref{H1_1}, we substituted $e[k]+\frac{\hat{n}}{n}\mathbf{1}\hat{\lambda}[k]$ for $\lambda[k]$ in~\eqref{u_update_un}, and summed~\eqref{lmbd_update} over all $i$ and divided the result by $\hat{n}$ to obtain~\eqref{H1_2}.
We note that $e[k]$ is the vector of deviations of the local estimates of the Lagrange multiplier from their average at time instant $k$; without $e[k]$, the nominal system~$\mathcal{H}_1$ has almost the same form as~\eqref{central_alg}.
Now, we define the disturbance system, $\mathcal{H}_2$, as follows: 
\begin{subequations}\label{H2}
\begin{numcases}{\mathcal{H}_2:}
\begin{aligned}y[k]&=W[k-1]y[k-1]\\&\quad+\hat{n}(p[k]-p[k-1]),\end{aligned}\label{H2_1}\\
\lambda[k+1]= W[k]\lambda[k]-sy[k],\label{H2_2}\\
e[k] = \lambda[k]-\mathbf{1}\frac{\hat{n}}{n}\hat{\lambda}[k],\label{H2_3}
\end{numcases}
\end{subequations}
where $W[k]=[w_{ij}[k]]\in\mathds{R}^{n\times n}$ is a weight matrix at time instant $k$ with $w_{ij}[k]=a_{ij}[k]$, $j\neq i$, and $w_{ii}[k]=1-\sum_j a_{ij}[k]$, $i\in\mathcal{V}$.
Then, as illustrated in Fig.~\ref{fig:control_interpret}, algorithm~\eqref{primal_dual_un} can be viewed as a feedback interconnection of~$\mathcal{H}_1$ and $\mathcal{H}_2$,
where $(p^*,\lambda^*)$ is the equilibrium of~\eqref{H1} when $e[k]\equiv 0$, for all $k\geq0$.
\begin{figure}
    \centering 
	\includegraphics[trim=0cm 0cm 0cm 0cm, clip=true, scale=0.6]{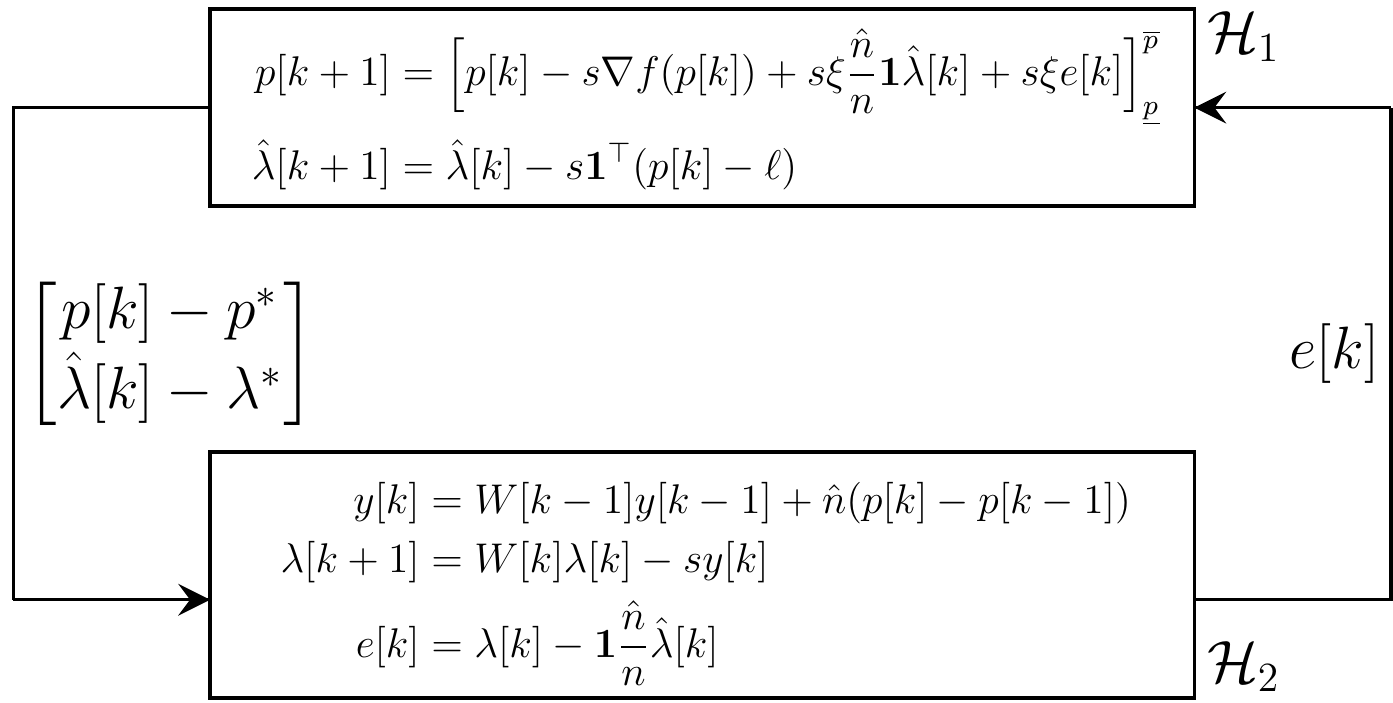} 
    \caption{Algorithm~\eqref{primal_dual_un} as a feedback system.}
    \vspace{-5pt} 
    \label{fig:control_interpret}
\end{figure}
Finding the relationship between the loop gain of the feedback system and the step-size~$s$ allows us to quantify the effect of the feedback system on the convergence error in terms of the step-size. We later show that the loop gain can be decreased by decreasing~$s$. As a matter of fact, if the loop gain is sufficiently small, then, the feedback loop does not amplify the energy of the convergence error, and, on the contrary, the error eventually decays to zero, which follows from the small-gain theorem.

\subsection{Convergence Analysis}\label{subsec:conv_analysis_un}
In order to invoke the small-gain theorem, we must first show that the following relations between the energy of the convergence error and that of the disturbance hold:
\begin{itemize}
\item[{\textbf{R1.}}] $\|z\|_2^{a,K}\leq \alpha_1\|e\|_2^{a,K}+\beta_1$ for some positive~$\alpha_1$ and $\beta_1$,
\item[{\textbf{R2.}}] $\|e\|_2^{a,K}\leq s\alpha_2\|z\|_2^{a,K}+\beta_2$ for some positive~$\alpha_2$ and $\beta_2$,
\end{itemize}
for some $a \in (0,1)$, sufficiently small~$s>0$, and $\forall \xi \in (0,\frac{n}{\hat{n}}]$, where \[z[k] \coloneqq \begin{bmatrix}p[k]-p^*\\ \hat{\lambda}[k]-\lambda^* \end{bmatrix}\]
denotes the convergence error.
The results {\textbf{R1}} and {\textbf{R2}} are equivalent to ensuring that the systems~$\mathcal{H}_1$ and $\mathcal{H}_2$ in Fig.~\ref{fig:control_interpret} are finite-gain stable.
From {\textbf{R1}} and {\textbf{R2}}, it can be determined that the loop gain is $s\alpha_1\alpha_2$.
Noticing that the gain~$s\alpha_1\alpha_2$ becomes strictly smaller than $1$ for sufficiently small~$s$, we later show that $\|z\|_2^{a,K}$ becomes bounded for all $K>0$, and that $z[k]$ converges to zero at a geometric rate $\mathcal{O}(a^k)$. 
In the following, we show that the relations {\textbf{R1}} and {\textbf{R2}} hold and present the convergence results for algorithm~\eqref{primal_dual_un}.

In the next result, we establish that $\mathcal{H}_1$ is finite-gain stable.
\begin{proposition}\label{prop:H1_un}
Let Assumption~\ref{objective_assumption} hold.
Then, under~\eqref{H1}, we have that
\begin{align}
\textnormal{\textbf{R1. }} \|z\|_2^{a,K}\leq \alpha_1\|e\|_2^{a,K}+\beta_1,
\label{z_to_e_un}
\end{align}
for some positive~$\alpha_1$ and $\beta_1$, $a \in (0,1)$, sufficiently small $s>0$, and $\forall \xi \in (0,\frac{n}{\hat{n}}]$.
\end{proposition}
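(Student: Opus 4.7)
The plan is to view the nominal system~\eqref{H1} as a perturbation of the centralized primal-dual iteration~\eqref{central_alg} by the disturbance $e[k]$, first establish a one-step input-to-state contraction $\|z[k+1]\|_{G}\leq\rho\|z[k]\|_{G}+cs\|e[k]\|_{2}$ with some $\rho\in(0,1)$ in a suitable weighted Euclidean norm $\|\cdot\|_{G}$ on $\mathds{R}^{n+1}$, and then propagate this bound through the norm $\|\cdot\|_{2}^{a,K}$. First I would shift coordinates via $\tilde{p}[k]\coloneqq p[k]-p^{*}$ and $\tilde{\lambda}[k]\coloneqq\hat{\lambda}[k]-\lambda^{*}$. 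Because $(p^{*},\lambda^{*})$ is, by definition, a fixed point of~\eqref{H1} with $e\equiv 0$ and the box projection is non-expansive,
\[
\|\tilde{p}[k+1]\|_{2}\leq\bigl\|(I-sH_{k})\tilde{p}[k]+s\xi\tfrac{\hat{n}}{n}\mathbf{1}\tilde{\lambda}[k]+s\xi e[k]\bigr\|_{2},
\]
while~\eqref{H1_2} yields the exact recursion $\tilde{\lambda}[k+1]=\tilde{\lambda}[k]-s\mathbf{1}^{\top}\tilde{p}[k]$. Here $H_{k}\coloneqq\int_{0}^{1}\nabla^{2}f\bigl(p^{*}+t(p[k]-p^{*})\bigr)\,dt$ satisfies $mI\preceq H_{k}\preceq MI$: the lower bound follows from Assumption~\ref{objective_assumption}, and the upper bound $M$ exists because the $f_{i}^{\prime\prime}$ are continuous on the compact box and the projection keeps $p[k]\in[\underline{p},\overline{p}]$ at every step.

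To execute the one-step bound, I would use the Lyapunov-like function $V(z)=\|\tilde{p}\|_{2}^{2}+\tfrac{\xi\hat{n}}{n}\tilde{\lambda}^{2}+2s\delta\,\tilde{\lambda}\mathbf{1}^{\top}\tilde{p}$ with a small positive~$\delta$. The dual weight $\xi\hat{n}/n$ is chosen precisely to cancel the $O(s)$ skew-symmetric coupling between $\tilde{p}$ and $\tilde{\lambda}$ that would otherwise survive the expansion of $V(z[k+1])$, and the hypothesis $\xi\in(0,n/\hat{n}]$ keeps that weight in $(0,1]$, which in turn keeps $V$ positive definite for small~$s$. The Hessian lower bound $H_{k}\succeq mI$ supplies primal dissipation of order $-2sm\|\tilde{p}\|_{2}^{2}$, while the small cross term $2s\delta\,\tilde{\lambda}\mathbf{1}^{\top}\tilde{p}$ is what provides decay of the dual component: evaluating its change along the iteration produces contributions of order $s^{2}\delta$ that, after Young-type bounds, dominate the residual growth of $\tilde{\lambda}^{2}$ arising from the squared primal increment. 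Choosing $\delta$ proportional to~$s$ and then $s$ small enough, all $O(s^{2})$ terms are absorbed into the $O(s)$ dissipation, giving $V(z[k+1])\leq\rho^{2}V(z[k])+Cs^{2}\|e[k]\|_{2}^{2}$ with $\rho=1-\Theta(s)$; taking square roots delivers the desired one-step bound.

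With this bound in hand, I would iterate to obtain $\|z[k]\|_{G}\leq\rho^{k}\|z[0]\|_{G}+cs\sum_{j=0}^{k-1}\rho^{k-1-j}\|e[j]\|_{2}$, pick any $a\in(\rho,1)$, apply $\|e[j]\|_{2}\leq a^{j}\|e\|_{2}^{a,K}$, and sum the geometric series to get $a^{-k}\|z[k]\|_{G}\leq\|z[0]\|_{G}+\tfrac{cs}{a-\rho}\|e\|_{2}^{a,K}$. Taking the maximum in $k\leq K$ and passing between $\|\cdot\|_{G}$ and $\|\cdot\|_{2}$ via norm equivalence then yields~\textbf{R1} with $\alpha_{1}=\mathcal{O}(s)$ and $\beta_{1}=\mathcal{O}(\|z[0]\|_{2})$. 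I expect the contraction step to be the main obstacle: the primal-dual iteration has a skew-symmetric first-order coupling between primal and dual errors, so no plain Euclidean norm contracts under the dynamics, and the technical heart of the argument is the specific design of $V$ (the weight $\xi\hat{n}/n$, which motivates the hypothesis $\xi\leq n/\hat{n}$, together with the $O(s)$ cross term) and the careful $O(s^{2})$ accounting needed to extract $\rho<1$ uniformly in~$k$.
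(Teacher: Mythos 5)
Your overall architecture coincides with the paper's: both proofs reduce \textbf{R1} to a uniform one-step bound of the form $\|z[k+1]\|\leq\gamma\|z[k]\|+s\xi\|e[k]\|$ in a norm equivalent to $\|\cdot\|_2$, and your final passage from that bound to $\|z\|_2^{a,K}\leq\alpha_1\|e\|_2^{a,K}+\beta_1$ (multiply by $a^{-(k+1)}$, maximize over $k\leq K$, rearrange, convert norms) is essentially identical to the paper's. Where you genuinely differ is the mechanism for certifying the contraction. The paper writes the unprojected error map as $A[k]=I-sB[k]$, proves that every eigenvalue of $B[k]$ has strictly positive real part (this is where $\xi\in(0,\frac{n}{\hat{n}}]$ enters, in the same role as your cancellation of the skew coupling), and then constructs an induced matrix norm via Schur triangularization combined with the diagonal scaling $D_t=\diag(t,t^2,\dots,t^{n+1})$ in which $\|A[k]\|\leq\gamma<1$ uniformly in $k$. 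Your explicit quadratic Lyapunov function is a more self-contained alternative to that construction and, if executed, would also expose the dependence of $\gamma$ on $s$ more transparently.

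Two points in your sketch need repair before the contraction step goes through. First, the sign of the cross term is backwards: with $V=\|\tilde{p}\|_2^2+\tfrac{\xi\hat{n}}{n}\tilde{\lambda}^2+2s\delta\,\tilde{\lambda}\mathbf{1}^\top\tilde{p}$ and $\delta>0$, the updates $\tilde{\lambda}^+=\tilde{\lambda}-s\mathbf{1}^\top\tilde{p}$ and $\tilde{p}^+=(I-sH_k)\tilde{p}+s\xi\tfrac{\hat{n}}{n}\mathbf{1}\tilde{\lambda}+s\xi e$ produce a contribution $+2s^2\delta\xi\hat{n}\,\tilde{\lambda}^2$ to $V(z[k+1])-V(z[k])$, i.e.\ the cross term injects energy into the dual error rather than dissipating it; you need $-2s\delta\,\tilde{\lambda}\mathbf{1}^\top\tilde{p}$ (equivalently $\delta<0$). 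Relatedly, a cross term of magnitude $O(s)$ only yields dual dissipation of order $s^2\tilde{\lambda}^2$, so the factor you obtain is $1-\Theta(s^2)$ rather than the claimed $1-\Theta(s)$; this is harmless for \textbf{R1} (any $\gamma<1$ with $a\in(\gamma,1)$ suffices) but should be stated correctly. Second, non-expansiveness of the box projection covers only the $\|\tilde{p}\|_2^2$ piece of $V$: the projection acts componentwise, shrinking each $\tilde{p}_i$ toward zero, so it can make a negative cross term $\tilde{\lambda}\mathbf{1}^\top\tilde{p}$ \emph{less} negative and hence increase $V$. The resulting increase is $O(s|\delta|)$ times quadratic quantities and can be absorbed into the dissipation, but this imposes an additional constraint on $\delta$ and must be handled inside, not after, the decrease computation. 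Neither issue is fatal, but as written the step you yourself identify as the heart of the argument is not yet established.
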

\begin{proof}
Letting
\begin{align*}
G[k] &\coloneqq p[k]-s\nabla f(p[k])+s\xi\frac{\hat{n}}{n}\mathbf{1}\hat{\lambda}[k]+s\xi e[k],\\
H[k] &\coloneqq \hat{\lambda}[k]-s\mathbf{1}^\top (p[k]-\ell),\\
G(p^*,\lambda^*) &\coloneqq p^*-s\nabla f(p^*)+s\xi\frac{\hat{n}}{n}\mathbf{1}\lambda^*,
\end{align*}
we note that the following relations clearly hold:
\begin{subequations}
\begin{align*}
p[k+1]&=\big[G[k]\big]_{\underline{p}}^{\overline{p}}, \mbox{ }p^* = \big[G(p^*,\lambda^*)\big]_{\underline{p}}^{\overline{p}}, \mbox{ } \hat{\lambda}[k+1] = H[k].
\end{align*}
\end{subequations}
Then, by the Projection Theorem \cite[Proposition~2.1.3]{NonlinearProgramming}, we have that
\begin{align}
\|z[k+1]\|&=\left\|\begin{bmatrix}p[k+1]-p^*\\ \hat{\lambda}[k+1]-\lambda^* \end{bmatrix}\right\|\nonumber\\&\leq \left\|\begin{bmatrix}G[k]\\ H[k] \end{bmatrix}-\begin{bmatrix}G(p^*,\lambda^*)\\ \lambda^* \end{bmatrix}\right\|.
\label{projection}
\end{align}
Next, it follows from the mean value theorem \cite[Theorem~5.1]{Rudin} applied to each component in $\nabla f(p[k]) - \nabla f(p^*)$ that
\begin{align}
\nabla f(p[k]) - \nabla f(p^*) = \nabla^2f(\upsilon[k])(p[k]-p^*), \label{mean-value-th}
\end{align}
where $\upsilon[k]\coloneqq[\upsilon_1[k],\upsilon_2[k],\dots,\upsilon_n[k]]^\top$, with $\upsilon_i[k]$ lying on the line segment connecting $p_i[k]$ and $p_i^*$, and $\nabla^2f(\upsilon[k])$ is the Hessian of $f(\mathrm{x})$ at $\mathrm{x}=\upsilon[k]$. Then, by using~\eqref{mean-value-th}, we have that
\begin{align}
\begin{bmatrix}G[k]\\ H[k] \end{bmatrix}-\begin{bmatrix}G(p^*,\lambda^*)\\ \lambda^* \end{bmatrix}&=A[k]\begin{bmatrix}p[k]-p^*\\ \hat{\lambda}[k]-\lambda^* \end{bmatrix}+s\xi\begin{bmatrix}e[k]\\0 \end{bmatrix},
\label{ave_dyn2}
\end{align}
where
\begin{equation*}
A[k] \coloneqq \begin{bmatrix}I - s\nabla^2f(\upsilon[k])&s\xi\frac{\hat{n}}{n}\mathbf{1}\\-s\mathbf{1}^\top&1 \end{bmatrix}.
\end{equation*}
Define
\begin{equation*}
B[k] \coloneqq \begin{bmatrix}\nabla^2f(\upsilon[k])&-\xi\frac{\hat{n}}{n}\mathbf{1}\\\mathbf{1}^\top&0 \end{bmatrix}
\end{equation*}
so that $A[k]=I-sB[k]$. We show that all eigenvalues of~$B[k]$ have a strictly positive real part. Suppose $\mu$ is an eigenvalue of~$B[k]$ and $[v^\mathrm{H},w^\mathrm{H}]^\mathrm{H}$ is an eigenvector corresponding to $\mu$, where $x^\mathrm{H}$ denotes the Hermitian transpose of~$x$. Then, on the one hand, we have that
\begin{align*}
\operatorname{Re}\left([v^\mathrm{H},w^\mathrm{H}]B[k]\begin{bmatrix}v\\ w \end{bmatrix}\right)&=\operatorname{Re}\left(\mu[v^\mathrm{H},w^\mathrm{H}]\begin{bmatrix}v\\ w \end{bmatrix}\right)\nonumber\\&=\operatorname{Re}(\mu)(\|v\|_2^2+\|w\|_2^2).
\end{align*}
On the other hand, we have that
\begin{align}
\operatorname{Re}\left([v^\mathrm{H},w^\mathrm{H}]B[k]\begin{bmatrix}v\\ w \end{bmatrix}\right)&=\operatorname{Re}\big(v^\mathrm{H}\nabla^2f(\upsilon[k])v\nonumber\\&\quad- \xi\frac{\hat{n}}{n} v^\mathrm{H}\mathbf{1}w+w^\mathrm{H}\mathbf{1}^\top v\big).\label{proof_ineq1}
\end{align}
From the fact that
\[B[k]\begin{bmatrix}v\\w\end{bmatrix} = \mu \begin{bmatrix}v\\w\end{bmatrix},\] we have that $\mathbf{1}^\top v = \mu w$, from where it follows that \[w^\mathrm{H}\mathbf{1}^\top v = \mu \|w\|_2^2.\]
Therefore, $v^\mathrm{H}\mathbf{1}w = w^\mathrm{H}\mathbf{1}^\top v =  \mu \|w\|_2^2\geq 0$, from where it follows that
\begin{align}
\operatorname{Re}\big(-\xi\frac{\hat{n}}{n} v^\mathrm{H}\mathbf{1}w+w^\mathrm{H}\mathbf{1}^\top v\big)&=(1-\xi\frac{\hat{n}}{n})\mu \|w\|_2^2\nonumber\\&\geq 0,
\label{proof_ineq2}
\end{align}
because $\xi\in(0,\frac{n}{\hat{n}}]$. By applying~\eqref{proof_ineq2} to~\eqref{proof_ineq1} and using Assumption~\ref{objective_assumption}, we obtain that
\begin{align}
\operatorname{Re}\left([v^\mathrm{H},w^\mathrm{H}]B[k]\begin{bmatrix}v\\ w \end{bmatrix}\right)&\geq\operatorname{Re}\big(v^\mathrm{H}\nabla^2f(\upsilon[k])v\big)\geq m\|v\|_2^2\nonumber\\&>0,\label{proof_ineq3}
\end{align}
$\forall v\neq 0$. 
If $\operatorname{Re}(\mu)=0$, then, it follows from~\eqref{proof_ineq3} that $v=0$, and 
\begin{equation*}
B[k]\begin{bmatrix}0\\ w \end{bmatrix} = 0,
\end{equation*}
from which we conclude that $\mathbf{1}w = 0$, and $w=0$, contradicting the fact that $[v^\mathrm{H},w^\mathrm{H}]^\mathrm{H} \neq 0$. Therefore, all eigenvalues of~$B[k]$ have a strictly positive real part, and, for sufficiently small~$s$, the spectral radius of~$A[k]$ denoted by~$\rho(A[k])$ is strictly less than~$1$.

In the following, we show that there exists an induced matrix norm~$\|\cdot\|$ such that $\|A[k]\|\leq\gamma$, for some $\gamma<1$, $\forall k$.
Let
\begin{align*}
A[k] = A + s\tilde{A}[k],
\end{align*}
where
\begin{align*}
A &= \begin{bmatrix}I - s\nabla^2f(\underline{p})&s\xi\frac{\hat{n}}{n}\mathbf{1}\\-s\mathbf{1}^\top&1 \end{bmatrix},\\
\tilde{A}[k] &= \begin{bmatrix}\nabla^2f(\underline{p})-\nabla^2f(\upsilon[k])&0\\0&0 \end{bmatrix}.
\end{align*}
By the Schur triangularization theorem (see, e.g., \cite[Theorem~2.3.1]{Horn_Johnson}), there is a unitary matrix~$U$, i.e., $U^\mathrm{H}U=UU^\mathrm{H} = I$, and an upper triangular~$\Lambda$ such that $A = U^\mathrm{H}\Lambda U$, where the diagonal entries of~$\Lambda$ are the eigenvalues of~$A$.
In the following, we use the fact that if $\vertiii{A}$ is a matrix norm, then, $\vertiii{S^{-1}AS}$ is also a matrix norm, for any real matrix~$A$ and non-singular~$S$ (see, e.g., \cite[Theorem~5.6.7]{Horn_Johnson}). Letting $D_t \coloneqq \diag(t,t^2,\dots,t^{n+1})$, we choose the following matrix norm:
\begin{align*}
\vertiii{A[k]}&\coloneqq \|(U^\mathrm{H}D_t^{-1})^{-1}A[k]U^\mathrm{H}D_t^{-1}\|_1\nonumber\\&= \|D_tUA[k]U^\mathrm{H}D_t^{-1}\|_1\nonumber\\&=\|D_tUU^\mathrm{H}\Lambda UU^\mathrm{H}D_t^{-1} + sD_tU\tilde{A}[k]U^\mathrm{H}D_t^{-1}\|_1
\nonumber\\&=\|D_t\Lambda D_t^{-1} + sD_tU\tilde{A}[k]U^\mathrm{H}D_t^{-1}\|_1,
\end{align*}
where $\|[a_{ij}]\|_1\coloneqq\max\limits_{j}\sum_{i=1}^n |a_{ij}|$, and $D_t\Lambda D_t^{-1}$ is given by
\begin{align*}
D_t\Lambda D_t^{-1} = 
\begin{bmatrix}
\Lambda_{11} & t^{-1}\Lambda_{12} & t^{-2}\Lambda_{13} & \ldots & t^{-n}\Lambda_{1,n+1}\\
0 & \Lambda_{22} & t^{-1}\Lambda_{23} & \ldots & t^{1-n}\Lambda_{2,n+1}\\
&&&\ddots&\\
0 & 0 & 0 & \ldots & \Lambda_{n+1,n+1}
\end{bmatrix}.
\end{align*}
\newcounter{tempequationcounter}
Letting $X[k] \coloneqq D_t\Lambda D_t^{-1} + sD_tU\tilde{A}[k]U^\mathrm{H}D_t^{-1}$, and $Z[k] \coloneqq D_tU\tilde{A}[k]U^\mathrm{H}D_t^{-1}$, we compute $X[k]$ in~\eqref{proof-mat-norm3} (see the top of the next page).
\begin{figure*}[!t]
\setcounter{tempequationcounter}{\value{equation}}
\begin{align}\label{proof-mat-norm3}
X[k] = 
\begin{bmatrix}
\Lambda_{11}+sZ_{11}[k] & t^{-1}(\Lambda_{12}+sZ_{12}[k]) & \ldots & t^{-n}(\Lambda_{1,n+1}+sZ_{1,n+1}[k])\\
stZ_{21}[k] & \Lambda_{22}+sZ_{22}[k] & \ldots & t^{1-n}(\Lambda_{2,n+1}+sZ_{2,n+1}[k])\\
\vdots&\vdots&\ddots&\vdots\\
st^{n}Z_{n+1,1}[k] & st^{n-1}Z_{n+1,2}[k] & \ldots & \Lambda_{n+1,n+1}+sZ_{n+1,n+1}[k]
\end{bmatrix}.
\end{align}%
\end{figure*}
From~\eqref{proof-mat-norm3}, we have that
\begin{align*}
\sum_i X_{ij}[k] = \Lambda_{jj}+\sum_{i< j} \Lambda_{ij}t^{i-j} +\sum_{i=1}^{n+1} sZ_{ij}[k]t^{i-j},
\end{align*}
for $j=1,2,\dots,n+1$. Since $\Lambda_{jj}$ is an eigenvalue of~$A$, $\Lambda_{jj}<1$. Then, it is not difficult to see that for sufficiently small~$s$ and large~$t$, we have that $\left|\sum_i X_{ij}[k]\right|\leq\gamma$, $\forall j$, for some $\gamma<1$ and all $k\geq0$. Therefore, $\vertiii{A[k]}\leq \gamma$, $\forall k$.
By \cite[Theorem~5.6.26]{Horn_Johnson}, there is an induced matrix norm~$\|\cdot\|$ such that $\|A[k]\| \leq \vertiii{A[k]}$. Hence, $\|A[k]\|\leq \gamma$, $\forall k$.

Taking $\|\cdot\|$ on both sides of~\eqref{ave_dyn2} and applying the triangle inequality yields
\begin{align}
\left\|\begin{bmatrix}G[k]\\ H[k] \end{bmatrix}-\begin{bmatrix}G(p^*,\lambda^*)\\ \lambda^* \end{bmatrix}\right\|&\leq\|A[k]\|\left\|\begin{bmatrix}p[k]-p^*\\ \hat{\lambda}[k]-\lambda^* \end{bmatrix}\right\|\nonumber\\&\quad+s\xi\left\|\begin{bmatrix}e[k]\\0 \end{bmatrix}\right\|.\label{proof_ineq4}
\end{align}
By applying~\eqref{proof_ineq4} to~\eqref{projection} and using the fact that $\|A[k]\|\leq \gamma$, $\forall k$, we obtain that
\begin{align}
\|z[k+1]\|&=\left\|\begin{bmatrix}p[k+1]-p^*\\ \hat{\lambda}[k+1]-\lambda^* \end{bmatrix}\right\|\nonumber\\
&\leq\|A[k]\|\left\|\begin{bmatrix}p[k]-p^*\\ \hat{\lambda}[k]-\lambda^* \end{bmatrix}\right\|+s\xi\|e[k]\|\nonumber\\
&=\|A[k]\|\|z[k]\|+s\xi\|e[k]\|\nonumber\\
&\leq \gamma\|z[k]\|+s\xi\|e[k]\|,
\label{ave_dyn_ineq}
\end{align}
where we recall that \[z[k] = \begin{bmatrix}p[k]-p^*\\ \hat{\lambda}[k]-\lambda^* \end{bmatrix}.\]
Now, by multiplying both sides of~\eqref{ave_dyn_ineq} by~$a^{-(k+1)}$, we obtain
\begin{align}
a^{-(k+1)}\|z[k+1]\|\leq \frac{\gamma}{a} a^{-k}\|z[k]\|+\frac{s\xi}{a}a^{-k}\|e[k]\|.
\label{ave_dyn_ineq3}
\end{align}
Then, by taking $\max\limits_{0\leq k\leq K}(\cdot)$ on both sides of~\eqref{ave_dyn_ineq3}, we obtain
\begin{align}
&\max\limits_{0\leq k\leq K}a^{-(k+1)}\|z[k+1]\|\leq \frac{\gamma}{a} \max\limits_{0\leq k\leq K} a^{-k}\|z[k]\|\nonumber\\&\quad+\frac{s\xi}{a}\max\limits_{0\leq k\leq K} a^{-k}\|e[k]\|\nonumber\\&\leq \frac{\gamma}{a} \max\limits_{0\leq k\leq K+1} a^{-k}\|z[k]\|+\frac{s\xi}{a}\max\limits_{0\leq k\leq K+1} a^{-k}\|e[k]\|,\label{proof_ineq5}
\end{align}
Since \[\max\limits_{0\leq k\leq K}a^{-(k+1)}\|z[k+1]\| = \max\limits_{0\leq k\leq K+1}a^{-k}\|z[k]\|-\|z[0]\|,\] 
the relation~\eqref{proof_ineq5} can be written as
\begin{align}
\|z\|_{a,K+1}\leq \frac{\gamma}{a}\|z\|_{a,K} + \frac{s\xi}{a}\|e\|_{a,K}+\|z[0]\|,\label{z_to_e_un2}
\end{align}
where $\|x\|_{a,K}\coloneqq \max\limits_{0\leq k\leq K}a^{-k}\|x[k]\|$ for a sequence~$\{x[k]\}_{k=0}^{\infty}$. 
Since $\|z\|_{a,K+1}\geq\|z\|_{a,K}$, it follows from~\eqref{z_to_e_un2} that
\begin{align}
\|z\|_{a,K}\leq \frac{\gamma}{a}\|z\|_{a,K} + \frac{s\xi}{a}\|e\|_{a,K}+\|z[0]\|.\label{z_to_e_un3}
\end{align}
Then, after rearranging~\eqref{z_to_e_un3}, we obtain
\begin{align*}
\|z\|_{a,K}\leq \frac{s\xi}{a-\gamma}\|e\|_{a,K}+\frac{a}{a-\gamma}\|z[0]\|.
\end{align*}
Because $\|\cdot\|_2\leq \alpha\|\cdot\|$ and $\|\cdot\|\leq \beta\|\cdot\|_2$ for some~$\alpha$ and $\beta$, we have that $\|z\|_{a,K}\geq {\|z\|_2^{a,K}}/{\alpha}$, $\|e\|_{a,K} \leq \beta\|e\|_2^{a,K}$. Hence,
\[\frac{\|z\|_2^{a,K}}{\alpha}\leq \frac{s\xi\beta}{a-\gamma}\|e\|_2^{a,K}+\frac{a}{a-\gamma}\|z[0]\|,\]
which can be rewritten as 
\[
\|z\|_2^{a,K}\leq \alpha_1\|e\|_2^{a,K}+\beta_1,
\]
where \[\alpha_1 =  \frac{s\xi\alpha\beta}{a-\gamma},\] and \[\beta_1 = \frac{a\alpha}{a-\gamma}\|z[0]\|,\]
yielding~\eqref{z_to_e_un}.
\end{proof}
{We omit the proof of the next result, where we show that system~$\mathcal{H}_2$ is finite-gain stable, since it is analogous to that of a similar result proposed for directed communication graphs in Section~\ref{sec:DOD_dd}.}
\begin{proposition}\label{prop:H2_un}
Let Assumptions~\ref{objective_assumption} and \ref{assume_comm_model_un1} hold.
Then, under~\eqref{H2}, we have that
\begin{align*}
\textnormal{\textbf{R2. }}\|e\|_2^{a,K}\leq s\alpha_2\|z\|_2^{a,K}+\beta_2,
\end{align*}
for some positive~$\alpha_2$ and $\beta_2$, $a \in (0,1)$, and sufficiently small~$s>0$.
\end{proposition}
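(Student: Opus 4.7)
The plan is to exploit the fact that $\mathcal{H}_2$ is, modulo its rank-one average component, a time-varying consensus dynamics driven by the primal-increment sequence $\Delta p[k]\coloneqq p[k]-p[k-1]$. The starting observation is that, since $\hat\lambda[k]=\frac{1}{\hat n}\mathbf{1}^\top\lambda[k]$, one has $\frac{\hat n}{n}\mathbf{1}\hat\lambda[k]=\frac{1}{n}\mathbf{1}\mathbf{1}^\top\lambda[k]$, so $e[k]=\Pi\lambda[k]$, where $\Pi\coloneqq I-\frac{1}{n}\mathbf{1}\mathbf{1}^\top$ is the orthogonal projector onto $\mathbf{1}^\perp$. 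Because each $W[k]$ is symmetric and doubly stochastic, $\Pi W[k]=W[k]\Pi$; applying $\Pi$ to~\eqref{H2_1}--\eqref{H2_2} then yields the reduced recursions
\begin{align*}
\tilde y[k] &= W[k-1]\tilde y[k-1]+\hat n\,\Pi\,\Delta p[k],\\
e[k+1] &= W[k]\,e[k]-s\,\tilde y[k],
\end{align*}
where $\tilde y[k]\coloneqq\Pi y[k]$. Note the crucial factor $s$ appearing in front of $\tilde y[k]$ in the second recursion: this is precisely the source of the desired $s$ prefactor in \textbf{R2}.

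Next, I would invoke the standard mixing result implied by Assumption~\ref{assume_comm_model_un1} together with the uniform lower bound $\eta$ on positive entries of $W[k]$: there exist constants $C_0>0$ and $\rho\in(0,1)$, depending only on $n$, $\eta$, and $B$, such that
\[\bigl\|W[k]W[k-1]\cdots W[j]\,\Pi\bigr\|_2 \le C_0\,\rho^{k-j+1},\qquad \forall\,k\ge j\ge 0.\]
Fix $a\in(\rho,1)$. Unrolling the two reduced recursions, bounding Euclidean norms termwise by this contraction, multiplying by $a^{-k}$, and using the geometric-series identity $\sum_{j=0}^{k}\rho^{k-j}a^{j-k}\le (1-\rho/a)^{-1}$, the unrolled sums convert into the weighted-norm bounds
\begin{align*}
\|\tilde y\|_2^{a,K}&\le \hat n\,C_1\,\|\Delta p\|_2^{a,K}+\beta_y,\\
\|e\|_2^{a,K}&\le s\,C_2\,\|\tilde y\|_2^{a,K}+\beta_e,
\end{align*}
where $C_1,C_2$ absorb $C_0/(1-\rho/a)$ and $\beta_y,\beta_e$ collect the transient contributions from $\tilde y[0]$ and $e[0]$.

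It remains to control $\|\Delta p\|_2^{a,K}$ by $\|z\|_2^{a,K}$. Using the triangle inequality and the definition of $z$,
\[\|\Delta p[k]\|_2\le \|p[k]-p^*\|_2+\|p[k-1]-p^*\|_2\le \|z[k]\|_2+\|z[k-1]\|_2,\]
so multiplying by $a^{-k}$ and taking the maximum over $k\le K$ gives $\|\Delta p\|_2^{a,K}\le (1+a)\|z\|_2^{a,K}+\|p[0]-p^*\|_2$. Chaining with the previous bounds yields
\[\|e\|_2^{a,K}\le s\,C_2\bigl(\hat n\,C_1(1+a)\|z\|_2^{a,K}+\hat n\,C_1\|p[0]-p^*\|_2+\beta_y\bigr)+\beta_e,\]
which is exactly \textbf{R2} with $\alpha_2=\hat n\,C_1C_2(1+a)$ and $\beta_2$ collecting the constant terms on the right.

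The main obstacle will be the geometric-mixing claim with an explicit rate $\rho\in(0,1)$ for backward products of time-varying symmetric doubly-stochastic matrices under the joint $B$-connectivity assumption; this is the analytic heart of the argument, although by now it is a classical consequence of Wolfowitz-type results combined with the uniform positive-entry lower bound $\eta$. The rest is routine weighted-norm bookkeeping, with the one subtlety being the choice $a>\rho$ so that the geometric series $\sum\rho^{k-j}a^{j-k}$ converges and produces the finite gains $C_1,C_2$.
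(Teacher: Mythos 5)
Your proposal is correct and its architecture matches the paper's: the paper omits this proof and defers to the directed-graph analogue (Proposition~\ref{prop:H2_dd}), whose skeleton is exactly your chain --- bound the consensus disagreement of $y$ by the increments $\hat n(p[k]-p[k-1])$, bound the disagreement of $\lambda$ by $s$ times the disagreement of $y$ (this is where the crucial $s$ prefactor enters, via the $-sy[k]$ term in~\eqref{H2_2}), and close the loop with $\|p[k]-p[k-1]\|_2\le\|z[k]\|_2+\|z[k-1]\|_2$. Where you differ is in how the two cascaded disagreement bounds are justified: the paper imports them wholesale from the push-sum machinery of \cite[Lemmas~15--16]{Nedic17}, which drags in the ratio variables $v[k]$, $h[k]=(V[k])^{-1}y[k]$ and the factor-of-two slack in $\|e[k]\|_2\le2\|\tilde x[k]\|_2$, none of which is needed when $W[k]$ is symmetric and doubly stochastic. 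Your observation that $\Pi W[k]=W[k]\Pi$ lets you project the recursions directly and reduces everything to the classical geometric mixing of backward products of doubly stochastic matrices under Assumption~\ref{assume_comm_model_un1} and the uniform entry bound $\eta$; that mixing estimate is indeed the only nontrivial ingredient, and it is standard (Wolfowitz-type). The payoff of your route is a self-contained and tighter proof for the undirected case; the payoff of the paper's route is that a single argument covers both the directed and undirected settings. Two cosmetic points: your constant $(1+a)$ in the $\|\Delta p\|_2^{a,K}$ bound should be $(1+a^{-1})$ after weighting by $a^{-k}$ (the paper's corresponding constant is $2$ with an additive $\|\delta[0]\|_2$ term), and you should note that the final $a$ must be chosen compatibly with the $a$ of Proposition~\ref{prop:H1_un} when the two gains are combined; neither affects the validity of \textbf{R2}.
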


{Now, we show the convergence of algorithm~\eqref{primal_dual_un} by applying the small-gain theorem to the results in Propositions~\ref{prop:H1_un}--\ref{prop:H2_un}.
\begin{proposition}\label{prop:small_gain_un}
Let Assumptions~\ref{objective_assumption} and \ref{assume_comm_model_un1} hold.
Then, under algorithm~\eqref{primal_dual_un},
\begin{align}
\|z\|_2^{a,K}\leq \beta,
\label{z_relation_un}
\end{align}
for some $a \in (0,1)$, $\beta>0$, sufficiently small~$s>0$, and $\forall \xi \in (0,\frac{n}{\hat{n}}]$. In particular, $(p_i[k],\lambda_i[k])$ converges to $(p_i^*,\lambda^*)$, $\forall i$, at a geometric rate~$\mathcal{O}(a^k)$.
\end{proposition}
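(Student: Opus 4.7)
The plan is to directly apply the small-gain theorem in the form laid out in Section~\ref{sec:preliminaries} to the two finite-gain stability bounds provided by Propositions~\ref{prop:H1_un} and \ref{prop:H2_un}. Identifying $e_1[k] \leftrightarrow e[k]$ and $e_2[k] \leftrightarrow z[k]$, these bounds read $\|z\|_2^{a,K} \le \alpha_1 \|e\|_2^{a,K} + \beta_1$ and $\|e\|_2^{a,K} \le s\alpha_2 \|z\|_2^{a,K} + \beta_2$, so the loop gain is $\gamma_1\gamma_2 = s\alpha_1\alpha_2$. Crucially, inspection of the constant $\alpha_1 = s\xi\alpha\beta/(a-\gamma)$ obtained in the proof of Proposition~\ref{prop:H1_un} shows that $\alpha_1$ itself already scales linearly with $s$, so the loop gain is in fact $\mathcal{O}(s^2)$ and can be driven below $1$ by taking $s$ sufficiently small.

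Then, I would follow the chain~\eqref{small-gain-th-1} verbatim: substituting the R2 bound into the R1 bound gives $\|z\|_2^{a,K}(1 - s\alpha_1\alpha_2) \le \alpha_1\beta_2 + \beta_1$. Choosing $s$ small enough that (i) both Propositions~\ref{prop:H1_un} and \ref{prop:H2_un} apply and (ii) $s\alpha_1\alpha_2 < 1$, rearranging yields the uniform bound $\|z\|_2^{a,K} \le (\alpha_1\beta_2 + \beta_1)/(1 - s\alpha_1\alpha_2) =: \beta$, which is exactly~\eqref{z_relation_un}. Because this bound is independent of $K$, the definition of the weighted norm forces $\|z[k]\|_2 \le \beta a^k$ for every $k \ge 0$, which immediately delivers geometric convergence of $p_i[k]$ to $p_i^*$ and of $\hat{\lambda}[k]$ to $\lambda^*$.

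To conclude the claim for each local multiplier $\lambda_i[k]$, I would exploit the identity $\lambda[k] = e[k] + (\hat{n}/n)\mathbf{1}\hat{\lambda}[k]$, which follows directly from the definitions of $e[k]$ and $\hat{\lambda}[k]$. The symmetric small-gain bound $\|e\|_2^{a,K} \le (s\alpha_2\beta_1 + \beta_2)/(1 - s\alpha_1\alpha_2)$ yields $\|e[k]\|_2 = \mathcal{O}(a^k)$, so combining this decay of the disagreement with the already-established geometric rate of $\hat{\lambda}[k]$ gives the claimed geometric convergence of $\lambda_i[k]$ for every $i$.

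The main obstacle I anticipate is the bookkeeping of the several ``sufficiently small $s$'' clauses: Propositions~\ref{prop:H1_un} and \ref{prop:H2_un} each impose their own upper thresholds on $s$ for a chosen rate $a \in (0,1)$, and on top of these the loop-gain inequality $s\alpha_1\alpha_2 < 1$ is a further constraint. Since there are only finitely many such thresholds and each is strictly positive once $a$ is fixed (ideally close enough to $1$ so that $a - \gamma$ in the expression for $\alpha_1$ stays bounded away from zero), taking $s$ below their minimum suffices; the subtle point is to make sure that the constants $\alpha_1,\alpha_2,\beta_1,\beta_2$ entering the final small-gain step are consistently those associated with the same fixed pair $(s,a)$ used in both propositions.
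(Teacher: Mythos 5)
Your proposal is correct and follows essentially the same route as the paper: substitute the bound of Proposition~\ref{prop:H2_un} into that of Proposition~\ref{prop:H1_un}, take $s$ small enough that the loop gain $s\alpha_1\alpha_2<1$, and rearrange to obtain $\|z\|_2^{a,K}\leq(\alpha_1\beta_2+\beta_1)/(1-s\alpha_1\alpha_2)$. Your added argument for the local multipliers via $\lambda[k]=e[k]+(\hat{n}/n)\mathbf{1}\hat{\lambda}[k]$ and the symmetric small-gain bound on $\|e\|_2^{a,K}$ is a detail the paper leaves implicit, but it is consistent with the paper's reasoning.
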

\begin{proof}
By using Propositions~\ref{prop:H1_un} and \ref{prop:H2_un}, it follows that
\begin{align*}
\|z\|_2^{a,K}\leq \alpha_1\|e\|_2^{a,K}+\beta_1 \leq \alpha_1(s\alpha_2\|z\|_2^{a,K}+\beta_2)+\beta_1,
\end{align*}
which, after rearranging, results in
\begin{align*}
\|z\|_2^{a,K}\leq \frac{\alpha_1\beta_2+\beta_1}{1-s\alpha_1\alpha_2}\eqqcolon \beta,
\end{align*}
yielding \eqref{z_relation_un}.
Hence, for sufficiently small~$s$, we have that $s\alpha_1\alpha_2<1$, which ensures that $\beta$ is finite.
\end{proof}
}
Finally, we show that $p^*$ is the solution of \eqref{ED}. 
\begin{lemma}\label{lem:p_star}
Consider~$(p^*,\lambda^*)$, namely, the equilibrium of the nominal system~$\mathcal{H}_1$ with $e[k]\equiv 0$, $\forall k$.
Then, $p^*$ is the solution of \eqref{ED}.
\end{lemma}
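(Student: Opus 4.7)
The plan is to read off the KKT conditions for problem~\eqref{ED} directly from the two equilibrium equations of $\mathcal{H}_1$ (with $e[k]\equiv 0$) and then appeal to convexity.

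First, I would write down the fixed-point equations. Setting $p[k+1]=p[k]=p^*$ and $\hat{\lambda}[k+1]=\hat{\lambda}[k]=\lambda^*$ with $e[k]\equiv 0$ in \eqref{H1} yields
\begin{align*}
p^* &= \bigl[p^* - s\nabla f(p^*) + s\xi\tfrac{\hat{n}}{n}\mathbf{1}\lambda^*\bigr]_{\underline{p}}^{\overline{p}},\\
\lambda^* &= \lambda^* - s\mathbf{1}^\top(p^*-\ell).
\end{align*}
The second equation immediately gives $\mathbf{1}^\top p^* = \mathbf{1}^\top \ell$, so $p^*$ satisfies the equality constraint~\eqref{flow_constr}. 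The first equation, being a projection onto the box, forces $\underline{p}\leq p^*\leq \overline{p}$, so the box constraint~\eqref{box_constr1} also holds; hence $p^*$ is feasible for~\eqref{ED}.

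Next, I would extract the stationarity condition. Using the standard characterization of a projection fixed point (equivalently, the variational inequality), the first equation is equivalent to
\[
\bigl(\nabla f(p^*) - \xi\tfrac{\hat{n}}{n}\lambda^*\mathbf{1}\bigr)^\top (p - p^*) \geq 0,\qquad \forall p \in [\underline{p},\overline{p}],
\]
which can be split componentwise: for each $i$, $f_i'(p_i^*) = \xi\tfrac{\hat{n}}{n}\lambda^*$ whenever $\underline{p}_i<p_i^*<\overline{p}_i$, $f_i'(p_i^*)\geq \xi\tfrac{\hat{n}}{n}\lambda^*$ if $p_i^*=\underline{p}_i$, and $f_i'(p_i^*)\leq \xi\tfrac{\hat{n}}{n}\lambda^*$ if $p_i^*=\overline{p}_i$. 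Defining the multiplier $\mu^*\coloneqq \xi\tfrac{\hat{n}}{n}\lambda^*$ for the equality constraint and setting $\underline{\nu}_i^*\coloneqq \max\{0,\mu^*-f_i'(p_i^*)\}$ and $\overline{\nu}_i^*\coloneqq \max\{0,f_i'(p_i^*)-\mu^*\}$, one checks directly that $\nabla f(p^*) - \mu^*\mathbf{1} - \underline{\nu}^* + \overline{\nu}^* = 0$ together with complementary slackness $\underline{\nu}_i^*(p_i^*-\underline{p}_i)=0$ and $\overline{\nu}_i^*(\overline{p}_i-p_i^*)=0$ and non-negativity $\underline{\nu}^*,\overline{\nu}^*\geq 0$.

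Finally, I would conclude. The listed conditions are exactly the KKT conditions for~\eqref{ED}. Since the objective $\sum_i f_i(p_i)$ is convex (strongly convex by Assumption~\ref{objective_assumption}) and the constraints are affine, the KKT conditions are sufficient for global optimality, so $p^*$ solves~\eqref{ED}; strong convexity further guarantees this solution is unique. There is no real obstacle in this argument; the only slightly delicate point is translating the projection fixed-point identity into the correct KKT stationarity and complementary slackness conditions, but this follows from the projection theorem in~\cite[Proposition~2.1.3]{NonlinearProgramming} already invoked in the proof of Proposition~\ref{prop:H1_un}.
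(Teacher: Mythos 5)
Your proof takes essentially the same route as the paper's: read the KKT conditions for~\eqref{ED} off the equilibrium equations of $\mathcal{H}_1$ and invoke their sufficiency for the convex problem (the paper cites \cite[Proposition~3.3.1]{NonlinearProgramming} for that last step and simply asserts the KKT relations, whereas you derive them explicitly from the projection fixed point via the variational inequality, which is a welcome level of detail). One slip to fix: with your stationarity equation $\nabla f(p^*)-\mu^*\mathbf{1}-\underline{\nu}^*+\overline{\nu}^*=0$, your two inequality multipliers are swapped --- you need $\underline{\nu}_i^*=\max\{0,\,f_i'(p_i^*)-\mu^*\}$ and $\overline{\nu}_i^*=\max\{0,\,\mu^*-f_i'(p_i^*)\}$, since as written the left-hand side evaluates to $2\bigl(f_i'(p_i^*)-\mu^*\bigr)$ and complementary slackness fails at active bounds (e.g., $\underline{\nu}_i^*>0$ would occur precisely when $p_i^*=\overline{p}_i$). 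This is a sign/labeling error only; the argument itself is sound and matches the paper's.
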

\begin{proof}
At the equilibrium, we have that
\begin{align*}
p^* &= \big[p^*-s\nabla f(p^*)+s\xi\frac{\hat{n}}{n}\mathbf{1}\lambda^*\big]_{\underline{p}}^{\overline{p}},\\
\lambda^* &= \lambda^* -  \mathbf{1}^\top(p^*-\ell).
\end{align*}
Then, the following relations hold:
\begin{subequations}\label{KKT}
\begin{align}
0 &= \nabla f(p^*)-\xi\frac{\hat{n}}{n}\mathbf{1}\lambda^* + \mu^* - \nu^*,\\
0 &= \mathbf{1}^\top(p^*-\ell),\\
0 &=\mu_i^*(p_i^*-\overline{p}_i),\\
0 &=\nu_i^*(\underline{p}_i-p_i^*),i=1,\dots,n,
\end{align}
\end{subequations}
where $\mu^*=[\mu_1^*,\dots,\mu_n^*]^\top$, with $\mu_i^*\geq 0$, $i=1,\dots,n$, and $\nu^*=[\nu_1^*,\dots,\nu_n^*]^\top$, with $\nu_i^*\geq 0$, $i=1,\dots,n$. Noticing that \eqref{KKT} represents the Karush-Kuhn-Tucker (KKT) conditions for \eqref{ED}, it follows from \cite[Proposition~3.3.1]{NonlinearProgramming} that $p^*$ is the solution of \eqref{ED}.
\end{proof}

\subsection{Numerical Simulations}
\label{subsec:simulations}
Next, we present the numerical results that illustrate the performance of the proposed distributed primal-dual algorithm~\eqref{primal_dual_un} using the IEEE~$39$--bus test system \cite{Matpower}. 
We randomly pick the load demands and generation capacity constraints of the DERs. For each~$i$, we choose~$f_i(p_i) = a_ip_i^2$, where $a_i>0$ is randomly selected. With regard to the communication model, every pair of nodes are connected by a bidirectional communication link if there is an electrical line between them. Communication links are assumed to fail with probability~$0.2$ independently (and independently between different time steps). The weights~$a_{ij}[k]$, $\{i,j\} \in \mathcal{E}^{(0)}$, are picked using the Metropolis rule \cite{XiBo05}, namely, 
\begin{align}
a_{ij}[k] = \left\{\begin{matrix*}[l]\frac{1}{\max(d_i,d_j)}&\{i,j\} \in \mathcal{E}^{(c)}[k],\\0 & \mbox{otherwise.}\end{matrix*}\right.
\end{align}

For convenience, algorithm~\eqref{primal_dual_un} is referred to as~$\mathbf{PD}_1$. We compare its performance with that of algorithm~\eqref{primal_dual_un_ac}, referred to as~$\mathbf{PD}_2$. In the simulations, $\mathbf{PD}_1$ uses a constant stepsize~$s = 0.01$ and $\xi = 0.05$. In contrast, $\mathbf{PD}_2$ needs to use a diminishing stepsize of the form~$s[k] = a/(k+b)$, where $a>0$ and $b>0$, in order to guarantee convergence. Both algorithms are initialized with $p[0]=0$. In Fig.~\ref{fig:num_results_un}, we provide the convergence error, namely, the Euclidean distance between the exact and iterative solutions, $\|p[k]-p^*\|_2$, for both algorithms. It can be seen that $\mathbf{PD}_1$ significantly outperforms $\mathbf{PD}_2$, and has geometric convergence speed. 
\begin{figure}
    \centering
	\includegraphics[trim=0.2cm 0cm 0cm 0cm, clip=true, scale=0.48]{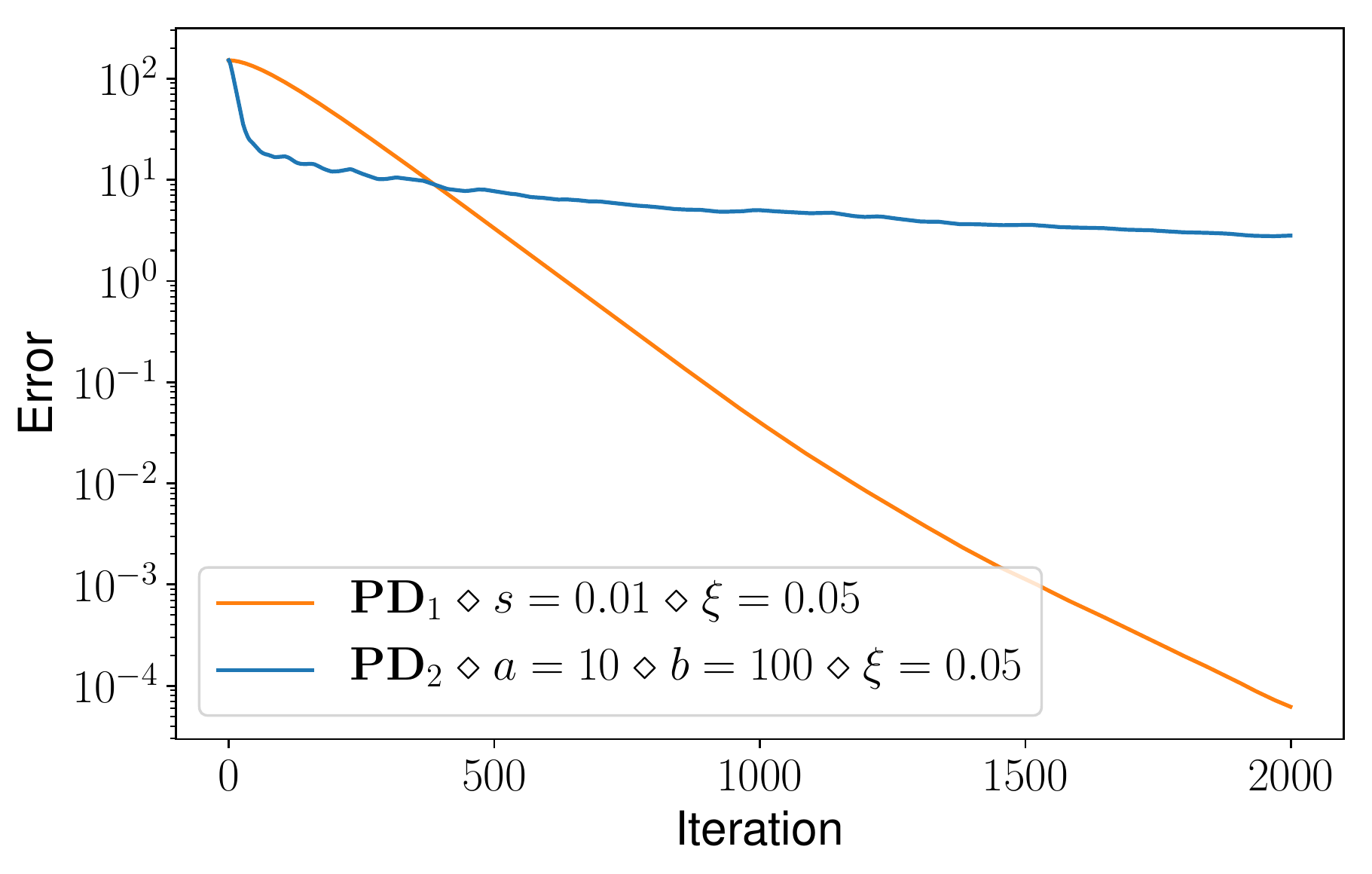} 
    \caption{Trajectory of $\|p[k]-p^*\|_2$ for algorithms $\mathbf{PD}_1$, $\mathbf{PD}_2$.}
    \vspace{-5pt} 
    \label{fig:num_results_un}
\end{figure}

\section{DER Coordination Over Time-Varying Directed Graphs}\label{sec:DOD_dd}
In this section, we present a distributed algorithm for solving the DER coordination problem~\eqref{ED} over time-varying directed communication graphs. 

\subsection{Communication Network Model}\label{subsec:cyber_layer}
Here, we introduce the model describing the communication network that enables the unidirectional exchange of information between DERs. 
Let $\mathcal{G}^{(0)} = (\mathcal{V},\mathcal{E}^{(0)})$ denote a directed graph, where each element in the node set~$\mathcal{V} \coloneqq \{1,2,\dots,n\}$ corresponds to a DER, and $(i,j) \in \mathcal{E}^{(0)}$ if there is a communication link that allows DER at node~$i$ to send information to DER at node~$j$ (but not vice versa).
During any time interval~$(t_k,t_{k+1})$, successful data transmissions among the DERs can be captured by the directed graph~$\mathcal{G}^{(c)}[k]=(\mathcal{V},\mathcal{E}^{(c)}[k])$, where $\mathcal{E}^{(c)}[k]\subseteq \mathcal{E}^{(0)}$ is the set of active communication links, with $(i,j) \in \mathcal{E}^{(c)}[k]$ if node~$j$ receives information from node~$i$ during time interval~$(t_k,t_{k+1})$, but not necessarily vice versa. 
Let $\mathcal{N}_i^+\coloneqq\{j\in\mathcal{V}:(i,j)\in\mathcal{E}^{(0)}\}$ denote the set of nominal out-neighbors, and $d_i^+ \coloneqq |\mathcal{N}_i^+|+1$ denote the nominal out-degree of node~$i$. Let $\mathcal{N}_i^+[k]$ and $\mathcal{N}_i^-[k]$ denote the sets of out-neighbors and in-neighbors of node~$i$, respectively, during time interval~$(t_k,t_{k+1})$, i.e., $\mathcal{N}_i^+[k]\coloneqq\{j\in\mathcal{V}:(i,j) \in \mathcal{E}^{(c)}[k]\}$ and $\mathcal{N}_i^-[k]\coloneqq\{\ell\in\mathcal{V}:(\ell,i) \in \mathcal{E}^{(c)}[k]\}$. We define node~$i$'s instantaneous (communication) out-degree (including itself) to be $D_i^+[k] \coloneqq |\mathcal{N}_i^+[k]|+1$. We make the following two standard assumptions (see, e.g., \cite{Nedic15,Nedic17}). 
\begin{assumption}
\label{assume_comm_model1}
There exists some positive integer~$B$ such that the graph with node set~$\mathcal{V}$ and edge set~$\bigcup_{l= kB}^{(k+1)B-1}\mathcal{E}^{(c)}[l]$ is strongly connected for $k=0,1,\dots$. 
\end{assumption}
\begin{assumption}
\label{assume_comm_model2}
The value of~$D_i^+[k]$ is known to node~$i$, $i=1,2,\dots,n$, for all $k\geq 0$.
\end{assumption}
In Assumption~\ref{assume_comm_model2}, we assume that each node knows its instantaneous out-degree, which, in practice, may not be readily available. However, this assumption will be relaxed later in Section~\ref{sec:robust-DOD-dd}.

\subsection{Ratio Consensus Algorithm}
We begin with a brief overview of the ratio consensus algorithm (see, e.g., \cite{DoHa12}) utilized to estimate the average power imbalance (or total power imbalance if $n$ is known) and to update the Lagrange multipliers, $\lambda_i[k]$, in the distributed algorithms proposed later. 

Consider a group of nodes indexed by the set~$\mathcal{V}$, each with some real initial value, i.e., $v_i$ at node~$i$. Each node aims to obtain the average of the initial values via exchange of information over the graph~$\mathcal{G}^{(c)}[k]$. To this end, we let node~$i$ maintain two variables, $\mu_i[k]$ and $\nu_i[k]$ such that $\mu_i[0] = v_i$ and $\nu_i[0] = 1$. We first consider the following updates performed by node~$i$:
\begin{subequations}\label{RatioCons}
\begin{align}
\mu_{i}[k+1] &= \sum_{j\in\mathcal{N}_i^-[k]\cup\{i\}}\frac{\mu_j[k]}{D_j^+[k]},\label{mu_ratio_cons}\\
\nu_{i}[k+1] &= \sum_{j\in\mathcal{N}_i^-[k]\cup\{i\}}\frac{\nu_j[k]}{D_j^+[k]},\label{nu_ratio_cons}\\
r_i[k+1] &= \frac{\mu_i[k+1]}{\nu_i[k+1]}.
\end{align}
\end{subequations}
We write~\eqref{mu_ratio_cons}--\eqref{nu_ratio_cons} in a matrix-vector form as follows:
\begin{subequations}\label{VectRatioCons}
\begin{align}
\mu[k+1] &= P[k]\mu[k],\\
\nu[k+1] &= P[k]\nu[k],
\end{align}
\end{subequations}
where $P[k]$ is an $n\times n$ matrix with $P_{ij}[k] = 1/D_j^+[k]$, $j\in\mathcal{N}_i^-[k]\cup\{i\}$, and $P_{ij}[k] = 0$, otherwise.
We note that $P[k]$ is column stochastic, and it can be shown that $r_i[k]$ converges to the average of the initial values, namely, $\lim_{k\rightarrow \infty}r_i[k] = \frac{\sum_{i}v_i}{n}$ \cite{DoHa12}, as long as Assumption~\ref{assume_comm_model1} holds. 
\subsection{Distributed Primal-Dual Algorithm}
We propose the following distributed primal-dual algorithm, which is based on the ratio consensus algorithm~\eqref{RatioCons}, where each node~$i$ runs the following iterations:
\begin{subequations}\label{primal_dual_alg}
\begin{align}
p_i[k+1]&=\Big[p_i[k]-sf^\prime_i(p_i[k])+s\xi x_i[k]\Big]_{\underline{p}_i}^{\overline{p}_i},\label{u_update}\\
\lambda_i[k+1] &= \sum_{j\in\mathcal{N}_i^-[k]\cup\{i\}}\frac{\lambda_j[k]-sy_j[k]}{D_j^+[k]},\label{lmbd_update_dd}\\
v_i[k+1]&=\sum_{j\in\mathcal{N}_i^-[k]\cup\{i\}}\frac{v_j[k]}{D_j^+[k]},\\
x_i[k+1]&=\frac{\lambda_i[k+1]}{v_i[k+1]},\\
y_i[k+1]&=\sum_{j\in\mathcal{N}_i^-[k]\cup\{i\}}\frac{y_j[k]}{D_j^+[k]}+\hat{n}(p_i[k+1]-p_i[k]),
\end{align}
\end{subequations}
where $y_i[k]$ is the estimate at instant~$k$ of the total power imbalance, $\mathbf{1}^\top (p[k]-\ell)$, at node~$i$.
The iterations in \eqref{primal_dual_alg} are initialized with $x_i[0] = 0$, $\lambda_i[0] = 0$, $v_i[0]=1$, and $y_i[0] = \hat{n}(p_i[0]-\ell_i)$. We note that the iterations used to update $\lambda_i[k]$, $v_i[k]$ and $y_i[k]$ are similar to the so-called Push-DIGing algorithm proposed in \cite{Nedic17} for solving an unconstrained consensus optimization problem.
{\begin{remark}
Algorithm~\eqref{primal_dual_alg} is similar to the algorithm in \cite{YaWu19} in that they both use the Push-DIGing algorithm and the gradient tracking idea in \cite{Nedic17} to update the dual variables, and both require agents to know their instantaneous communication out-degrees~$D_j^+[k]$'s at each iteration~$k$.
However, there are a few subtle differences that can be pointed out. Algorithm~\eqref{primal_dual_alg} is based on the first order Lagrangian method \cite{NonlinearProgramming}, while the algorithm in \cite{YaWu19} is based on the dual-ascent method \cite{Boyd11}. Algorithm~\eqref{primal_dual_alg} also contains additional parameters, $\xi$ and $\hat{n}$, that can noticeably improve performance if they are carefully tuned. But more importantly, algorithm~\eqref{primal_dual_alg} serves as an intermediate step for designing a robustified extension (to be proposed later) that converges geometrically fast even when the instantaneous out-degrees are unknown to the agents, whereas the algorithms in \cite{DuYa18} and \cite{YaWu19} require the agents to know their instantaneous out-degrees.
\end{remark}}


As in the case of undirected graphs, algorithm~\eqref{primal_dual_alg} can be represented as a feedback interconnection of a nominal system, denoted by $\vec{\mathcal{H}}_1$, and a disturbance system, denoted by $\vec{\mathcal{H}}_2$. 
To this end, let $e[k] \coloneqq x[k]-\big(\frac{1}{n}\mathbf{1}^\top\lambda[k]\big)\mathbf{1}$, and $\hat{\lambda}[k] \coloneqq \frac{1}{\hat{n}}\mathbf{1}^\top\lambda[k]$; then, we  define the nominal system, $\vec{\mathcal{H}}_1$, as follows:
\begin{subequations}\label{H1_dd}
\begin{numcases}{\vec{\mathcal{H}}_1:}
\begin{aligned}p[k+1]&=\Big[p[k]-s\nabla f(p[k])+s\xi\frac{\hat{n}}{n}\mathbf{1}\hat{\lambda}[k]\\&\quad+s\xi e[k]\Big]_{\underline{p}}^{\overline{p}},\end{aligned}\label{H1_dd_1}\\
\hat{\lambda}[k+1] = \hat{\lambda}[k]-s\mathbf{1}^\top (p[k]-\ell),\label{H1_dd_2}
\end{numcases}
\end{subequations} 
where we substituted $e[k]+\frac{\hat{n}}{n}\mathbf{1}\hat{\lambda}[k]$ for $x[k]$ in~\eqref{u_update} to obtain~\eqref{H1_dd_1}, and we summed~\eqref{lmbd_update_dd} over all~$i$ and divided the result by~$\hat{n}$ to obtain~\eqref{H1_dd_2}.
We note that $e[k]$ is the vector of deviations from their average at instant~$k$ of the local estimates of the Lagrange multiplier; without~$e[k]$, the nominal system~$\vec{\mathcal{H}}_1$ has almost the same form as~\eqref{central_alg}. In fact, the nominal systems for the undirected and directed cases are the same, namely, $\vec{\mathcal{H}}_1=\mathcal{H}_1$.
Now, we define the disturbance system, $\vec{\mathcal{H}}_2$, as follows: 
\begin{subequations}\label{H2_dd}
\begin{numcases}{\vec{\mathcal{H}}_2:}
\begin{aligned}y[k]&=P[k-1]y[k-1]\\&\quad+\hat{n}(p[k]-p[k-1]),\end{aligned}\label{H2_dd_y}\\
\lambda[k+1]= P[k](\lambda[k]-sy[k]),\label{H2_dd_lmbd}\\
v[k+1]= P[k]v[k],\\
x[k+1]=(V[k+1])^{-1}\lambda[k+1],\label{H2_dd_x}\\
e[k] = x[k]-\Big(\frac{\hat{n}}{n}\hat{\lambda}[k]\Big)\mathbf{1},
\end{numcases}
\end{subequations} 
where $V[k] \coloneqq \diag(v[k])$, $P[k]\in\mathds{R}^{n\times n}$ with $P_{ii}[k] \coloneqq {1}/({D_i^+[k]})$ and
\begin{align*}
P_{ij}[k] \coloneqq \left\{\begin{array}{l l} \frac{1}{D_j^+[k]} & \mbox{if }(j,i) \in \mathcal{E}^{(c)}[k],\\0 & \mbox{else}.\end{array}\right.
\end{align*}
Then, as illustrated in Fig.~\ref{fig:control_interpret_dd}, algorithm~\eqref{primal_dual_alg} can be viewed as a feedback interconnection of~$\vec{\mathcal{H}}_1$ and $\vec{\mathcal{H}}_2$,
where $(p^*,\lambda^*)$ is the equilibrium of~\eqref{H1_dd} when $e[k]\equiv 0$, $\forall k$.
\begin{figure}
    \centering 
	\includegraphics[trim=0cm 0cm 0cm 0cm, clip=true, scale=0.6]{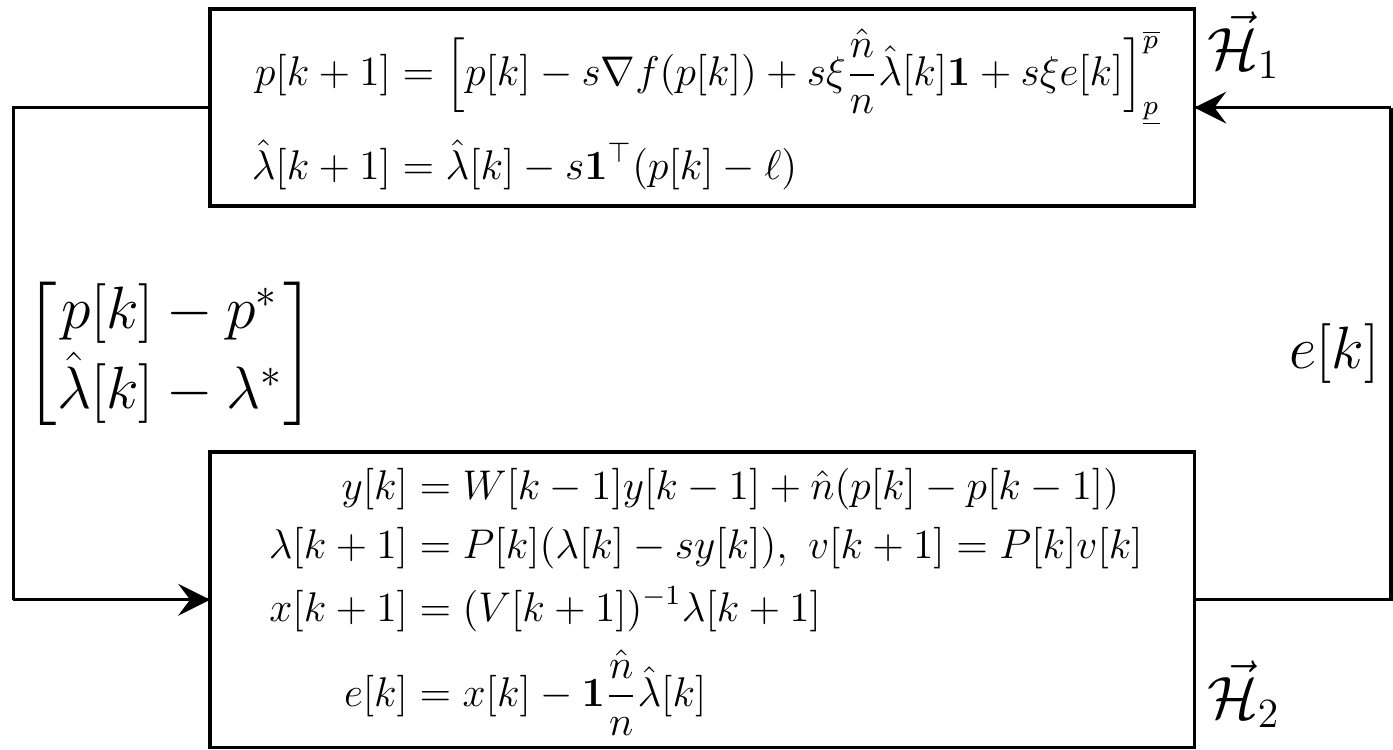} 
    \caption{Algorithm~\eqref{primal_dual_alg} as a feedback system.}
    \vspace{-5pt} 
    \label{fig:control_interpret_dd}
\end{figure}

\subsection{Convergence Analysis}\label{subsec:conv_analysis_dd}
{
As in the case of undirected graphs, we establish that the same relations {\textbf{R1}} and {\textbf{R2}} hold for directed graphs, namely, that $\vec{\mathcal{H}}_1$ and $\vec{\mathcal{H}}_2$ are finite-gain stable. This allows us to apply the small-gain theorem to prove the convergence of~\eqref{primal_dual_alg}.
The proof of the first result, where we show that $\vec{\mathcal{H}}_1$ is finite-gain stable, is omitted because $\vec{\mathcal{H}}_1=\mathcal{H}_1$, and, hence, it is identical to that of Proposition~\ref{prop:H1_un}.}
\begin{proposition}\label{prop:H1_dd}
Let Assumption~\ref{objective_assumption} hold.
Then, under~\eqref{H1_dd}, we have that
\begin{align*}
\textnormal{\textbf{R1. }} \|z\|_2^{a,K}\leq \alpha_1\|e\|_2^{a,K}+\beta_1,
\end{align*}
for some positive~$\alpha_1$ and $\beta_1$, $a \in (0,1)$, sufficiently small~$s>0$, and $\forall\xi \in (0,\frac{n}{\hat{n}}]$,
where \[z[k] \coloneqq \begin{bmatrix}p[k]-p^*\\ \hat{\lambda}[k]-\lambda^* \end{bmatrix}.\]
\end{proposition}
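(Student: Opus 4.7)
The plan is to observe that no new argument is needed. The nominal system $\vec{\mathcal{H}}_1$ defined in \eqref{H1_dd_1}--\eqref{H1_dd_2} is literally identical to the nominal system $\mathcal{H}_1$ defined in \eqref{H1_1}--\eqref{H1_2}; this is stated explicitly in the text just before the proposition. Therefore the conclusion follows by invoking Proposition~\ref{prop:H1_un} verbatim, and the proof reduces to checking that no step in the proof of Proposition~\ref{prop:H1_un} used any feature particular to the undirected communication model (it did not: the proof of Proposition~\ref{prop:H1_un} manipulates only the averaged/aggregated variables $p[k]$ and $\hat\lambda[k]$, together with the error vector $e[k]$, without ever invoking the weight matrix $W[k]$).

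For completeness, the argument of Proposition~\ref{prop:H1_un} that is being inherited proceeds as follows. First, using non-expansiveness of the box projection together with the mean value theorem applied componentwise to $\nabla f(p[k]) - \nabla f(p^*)$, the error recursion can be written as $z[k+1] = A[k] z[k] + s\xi\,[e[k]^\top, 0]^\top$ up to the projection, where $A[k] = I - s B[k]$ with the $2\times 2$ block form of $B[k]$ given in the proof of Proposition~\ref{prop:H1_un}. An eigenvector/Rayleigh-quotient calculation then shows, using strong convexity (Assumption~\ref{objective_assumption}) to control $\operatorname{Re}(v^{\mathrm{H}} \nabla^2 f(\upsilon[k]) v)$ and the restriction $\xi \in (0, n/\hat n]$ to handle the off-diagonal coupling terms, that every eigenvalue of $B[k]$ has strictly positive real part, so that $\rho(A[k]) < 1$ for all sufficiently small $s$.

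The step I would expect to be the main obstacle, if one were to redo the argument from scratch, is upgrading this pointwise spectral-radius bound to a uniform contraction $\|A[k]\| \le \gamma < 1$ in a single induced matrix norm independent of $k$, because $A[k]$ is time-varying through $\nabla^2 f(\upsilon[k])$. The proof of Proposition~\ref{prop:H1_un} handles this by splitting $A[k] = A + s\tilde{A}[k]$ with $A$ obtained by freezing the Hessian at $\underline{p}$, Schur-triangularizing $A = U^{\mathrm{H}} \Lambda U$, and then applying the diagonal rescaling $D_t = \diag(t,t^2,\dots,t^{n+1})$ with $t$ large. The resulting matrix $D_t \Lambda D_t^{-1} + s D_t U \tilde{A}[k] U^{\mathrm{H}} D_t^{-1}$ has off-diagonal entries of order $t^{-1}$ or smaller and diagonal entries equal to eigenvalues of $A$ (all strictly less than $1$ in modulus), so its $\|\cdot\|_1$-norm is at most some $\gamma < 1$ uniformly in $k$, which transfers to an induced norm on $A[k]$ by \cite[Theorem~5.6.26]{Horn_Johnson}.

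Once the uniform contraction $\|z[k+1]\| \le \gamma \|z[k]\| + s\xi \|e[k]\|$ is in hand, the remainder is the routine $\|\cdot\|_2^{a,K}$ bookkeeping already carried out in the proof of Proposition~\ref{prop:H1_un}: multiply by $a^{-(k+1)}$ for any $a \in (\gamma, 1)$, take $\max_{0 \le k \le K}$, rearrange, and use norm equivalence between $\|\cdot\|$ and $\|\cdot\|_2$ to obtain \textbf{R1} with $\alpha_1 = s\xi\alpha\beta/(a-\gamma)$ and $\beta_1 = a\alpha\|z[0]\|/(a-\gamma)$.
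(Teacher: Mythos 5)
Your proposal is correct and matches the paper exactly: the paper itself omits this proof on the grounds that $\vec{\mathcal{H}}_1=\mathcal{H}_1$ so the argument is identical to that of Proposition~\ref{prop:H1_un}, which is precisely your observation. Your recap of the inherited argument (projection non-expansiveness, mean value theorem, positive real parts of the eigenvalues of $B[k]$, the Schur-triangularization-plus-$D_t$-scaling to get a uniform contraction, and the $\|\cdot\|_2^{a,K}$ bookkeeping with the same constants $\alpha_1$ and $\beta_1$) also faithfully reproduces the paper's proof of that earlier proposition.
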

In the next result, we show that $\vec{\mathcal{H}}_2$ is finite-gain stable.
\begin{proposition}\label{prop:H2_dd}
Let Assumptions~\ref{objective_assumption}, \ref{assume_comm_model1}, and \ref{assume_comm_model2} hold.
Then, under~\eqref{H2_dd}, we have that
\begin{align}
\textnormal{\textbf{R2. }}\|e\|_2^{a,K}\leq s\alpha_2\|z\|_2^{a,K}+\beta_2,\label{e_to_z}
\end{align}
for some positive~$\alpha_2$ and $\beta_2$, $a \in (0,1)$, and sufficiently small~$s>0$.
\end{proposition}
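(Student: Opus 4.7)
The plan is to exploit the column-stochasticity of $P[k]$ to extract a gradient-tracking invariant and a mass-preservation identity that link $\vec{\mathcal{H}}_2$ back to $z[k]$, and then to combine three contraction estimates (for $v[k]$, for the $y$-consensus error, and for the $\lambda$-consensus error) via the weighted norm $\|\cdot\|_2^{a,K}$. First, since $\mathbf{1}^\top P[k]=\mathbf{1}^\top$, a direct induction on \eqref{H2_dd_y} using $y[0]=\hat n(p[0]-\ell)$ yields the gradient-tracking identity $\mathbf{1}^\top y[k]=\hat n\mathbf{1}^\top(p[k]-\ell)$. The same property applied to \eqref{H2_dd_lmbd} reproduces exactly the nominal update $\hat\lambda[k+1]=\hat\lambda[k]-s\mathbf{1}^\top(p[k]-\ell)$. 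In particular, because $\mathbf{1}^\top p^*=\mathbf{1}^\top\ell$, the mean of $y[k]$ can be written as $\hat n\mathbf{1}^\top(p[k]-p^*)$, which is already controlled by $\|z\|_2^{a,K}$.

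Next, I would rewrite the consensus error in the form
\[
e[k]=V[k]^{-1}\lambda[k]-\tfrac{1}{n}(\mathbf{1}^\top\lambda[k])\mathbf{1}=V[k]^{-1}\Bigl(\lambda[k]-\tfrac{1}{n}(\mathbf{1}^\top\lambda[k])\,v[k]\Bigr),
\]
using the identity $V[k]^{-1}v[k]=\mathbf{1}$. The standard push-sum analysis (see, e.g., the treatment of $v[k]=P[k-1]\cdots P[0]\mathbf{1}$ in \cite{DoHa12,Nedic17}) gives, under Assumptions~\ref{assume_comm_model1}--\ref{assume_comm_model2}, uniform positive bounds $v_{\min}\le v_i[k]\le v_{\max}$, so $\|V[k]^{-1}\|$ is bounded by a constant. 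It therefore suffices to bound the push-sum residual $\tilde\lambda[k]\coloneqq \lambda[k]-\tfrac{1}{n}(\mathbf{1}^\top\lambda[k])v[k]$, whose dynamics are driven by the ``perturbation'' $-sy[k]$ through the column-stochastic recursion.

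The core of the argument is then the two cascaded perturbed-consensus estimates, which are already the substance of the Push-DIGing analysis of \cite{Nedic17}. Writing $y^\perp[k]$ and $\tilde\lambda[k]$ for the respective deviations from their ``mean'' components, the $B$-step joint-connectivity Assumption~\ref{assume_comm_model1} gives a geometric contraction $\sigma<1$ for the action of $P[k+B-1]\cdots P[k]$ on the complementary subspace. Propagating \eqref{H2_dd_y} yields $\|y^\perp[k]\|_2\le \sigma^{\lfloor k/B\rfloor}\|y^\perp[0]\|_2+C_y\sum_{j\le k}\sigma^{\lfloor (k-j)/B\rfloor}\hat n\|p[j]-p[j-1]\|_2$; likewise, propagating the recursion for $\tilde\lambda[k]$ gives a bound driven by $s\|y[k]\|_2$. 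Multiplying both by $a^{-k}$ and taking $\max_{0\le k\le K}$, one picks any $a\in(\sigma^{1/B},1)$ so that the convolution with the geometric kernel converts into a finite multiplicative constant, producing $\|y^\perp\|_2^{a,K}\le C_1\|p[k]-p[k-1]\|_2^{a,K}+C_1'$ and $\|\tilde\lambda\|_2^{a,K}\le s C_2\|y\|_2^{a,K}+C_2'$.

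Finally, I would close the loop by bounding $\|p[k]-p[k-1]\|_2^{a,K}\le(1+a^{-1})\|z\|_2^{a,K}$ (triangle inequality and $p[k]-p[k-1]=(p[k]-p^*)-(p[k-1]-p^*)$), and $\|y\|_2^{a,K}\le \|y^\perp\|_2^{a,K}+\tfrac{\hat n}{\sqrt n}\|\mathbf{1}^\top(p[k]-p^*)\|_2^{a,K}\le \|y^\perp\|_2^{a,K}+\hat n\|z\|_2^{a,K}$ using the gradient-tracking invariant. Chaining the two estimates together with $\|e\|_2^{a,K}\le v_{\min}^{-1}\|\tilde\lambda\|_2^{a,K}$ and absorbing the transient terms into $\beta_2$ yields \eqref{e_to_z}, where the factor of $s$ in front of $\|z\|_2^{a,K}$ comes from the $s$ multiplying $y$ in \eqref{H2_dd_lmbd}. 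The main obstacle will be the careful bookkeeping of the two coupled push-type recursions in the weighted norm --- in particular, choosing $a$ close enough to $1$ that both geometric kernels are summable while keeping the constants $\alpha_2,\beta_2$ independent of $s$ so that the small-gain threshold $s\alpha_1\alpha_2<1$ of Proposition~\ref{prop:small_gain_un} can eventually be applied verbatim.
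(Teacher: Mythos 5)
Your proposal follows essentially the same route as the paper: both reduce \eqref{e_to_z} to two cascaded perturbed push-sum contraction estimates (the $\lambda$-consensus residual driven by $-sy[k]$, and the $y$-consensus residual driven by $\hat{n}(p[k+1]-p[k])$), close the chain with the triangle-inequality bound $\|\hat n(p[k+1]-p[k])\|_2^{a,K}\leq 2\hat n\|z\|_2^{a,K}+\text{const}$, and track the explicit factor of $s$ through the $\lambda$-update. The only difference is presentational: the paper changes variables to $x=V^{-1}\lambda$, $h=V^{-1}y$ and imports the two contraction estimates wholesale from \cite[Lemmas~15--16]{Nedic17}, whereas you work with the un-normalized push-sum residuals and sketch re-deriving those contractions from Assumption~\ref{assume_comm_model1}.
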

\begin{proof}
Let $V[k] \coloneqq \diag(v[k])$, and $P[k]\in\mathds{R}^{n\times n}$ with $P_{ii}[k] \coloneqq {1}/({D_i^+[k]})$ and
\begin{align*}
P_{ij}[k] \coloneqq \left\{\begin{array}{l l} \frac{1}{D_j^+[k]} & \mbox{if }(j,i) \in \mathcal{E}^{(c)}[k],\\0 & \mbox{else}.\end{array}\right.
\end{align*}
Letting $h[k]\coloneqq (V[k])^{-1}y[k]$, and $R[k]\coloneqq (V[k+1])^{-1}P[k]V[k]$, we rewrite~\eqref{H2_dd_y} and \eqref{H2_dd_x} in vector form as follows:
\begin{subequations}\label{primal_dual_alg_vector}
\begin{align}
x[k+1]&=(V[k+1])^{-1}P[k](V[k]x[k]-sy[k]) \nonumber\\&= R[k](x[k]-sh[k]),\\
h[k+1]&=(V[k+1])^{-1}y[k+1]\nonumber\\&=(V[k+1])^{-1}(P[k]V[k]h[k]\nonumber\\&\quad+\hat{n}(p[k+1]-p[k]))\nonumber\\&=R[k]h[k]+(V[k+1])^{-1}(\hat{n}(p[k+1]-p[k])),
\end{align}
\end{subequations}
Letting $\delta[k+1] \coloneqq \hat{n}(p[k+1]-p[k])$,
and using the triangle inequality, we obtain that
\begin{align}
\|\delta[k+1]\|_2 &= \hat{n}\|p[k+1]-p^*-p[k]+p^*\|_2\nonumber\\&\leq \hat{n}(\|p[k+1]-p^*\|_2+\|p[k]-p^*\|_2)\nonumber\\&\leq \hat{n}(\|z[k+1]\|_2+\|z[k]\|_2).
\label{delta_to_z_ineq}
\end{align}
Then, by taking $\max\limits_{0\leq k\leq K}(\cdot)$ on both sides of~\eqref{delta_to_z_ineq}, we obtain 
\begin{align}
\max\limits_{0\leq k\leq K}\|\delta[k+1]\|_2 &\leq \hat{n}\Big(\max\limits_{0\leq k\leq K}\|z[k+1]\|_2\nonumber\\&\quad+\max\limits_{0\leq k\leq K}\|z[k]\|_2\Big),\nonumber\\
&\leq 2\hat{n}\max\limits_{0\leq k\leq K+1}\|z[k]\|_2.
\label{delta_to_z0}
\end{align}
Since $\max\limits_{0\leq k\leq K}\|\delta[k+1]\|_2 = \max\limits_{0\leq k\leq K+1}\|\delta[k]\|_2 - \|\delta[0]\|_2$, it follows from~\eqref{delta_to_z0} that
\begin{align}
\|\delta\|_2^{a,K} \leq 2\hat{n}\|z\|_2^{a,K} + \|\delta[0]\|_2.
\label{delta_to_z}
\end{align}
Let $\tilde{x}[k] \coloneqq x[k]-\big(\frac{1}{n}\mathbf{1}^\top x[k]\big)\mathbf{1}$, and $\tilde{h}[k] \coloneqq h[k]-\big(\frac{1}{n}\mathbf{1}^\top h[k]\big)\mathbf{1}$. 
We recall that, by definition, $e[k] \coloneqq x[k]-\big(\frac{1}{n}\mathbf{1}^\top\lambda[k]\big)\mathbf{1}$.
Then, by using the triangle inequality and noting that $\lambda[k] = V[k]x[k]$, we have that
\begin{align}
\|e[k]\|_2 &= \Big\|x[k]-\Big(\frac{1}{n}\mathbf{1}^\top\lambda[k]\Big)\mathbf{1}\Big\|_2\nonumber\\
&\leq \Big\|\Big(\frac{1}{n}\mathbf{1}^\top x[k]\Big)\mathbf{1}-\Big(\frac{1}{n}\mathbf{1}^\top\lambda[k]\Big)\mathbf{1}\Big\|_2+\|\tilde{x}[k]\|_2 \nonumber\\&
= \Big\|\frac{1}{n}\big(\mathbf{1}^\top x[k]-\mathbf{1}^\top V[k]x[k]\big)\mathbf{1}\Big\|_2 +\|\tilde{x}[k]\|_2
\nonumber\\& = \frac{1}{n}\|(\mathbf{1}-v[k])^\top x[k])\mathbf{1}\|_2+\|\tilde{x}[k]\|_2
\nonumber\\& = \frac{1}{\sqrt{n}}\|(\mathbf{1}-v[k])^\top x[k])\|_2  +\|\tilde{x}[k]\|_2
\nonumber\\& = \frac{1}{\sqrt{n}}\|(\mathbf{1}-v[k])^\top (I-\frac{1}{n}\mathbf{1}\mathbf{1}^\top) x[k])\|_2+\|\tilde{x}[k]\|_2\label{e_to_x_0}\\
& = \frac{1}{\sqrt{n}}\|(\mathbf{1}-v[k])\|_2\|\tilde{x}[k]\|_2+\|\tilde{x}[k]\|_2\leq 2\|\tilde{x}[k]\|_2,
\label{e_to_x}
\end{align}
where using the fact that $v[k]^\top \mathbf{1} = n$ yields~\eqref{e_to_x_0}, and the fact that $0\leq v_i[k]\leq 1$ yields~\eqref{e_to_x}.
For further analysis, we invoke the following results \cite[Lemmas~15, 16]{Nedic17}:
\begin{lemma}
\begin{align}
\|\tilde{h}\|_2^{a,K} \leq \gamma_1\|\delta\|_2^{a,K}+\gamma_2,
\label{h_to_z}
\end{align}
for some $\gamma_1$ and $\gamma_2$. [Precise values can be found in \cite[Lemma~15]{Nedic17}.] 
\end{lemma}
\begin{lemma}
\begin{align}
\|\tilde{x}\|_2^{a,K}\leq s\gamma_3\|\tilde{h}\|_2^{a,K}+\gamma_4
\label{x_to_h}
\end{align}
for some $\gamma_3$ and $\gamma_4$. [Precise values can be found in \cite[Lemma~16]{Nedic17}.] 
\end{lemma}
By using~\eqref{x_to_h},~\eqref{h_to_z}, and~\eqref{delta_to_z} in~\eqref{e_to_x}, we obtain
\begin{align}
\|e\|_2^{a,K}&\leq 2\|\tilde{x}\|_2^{a,K} \leq 2s\gamma_3\|\tilde{h}\|_2^{a,K}+2\gamma_4\nonumber\\&\leq 2s\gamma_1\gamma_3\|\delta\|_2^{a,K}+2s\gamma_2\gamma_3+2\gamma_4\nonumber\\&\leq 4s\gamma_1\gamma_3\hat{n}\|z\|_2^{a,K}+2s\gamma_2\gamma_3+2\gamma_4\nonumber\\&\quad+2s\gamma_1\gamma_3\|\delta[0]\|_2,
\end{align}
which can be rewritten as
\[\|e\|_2^{a,K}\leq s\alpha_2\|z\|_2^{a,K}+\beta_2,\]
where $\alpha_2 =  4\gamma_1\gamma_3\hat{n}$, and $\beta_2 = 2s\gamma_2\gamma_3+2\gamma_4+2s\gamma_1\gamma_3\|\delta[0]\|_2$,
yielding~\eqref{e_to_z}.
\end{proof}
In the following, we state the convergence results for algorithm~\eqref{primal_dual_alg}, which can be shown by applying the small-gain theorem to the results in Propositions~\ref{prop:H1_dd}--\ref{prop:H2_dd}, similar to the analysis in the proof of Proposition~\ref{prop:small_gain_un}.
\begin{proposition}\label{prop:small_gain_dd}
Let Assumptions~\ref{objective_assumption}, \ref{assume_comm_model1}, and \ref{assume_comm_model2} hold.
Then, under algorithm~\eqref{primal_dual_alg},
\begin{align*}
\|z\|_2^{a,K}\leq \beta,
\end{align*}
for some $\beta>0$, $a \in (0,1)$, sufficiently small~$s>0$, and $\forall\xi \in (0,\frac{n}{\hat{n}}]$. In particular, $(p_i[k],\lambda_i[k])$ converges to $(p_i^*,\lambda^*)$, $\forall i$, at a geometric rate~$\mathcal{O}(a^k)$.
\end{proposition}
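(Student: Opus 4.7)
The plan is to combine the two finite-gain-stability bounds of Propositions~\ref{prop:H1_dd} and~\ref{prop:H2_dd} via the small-gain theorem exactly as was done in the proof of Proposition~\ref{prop:small_gain_un}. First, I would substitute the bound $\|e\|_2^{a,K} \leq s\alpha_2\|z\|_2^{a,K} + \beta_2$ from \textbf{R2} into the bound $\|z\|_2^{a,K} \leq \alpha_1\|e\|_2^{a,K} + \beta_1$ from \textbf{R1}, obtaining
\begin{align*}
\|z\|_2^{a,K} \leq \alpha_1\bigl(s\alpha_2\|z\|_2^{a,K} + \beta_2\bigr) + \beta_1,
\end{align*}
and rearrange to get $(1 - s\alpha_1\alpha_2)\|z\|_2^{a,K} \leq \alpha_1\beta_2 + \beta_1$.

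Next, I would invoke the key small-gain observation: the loop gain $s\alpha_1\alpha_2$ is proportional to $s$, so by choosing $s$ small enough we can guarantee $s\alpha_1\alpha_2 < 1$. Setting $\beta \coloneqq (\alpha_1\beta_2 + \beta_1)/(1 - s\alpha_1\alpha_2)$ then yields $\|z\|_2^{a,K} \leq \beta$ uniformly in $K$. Since by definition $\|z\|_2^{a,K} = \max_{0 \leq k \leq K} a^{-k}\|z[k]\|_2$, the uniform bound immediately implies $\|z[k]\|_2 \leq \beta a^k$ for all $k \geq 0$, so $p[k] \to p^*$ and $\hat{\lambda}[k] \to \lambda^*$ at the geometric rate $\mathcal{O}(a^k)$. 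To conclude the stated componentwise convergence of the local Lagrange-multiplier estimates, I would feed this bound back through \textbf{R2} to get $\|e\|_2^{a,K} \leq s\alpha_2\beta + \beta_2$, which, together with the defining identity $e[k] = x[k] - (\hat{n}/n)\hat{\lambda}[k]\mathbf{1}$, shows that $x_i[k]$ (and hence the local multiplier estimates carried by the push-sum pair $(\lambda_i[k], v_i[k])$) converges geometrically to the common value $\lambda^*$, up to the scaling constant $\hat{n}/n$.

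There is no real obstacle here beyond bookkeeping, since the heavy lifting has already been done in establishing \textbf{R1} and \textbf{R2}. The only subtlety worth flagging is the joint choice of $a \in (0,1)$ and $s > 0$: the constants $\alpha_1, \alpha_2, \beta_1, \beta_2$ themselves depend on $a$ and $s$ through the contraction-rate arguments that produced them (the spectral-radius bound on $A[k]$ in Proposition~\ref{prop:H1_dd} and the push-sum mixing bounds in Proposition~\ref{prop:H2_dd}), so one must first fix $a$ strictly above the relevant contraction rate and then shrink $s$ enough to force the product $s\alpha_1\alpha_2$ below unity. Once this ordering of the choices is made, the chain of inequalities above closes and yields the claim.
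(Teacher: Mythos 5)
Your proposal is correct and follows essentially the same route as the paper: the paper proves the undirected analogue (Proposition~\ref{prop:small_gain_un}) by exactly this substitution of \textbf{R2} into \textbf{R1}, rearrangement to $\|z\|_2^{a,K}\leq(\alpha_1\beta_2+\beta_1)/(1-s\alpha_1\alpha_2)$, and the choice of $s$ small enough that $s\alpha_1\alpha_2<1$, and then simply cites that argument for the directed case. Your additional remarks---feeding the bound back through \textbf{R2} to recover geometric convergence of the local multiplier estimates, and the ordering of the choices of $a$ and $s$---are sound elaborations of details the paper leaves implicit.
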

Finally, in the following, we establish that $p^*$ is the solution of \eqref{ED}. [The proof is similar to the proof of Lemma~\ref{lem:p_star}]
\begin{lemma}\label{lem:p_star_dd}
Consider $(p^*,\lambda^*)$, namely, the equilibrium of the nominal system~$\vec{\mathcal{H}}_1$ with $e[k]\equiv 0$, $\forall k$.
Then, $p^*$ is the solution of \eqref{ED}.
\end{lemma}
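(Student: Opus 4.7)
The plan is to leverage the observation, already noted in the text, that the nominal system $\vec{\mathcal{H}}_1$ coincides with $\mathcal{H}_1$ from the undirected case. Consequently, the equilibrium conditions for $\vec{\mathcal{H}}_1$ with $e[k]\equiv 0$ are literally the same pair of equations that was analyzed in the proof of Lemma~\ref{lem:p_star}, so I would essentially replay that argument.

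Concretely, first I would write down the equilibrium conditions obtained by setting $p[k+1]=p[k]=p^*$ and $\hat{\lambda}[k+1]=\hat{\lambda}[k]=\lambda^*$ in \eqref{H1_dd} with $e[k]\equiv 0$, yielding
\begin{align*}
p^* &= \bigl[p^* - s\nabla f(p^*) + s\xi \tfrac{\hat{n}}{n}\mathbf{1}\lambda^*\bigr]_{\underline{p}}^{\overline{p}},\\
\lambda^* &= \lambda^* - s\mathbf{1}^\top(p^*-\ell).
\end{align*}
The second equation immediately gives the primal feasibility condition $\mathbf{1}^\top(p^* - \ell) = 0$. For the first, I would use the characterization of the projection onto the box $[\underline{p},\overline{p}]$ to introduce nonnegative multipliers $\mu^*,\nu^*\in\mathds{R}^n$ encoding activity of the upper and lower bound constraints, so that stationarity and complementary slackness hold componentwise.

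This produces exactly the KKT system \eqref{KKT} from the undirected case, with $\mu_i^*,\nu_i^*\geq 0$ satisfying $\mu_i^*(p_i^*-\overline{p}_i)=0$ and $\nu_i^*(\underline{p}_i-p_i^*)=0$, together with $\underline{p}\leq p^*\leq \overline{p}$ by construction of the projection. Since the objective in \eqref{ED} is convex (by Assumption~\ref{objective_assumption}) and the constraints are affine, I would then invoke \cite[Proposition~3.3.1]{NonlinearProgramming} to conclude that satisfaction of the KKT conditions is sufficient for $p^*$ to be the (unique, by strong convexity) optimal solution of \eqref{ED}.

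There is no real obstacle here: the whole argument is a direct transcription of the proof of Lemma~\ref{lem:p_star}, since the equilibrium equations of $\vec{\mathcal{H}}_1$ do not involve the communication-graph structure at all; the directedness only enters through the disturbance system $\vec{\mathcal{H}}_2$, which plays no role in characterizing $p^*$. The only mild care needed is to record that $\xi\in(0,n/\hat{n}]$ so that the scalar $\xi(\hat{n}/n)>0$ and $\lambda^*$ can be rescaled into the conventional Lagrange multiplier associated with $\mathbf{1}^\top p = \mathbf{1}^\top \ell$ without changing the sign pattern of $\mu^*$ and $\nu^*$.
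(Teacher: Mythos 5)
Your proposal matches the paper's intent exactly: the paper itself omits a separate proof, noting only that it is similar to that of Lemma~\ref{lem:p_star}, and your argument is precisely that proof transcribed—equilibrium conditions, projection-based KKT system~\eqref{KKT}, and \cite[Proposition~3.3.1]{NonlinearProgramming}—justified by the stated identity $\vec{\mathcal{H}}_1=\mathcal{H}_1$. No gaps; this is the same approach as the paper.
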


\section{Robust DER Coordination Over Time-Varying Directed Graphs}\label{sec:robust-DOD-dd}
In this section, we present a robust extension of the distributed algorithm~\eqref{primal_dual_alg}, relaxing the assumption that each node knows its instantaneous out-degree. 
Instead of Assumption~\ref{assume_comm_model2}, we assume that each node only knows its nominal out-degree, as stated next.
\begin{assumption}
\label{assume_comm_model3}
The value of~$d_i^+$ is known to node~$i$, $i=1,2,\dots,n$, for all $k\geq 0$.
\end{assumption}

\subsection{Running-Sum Ratio Consensus Algorithms}
Since instantaneous out-degrees are not available, the ratio-consensus algorithm~\eqref{RatioCons} cannot be executed. If, instead of~$D_i^+[k]$, we use $d_i^+$ in~\eqref{RatioCons}, then, $P[k]$ is not necessarily column stochastic.
However, the loss of column-stochasticity can be fixed by augmenting the original network of nodes with additional virtual nodes and links such that if node~$i$ does not receive a packet from node~$j$, we let a virtual node receive the packet via a virtual link \cite{HaVaDo16}. This allows us to augment~\eqref{VectRatioCons} with additional states corresponding to the virtual nodes so that the augmented system becomes
\begin{align*}
\mu^\prime[k+1] &= \tilde{P}[k]\mu^\prime[k],\\
\nu^\prime[k+1] &= \tilde{P}[k]\nu^\prime[k],
\end{align*}
where $\mu^\prime[k]$ and $\nu^\prime[k]$ are the augmented state vectors that contain $\mu[k]$ and $\nu[k]$ and the states of the virtual nodes. The matrix~$\tilde{P}[k]$ can be made column stochastic by carefully updating the states of the virtual nodes. To explain this, we consider nodes~$i$ and $j$ connected via a communication link~$(j,i) \in \mathcal{E}^{(0)}$, and let $\mu^\prime_{ji}$ denote the state of the corresponding virtual node.
If $(j,i) \notin \mathcal{E}^{(c)}[k]$, then, $\mu^\prime_{ji}[k+1] = \mu^\prime_{ji}[k] + \frac{\mu_j[k]}{d_j^+}$, namely, the virtual node receives the packet from node~$j$.
If $(j,i) \in \mathcal{E}^{(c)}[k]$, then, we consider the following two options for updating the state of the virtual node.
\begin{enumerate}
\item[1.] The virtual node sends the value of its current state, $\mu_{ji}[k]$, to node~$i$, and sets the value of the next state to zero, i.e., $\mu_{ji}[k+1] = 0$. In the meantime, node~$j$ sends $\frac{\mu_j[k]}{d_j^+}$ to node~$i$.
\item[2.] The virtual node sends a portion of its current state, $\gamma\mu_{ij}[k]$, to node~$i$, where $\gamma$ is strictly positive and less than 1, and retains the other portion by performing the following update:
\[\mu^\prime_{ji}[k+1] = (1-\gamma)\mu^\prime_{ji}[k] + (1-\gamma)\frac{\mu_j[k]}{d_j^+},\]
where we notice that node~$j$ sends $(1-\gamma)\frac{\mu_j[k]}{d_j^+}$ to the virtual node and the remaining portion, $\gamma\frac{\mu_j[k]}{d_j^+}$, to node~$i$.
\end{enumerate}
The first option was chosen in the original running-sum ratio consensus algorithm (see, e.g., \cite{HaVaDo16}). 
In this work, we select the second option, since it allows us to considerably simplify the convergence analysis. 
To account for the absence of the virtual nodes and links in the actual communication network, additional computations must be performed at each transmitting/receiving node to effectively capture the effect of the updates at the virtual nodes on the states of the actual nodes.  
To this end, we let node~$j$ broadcast the running sums~$\sum_{t=0}^k {\mu_j[t]}/{d_j^+}$ and $\sum_{t=0}^k {\nu_j[t]}/{d_j^+}$. Then, $\mu_i[k]$ and $\nu_i[k]$ are updated by node~$i$ as follows:
\begin{subequations}\label{RunSumRatioCons}
\begin{align}
\mu_{i}[k+1] &= \sum_{j\in\mathcal{N}_i^-[k]\cup\{i\}}\big(\mu_{ij}[k+1]-\mu_{ij}[k]\big),\\
\nu_{i}[k+1] &= \sum_{j\in\mathcal{N}_i^-[k]\cup\{i\}}\big(\nu_{ij}[k+1]-\nu_{ij}[k]\big),\\
r_i[k+1] &= \frac{\mu_i[k+1]}{\nu_i[k+1]},
\end{align}
\end{subequations}
where $\mu_{ij}[k]$ and $\nu_{ij}[k]$ are updated using the running sums received by node~$i$ from node~$j$ and given by
\begin{subequations}\label{RunSumRatioCons2}
\begin{align}
\mu_{ij}[k+1] &= \left\{ \begin{matrix*}[l](1-\gamma)\mu_{ij}[k]+\gamma\sum\limits_{t=0}^k \frac{\mu_j[t]}{d_j^+}&\mbox{if }j\in\mathcal{N}_i^-[k],\\
\mu_{ij}[k] + \frac{\lambda_j[k]}{d_j^+}&\mbox{if }j = i,\\
\mu_{ij}[k]&\mbox{otherwise,}\end{matrix*}\right.\\
\nu_{ij}[k+1] &= \left\{ \begin{matrix*}[l](1-\gamma)\nu_{ij}[k]+\gamma\sum\limits_{t=0}^k \frac{\nu_j[t]}{d_j^+}&\mbox{if }j\in\mathcal{N}_i^-[k],\\
\nu_{ij}[k] + \frac{\nu_j[k]}{d_j^+}&\mbox{if }j = i,\\
\nu_{ij}[k]&\mbox{otherwise.}\end{matrix*}\right.
\end{align}
\end{subequations}
It is straightforward to see that the use of the running sums in the updates has the same effect on the states of the actual nodes as the updates at the virtual nodes have. 
Furthermore, by using the results in \cite{HaVaDo16}, it can be shown that $r_i[k]$ asymptotically converges with probability one to the average of the initial values, namely,
\begin{align*}
\lim_{k\rightarrow \infty}r_i[k] = \frac{\sum_{i}v_i}{n}.
\end{align*} 

\subsection{Robust Distributed Primal-Dual Algorithm}
By utilizing the running-sum ratio-consensus algorithm~\eqref{RunSumRatioCons}--\eqref{RunSumRatioCons2} in the averaging step, we develop a robust extension of algorithm~\eqref{primal_dual_alg}. We show that this robust extension is able to solve the DER coordination problem~\eqref{ED} even when every node~$i$ only knows its nominal out-degree, $d_i^+$, but not its instantaneous out-degree, $D_i^+[k]$.

We let node~$j$ broadcast the running sums~$\sum_{t=0}^k \frac{\lambda_j[t]}{d_j^+}$, $\sum_{t=0}^k \frac{v_j[t]}{d_j^+}$, and $\sum_{t=0}^k \frac{y_j[t]}{d_j^+}$ to its neighbors at each~$k\geq0$. Node~$i$ performs the following updates:
\begin{subequations}\label{robust_primal_dual_alg}
\begin{align}
p_i[k+1]&=\Big[p_i[k]-sf^\prime_i(p_i[k])+s\xi x_i[k]\Big]_{\underline{p}_i}^{\overline{p}_i},\\
\lambda_i[k+1] &= \sum_{j\in\mathcal{N}_i^-[k]\cup\{i\}}\Big(\lambda_{ij}[k+1]-\lambda_{ij}[k]-sy_{ij}[k+1]\nonumber\\&\quad+sy_{ij}[k]\Big),\\
v_i[k+1]&=\sum_{j\in\mathcal{N}_i^-[k]\cup\{i\}}\big(v_{ij}[k+1]-v_{ij}[k]\big),\\
x_i[k+1]&=\frac{\lambda_i[k+1]}{v_i[k+1]},\\
y_i[k+1]&=\sum_{j\in\mathcal{N}_i^-[k]\cup\{i\}}\big(y_{ij}[k+1]-y_{ij}[k]\big)\nonumber\\&\quad+\hat{n}(p_i[k+1]-p_i[k]).
\end{align}
\end{subequations}
where $\lambda_{ij}[k]$, $v_{ij}[k]$ and $y_{ij}[k]$ are updated using the running sums received by node~$i$ from node~$j$, and given by
\begin{subequations}\label{runsum}
\begin{align}
\lambda_{ij}[k+1] &= \left\{ \begin{matrix*}[l](1-\gamma)\lambda_{ij}[k]+\gamma\sum\limits_{t=0}^k \frac{\lambda_j[t]}{d_j^+} &\mbox{if }j\in\mathcal{N}_i^-[k],\\
\lambda_{ij}[k] + \frac{\lambda_j[k]}{d_j^+} &\mbox{if }j = i,\\
\lambda_{ij}[k] & \mbox{otherwise,}\end{matrix*}\right.\\
v_{ij}[k+1] &= \left\{ \begin{matrix*}[l](1-\gamma)v_{ij}[k] + \gamma\sum\limits_{t=0}^k \frac{v_j[t]}{d_j^+}& \mbox{if }j\in\mathcal{N}_i^-[k],\\
v_{ij}[k] + \frac{v_j[k]}{d_j^+}& \mbox{if }j = i,\\
v_{ij}[k] & \mbox{otherwise,}\end{matrix*}\right.\\
y_{ij}[k+1] &= \left\{ \begin{matrix*}[l](1-\gamma)y_{ij}[k] + \gamma\sum\limits_{t=0}^k \frac{y_j[t]}{d_j^+}& \mbox{if }j\in\mathcal{N}_i^-[k],\\
y_{ij}[k] + \frac{y_j[k]}{d_j^+}& \mbox{if }j = i,\\
y_{ij}[k] & \mbox{otherwise,}\end{matrix*}\right.
\end{align}
\end{subequations}
where $0<\gamma<1$.

\subsection{Feedback Representation of the Robust Distributed Primal-Dual Algorithm in Virtual Domain}
To facilitate the understanding of algorithm \eqref{robust_primal_dual_alg}--\eqref{runsum}, we represent it as a feedback interconnection of a nominal system, denoted by~$\vec{\mathcal{H}}_1^r$, and a disturbance system, denoted by~$\vec{\mathcal{H}}_2^r$. However, unlike the previously described feedback representations, this representation will be given in the virtual domain using the virtual nodes and links.

Consider a set of virtual nodes denoted by~$\mathcal{S} = \{n+1, \dots, n+|\mathcal{E}^{(0)}|\}$, where the virtual nodes correspond to the edges in $\mathcal{E}^{(0)}$ through a one-to-one map~$\mathds{I}$ such that $\mathds{I}(j,i) \in\mathcal{S}$ for $(j,i)\in\mathcal{E}^{(0)}$. 
Consider neighboring nodes~$i$ and $j$, i.e., $(j,i) \in \mathcal{E}^{(0)}$, and a virtual node~$l \in\mathcal{S}$ corresponding to the link from~$j$ to~$i$, i.e., $\mathds{I}(j,i) = l$. Let $\widetilde{\mathcal{N}}_l^-[k]$ denote the set of in-neighbors of node~$l$ at instant~$k$ given by~$\widetilde{\mathcal{N}}_l^-[k] = \{j\}$, $\forall k$, implying that node~$l$ always receives a packet from node~$j$.
Let $\widetilde{\mathcal{N}}_i^-[k]$ denote the augmented set of in-neighbors of node~$i\in\mathcal{V}$ at instant~$k$ given by
\begin{align}
\widetilde{\mathcal{N}}_i^-[k] = \mathcal{N}_i^-[k] \cup \{a \in\mathcal{S}: a=\mathds{I}(j,i), j \in\mathcal{N}_i^-[k] \}, \label{aug_nei_set}
\end{align}
which contains the set of in-neighbors~$\mathcal{N}_i^-[k]$ and the set of virtual nodes, from which node~$i$ receives a packet at instant~$k$. Note that the definition of~$\widetilde{\mathcal{N}}_i^-[k]$ in~\eqref{aug_nei_set} implies that node~$i$ receives a packet from node~$l$ at instant~$k$ if node~$i$ receives a packet from node~$j$ at instant~$k$.

If we let node~$l \in\mathcal{S}$ execute the following iterations:
\begin{subequations}\label{virtual_nodes}
\begin{align}
\lambda_l[k+1] &= \left\{ \begin{array}{l l}\lambda_l[k]+\frac{\lambda_j[k]}{d_j^+}& j\notin\widetilde{\mathcal{N}}_i^-[k],\\(1-\gamma)\lambda_l[k] + (1-\gamma)\frac{\lambda_j[k]}{d_j^+}& \mbox{otherwise,}\end{array}\right.\\
v_l[k+1] &= \left\{ \begin{array}{l l}v_l[k]+\frac{v_j[k]}{d_j^+}& j\notin\widetilde{\mathcal{N}}_i^-[k],\\(1-\gamma)v_l[k] + (1-\gamma)\frac{v_j[k]}{d_j^+}& \mbox{otherwise,}\end{array}\right.\label{virtual_v}\\
y_l[k+1] &= \left\{ \begin{array}{l l}y_l[k]+\frac{y_j[k]}{d_j^+}& j\notin\widetilde{\mathcal{N}}_i^-[k],\\(1-\gamma)y_l[k] + (1-\gamma)\frac{y_j[k]}{d_j^+}& \mbox{otherwise,}\end{array}\right.
\end{align}
where $\lambda_l[0]=0$, $v_l[0]=0$, and $y_l[0]=0$,
\end{subequations}
then, it is not difficult to see that the node~$i$'s updates in~\eqref{robust_primal_dual_alg} are equivalent to the following iterations:
\begin{subequations}\label{robust_primal_dual_alg3}
\begin{align}
p_i[k+1]&=\Big[p_i[k]-sf^\prime_i(p_i[k])+s\xi x_i[k]\Big]_{\underline{p}_i}^{\overline{p}_i},\\
\lambda_i[k+1] &= \frac{\lambda_i[k]-sy_i[k]}{d_i^+} + \sum_{a\in\widetilde{\mathcal{N}}_i^-[k] }\gamma\frac{\lambda_a[k]-sy_a[k]}{d_a^+},\\
v_i[k+1]&=\frac{v_{i}[k]}{d_i^+}+\sum_{a\in\widetilde{\mathcal{N}}_i^-[k]}\gamma\frac{v_{a}[k]}{d_a^+},\\
x_i[k+1]&=\frac{\lambda_i[k+1]}{v_i[k+1]},\\
y_i[k+1]&=\frac{y_{i}[k]}{d_i^+}+ \sum_{a\in\widetilde{\mathcal{N}}_i^-[k]}\gamma\frac{y_a[k]}{d_a^+}+\hat{n}(p_i[k+1]-p_i[k]),
\end{align}
\end{subequations}
where $d_a^+ \coloneqq 1$, $a \in \mathcal{S}$.
Now, we define $N\coloneqq n+|\mathcal{E}^{(0)}|$, and $\tilde{P}[k]\in\mathds{R}^{N\times N}$ such that
\begin{align*}
\tilde{P}_{ij}[k] &\coloneqq \left\{\begin{array}{l l} \frac{\gamma}{d_j^+} & \mbox{if }i\in\mathcal{V}, j\in\widetilde{\mathcal{N}}_i^-[k],\\
\frac{1-\gamma}{d_j^+} & \mbox{if }i\in\mathcal{S}, \mathds{I}(j,l)=i,j\in\widetilde{\mathcal{N}}_l^-[k],\\
\frac{1}{d_j^+} & \mbox{if }i\in\mathcal{S},\mathds{I}(j,l)=i, j\notin\widetilde{\mathcal{N}}_l^-[k],\\
0 & \mbox{else},\end{array}\right.\\
\tilde{P}_{ii}[k] &\coloneqq \left\{\begin{array}{l l}\frac{1}{d_i^+} & \mbox{if }i\in\mathcal{V},\\
\frac{1-\gamma}{d_i^+} & \mbox{if }i\in\mathcal{S},\mathds{I}(j,l)=i,j\in\widetilde{\mathcal{N}}_l^-[k],\\
\frac{1}{d_i^+} & \mbox{if }i\in\mathcal{S},\mathds{I}(j,l)=i,j\notin\widetilde{\mathcal{N}}_l^-[k].
\end{array}\right.
\end{align*}
Note that $\tilde{P}[k]$ is column stochastic.
Furthermore, for $i=1,\dots,N$, we have that
\begin{align}\label{P}
\tilde{P}_{ij}[k]&\geq \min(\gamma,1-\gamma)\min_{j\in\mathcal{V}\cup\mathcal{S}}\frac{1}{d_j^+}\nonumber\\&\geq \min(\gamma,1-\gamma)/n \coloneqq \tau, \mbox{ }j\in\widetilde{\mathcal{N}}_i^-[k]\cup\{i\}, \forall k,
\end{align}
where we used the fact that $d_j^+\leq n$, $\forall j$. This, in particular, implies that all diagonal entries in~$\tilde{P}[k]$ are always strictly positive.
For further development, we establish the following result using the analysis from the proof of \cite[Lemma~4]{Nedic15}. However, there are some subtle differences due to the fact that $v_i[0]=0$, for $i\in\mathcal{S}$. We recall that $v_i[0]=1$, for $i\in\mathcal{V}$.
\begin{lemma}\label{lem:v}
For $i=1,2,\dots,N$, we have that 
\begin{align}
v_i[k]\geq \frac{1-\gamma}{n}\tau^{N(2B-1)}, \mbox{ } \forall k\geq 1.\label{lem_v}
\end{align}
\end{lemma}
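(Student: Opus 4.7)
The plan is to separately handle an initial transient, where the nonstandard initial condition $v_i[0]=0$ at the virtual nodes plays a role, and the long-time regime, which is governed by the standard mixing bound for products of column-stochastic matrices.

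First, I would record two structural facts about the dynamics $v[k+1]=\tilde{P}[k]v[k]$. Since $\tilde{P}[k]$ is column stochastic, $\mathbf{1}^\top v[k]=\mathbf{1}^\top v[0]=n$ for every $k\geq 0$. Moreover, every diagonal entry satisfies $\tilde{P}_{ii}[k]\geq\tau$: for $i\in\mathcal{V}$, $\tilde{P}_{ii}[k]=1/d_i^+\geq 1/n\geq\tau$, and for $i\in\mathcal{S}$, $\tilde{P}_{ii}[k]\in\{1-\gamma,1\}$, each of which is at least $\min(\gamma,1-\gamma)\geq\tau$.

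Next, I would establish a one-step lower bound $v_i[1]\geq(1-\gamma)/n$ for every $i\in\mathcal{V}\cup\mathcal{S}$. For $i\in\mathcal{V}$ this follows from the self-loop contribution $\tilde{P}_{ii}[0]v_i[0]=1/d_i^+\geq 1/n$. For a virtual node $l=\mathds{I}(j,i)\in\mathcal{S}$, although $v_l[0]=0$, the source actual node $j$ contributes through $\tilde{P}_{lj}[0]\in\{(1-\gamma)/d_j^+,\,1/d_j^+\}$ depending on whether the link $(j,i)$ is active at $k=0$; either way $v_l[1]\geq(1-\gamma)/d_j^+\geq(1-\gamma)/n$ because $v_j[0]=1$. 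This single-step analysis is precisely where the $(1-\gamma)$ factor in the final bound is introduced.

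Finally, I would split the analysis into two regimes. For $1\leq k\leq N(2B-1)$, keeping only the diagonal contribution through the chain of products gives $v_i[k]\geq\tilde{P}_{ii}[k-1]\cdots\tilde{P}_{ii}[1]\,v_i[1]\geq\tau^{k-1}(1-\gamma)/n\geq(1-\gamma)\tau^{N(2B-1)}/n$, since $\tau\leq 1$. For $k>N(2B-1)$, I would invoke the standard result, analogous to \cite[Lemma~4]{Nedic15}: under Assumption~\ref{assume_comm_model1} together with the uniform lower bound $\tau$ on the positive entries of $\tilde{P}[k]$ established in \eqref{P}, every entry of the backward product $\tilde{P}[k-1]\tilde{P}[k-2]\cdots\tilde{P}[k-N(2B-1)]$ is at least $\tau^{N(2B-1)}$. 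Applying this product to $v[k-N(2B-1)]$ and using $\mathbf{1}^\top v[k-N(2B-1)]=n$ gives $v_i[k]\geq n\,\tau^{N(2B-1)}\geq(1-\gamma)\tau^{N(2B-1)}/n$, completing the proof.

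The main obstacle is the nonstandard initial condition: because $v_l[0]=0$ at every virtual node, one cannot directly apply the mixing lemma starting at $k=0$ to all components of $v$, as the proof of \cite[Lemma~4]{Nedic15} does. This forces the transient analysis through the $k=0\to k=1$ step, from which the $(1-\gamma)$ factor emerges. A secondary technical point is verifying that the $B$-strong connectivity of $\mathcal{G}^{(0)}$ from Assumption~\ref{assume_comm_model1} yields uniform positivity of the product matrix on the augmented $N$-node graph within $N(2B-1)$ steps; this relies on the specific structure of the virtual nodes, which have only one in-neighbor (the source actual node) and deliver to only one out-neighbor (the sink actual node), so that the effective augmented diameter is at most doubled.
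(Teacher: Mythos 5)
Your transient analysis (the one-step bound $v_i[1]\geq(1-\gamma)/n$ followed by the diagonal chain for $1\leq k\leq N(2B-1)$) and your identification of where the $(1-\gamma)$ factor enters are sound and close in spirit to the paper's argument, which likewise treats each virtual node through the always-present edge from its source actual node. The problem is your long-time regime. You claim that every entry of the windowed backward product $\tilde{P}[k-1]\cdots\tilde{P}[k-N(2B-1)]$ is at least $\tau^{N(2B-1)}$ and then sum against $\mathbf{1}^\top\mathbf{v}[k-N(2B-1)]=n$. That uniform positivity does not follow from Assumption~\ref{assume_comm_model1}: a virtual node $l=\mathds{I}(j,i)$ has a non-self-loop outgoing edge only at instants when the physical link $(j,i)$ is active, and Assumption~\ref{assume_comm_model1} guarantees only that the union of \emph{active} links over each window is strongly connected, not that every nominal link of $\mathcal{E}^{(0)}$ fires in every window. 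If a nominal link never activates during the window, the corresponding column of the product has all off-diagonal entries equal to zero, so the augmented union graph is not strongly connected and the mixing bound of \cite[Lemma~2]{Nedic09} cannot be applied to all entries. This matters precisely because at time $k-N(2B-1)>0$ the vector $\mathbf{v}$ carries positive mass on the virtual nodes, so those columns cannot be discarded; and if you restrict the sum to the actual columns, the conservation law $\mathbf{1}^\top\mathbf{v}=n$ no longer helps, since the mass residing on the actual nodes at an intermediate time has no obvious uniform lower bound (it can leak into, and be trapped by, virtual nodes whose links rarely fire).

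The paper sidesteps this by anchoring the product at time $0$, where all the mass sits on the actual nodes ($v_j[0]=1$ for $j\in\mathcal{V}$, $v_j[0]=0$ for $j\in\mathcal{S}$): then only the actual-to-actual entries of $\tilde{P}[k]\cdots\tilde{P}[0]$ are needed, and those are uniformly positive because actual-to-actual connectivity of the augmented chain is inherited directly from Assumption~\ref{assume_comm_model1} (an active in-neighbor contributes weight $\gamma/d_j^+\geq\tau$ and all relevant diagonal entries are at least $\tau$). This gives $v_i[k]\geq\tau^{N(2B-1)}$ for all $i\in\mathcal{V}$ and all $k$, and the virtual nodes are then handled by the single-step inequality $v_l[k+1]\geq\frac{1-\gamma}{d_j^+}v_j[k]$, which is exactly your $k=0\to k=1$ computation repeated at every step. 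Your proof becomes correct if you replace the windowed mixing step with this anchored-at-zero version and push the virtual nodes through the one-step bound at every $k$ rather than only at $k=1$.
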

\begin{proof}
Since $\tilde{P}_{ii}[k] = 1/d_i^+$, $\forall i\in\mathcal{V}$, and $d_i^+\leq n$, we have that $\tilde{P}_{ii}[k]\geq 1/n$, $\forall i\in\mathcal{V}$ and $k\geq 0$. Hence,
\begin{align*}
(\tilde{P}[k+1]\dots \tilde{P}[0])_{ii}\geq \frac{1}{n}(\tilde{P}[k]\dots \tilde{P}[0])_{ii},
\end{align*}
for $i=1,\dots,n$.
Because $\tau<1/n$, it becomes clear that when $1\leq k\leq N(2B-1)$,  
\begin{align}\label{lem_v_ineq3}
(\tilde{P}[k-1]\dots \tilde{P}[0]\mathbf{v}[0])_i&\geq \tilde{P}_{ii}[k-1]\dots\tilde{P}_{ii}[0]\nonumber\\&\geq 1/n^{N(2B-1)}> \tau^{N(2B-1)}, 
\end{align}
for all $i\in\mathcal{V}$, where $\mathbf{v}[t] \coloneqq [v_1[t],v_2[t],\dots,v_{N}[t]]^\top$. 
We recall that, by \eqref{P}, we have that $\tilde{P}_{ij}[k]\geq \tau$, $i=1,\dots, N$, $\forall j\in\widetilde{\mathcal{N}}_i^-[k]\cup\{i\}$, $\forall k$.
Then, as shown in \cite[Lemma~2]{Nedic09}, for $t\geq(N-1)(2B-1)$, we have that 
\begin{align}\label{lem_v_ineq4}
(\tilde{P}[t-1]\dots \tilde{P}[0])_{ij}\geq\tau^{(N-1)(2B-1)}, \mbox{ }\forall i,j.
\end{align}
By combining \eqref{lem_v_ineq3} and \eqref{lem_v_ineq4} and using the fact that $v_i[0]=1$, for $i=1,\dots,n$, we find that
\begin{align}
v_i[k+1] = (\tilde{P}[k]\dots \tilde{P}[0]\mathbf{v}[0])_i\geq \tau^{N(2B-1)}, \mbox{ }\forall k\geq 0.\label{lem_v_ineq1}
\end{align}
Now, consider a virtual node~$l\in\mathcal{S}$ such that $\mathds{I}(i,j) = l$ for some $i,j \in\mathcal{V}$. Noticing that $i \in \widetilde{\mathcal{N}}_l^-[k]$, $\forall k\geq0$, and $d_i^+\leq n$, for $l=n+1,\dots,N$, we have from \eqref{virtual_v} that
\begin{align}
v_l[k+1]\geq \frac{1-\gamma}{d_i^+}v_i[k] \geq \frac{1-\gamma}{n}\tau^{N(2B-1)}, \mbox{ }\forall k\geq 0. \label{lem_v_ineq2}
\end{align}
Combining \eqref{lem_v_ineq1} and \eqref{lem_v_ineq2} yields \eqref{lem_v}.
\end{proof}
Next, we define additional virtual variables maintained by the virtual nodes. For $i\in\mathcal{S}$, we define
\begin{align*} 
x_i[k]&\coloneqq\left\{\begin{matrix*}[l] \frac{\lambda_i[k]}{v_i[k]}, & \mbox{if }k>0,\\ 0, & \mbox{if }k=0.\end{matrix*}\right.
\end{align*} 
We let $p_i[k]$ denote the iterate for the produced power at the virtual node~$i\in\mathcal{S}$ at instant~$k$, $f_i(p_i)\coloneqq mp_i^2$ denote the cost function, $\underline{p}_i=\overline{p}_i=0$ the capacity constraints, and $\ell_i = 0$ the consumed power. Since $\underline{p}_i=\overline{p}_i=0$, we have that $p_i[k]=0$, for all $k\geq0$. Since $p_i[k]=0$, $\forall k$, and $\ell_i = 0$, these virtual variables do not have any effect on the solution of the considered problem, and are only needed for describing the feedback system and allowing us to re-use the convergence results from Section~\ref{subsec:conv_analysis_dd}.

Next, we let
\begin{align*}
\mathbf{x}[k] &= [x_1[k],x_2[k],\dots,x_{N}[k]]^\top,\\
\bm{\lambda}[k] &= [\lambda_1[k],\lambda_2[k],\dots,\lambda_{N}[k]]^\top,\\
\mathbf{y}[k] &= [y_1[k],y_2[k],\dots,y_{N}[k]]^\top,\\
\mathbf{p}[k] &= [p_1[k],p_2[k],\dots,p_{N}[k]]^\top,\\
\bm{\ell} &= [\ell_1,\ell_2,\dots,\ell_{N}]^\top,\\
\mathbf{f}(\mathbf{p}[k]) &=  [f_1(p_1[k]),f_2(p_2[k]),\dots,f_{N}(p_{N}[k])]^\top,\\
\mathbf{\underline{p}} &= [\underline{p}_1,\dots,\underline{p}_{N}]^\top,\mathbf{\overline{p}} = [\overline{p}_1,\dots,\overline{p}_{N}]^\top,\\
v[k] &= [v_1[k],v_2[k],\dots,v_n[k]]^\top,\\ 
V[k] &\coloneqq \diag(v[k]),\tilde{V}[k] \coloneqq \diag(\mathbf{v}[k]).
\end{align*} 
Noticing that, by Lemma~\ref{lem:v}, $\tilde{V}[k]$ is invertible for all $k\geq 1$, we define
\begin{align} 
\mathbf{h}[k]\coloneqq \left\{\begin{matrix}(\tilde{V}[k])^{-1}\mathbf{y}[k] & \mbox{if }k>0,\\ \begin{bmatrix}(V[k])^{-1}y[k]\\\mathbf{0}_{N-n}\end{bmatrix}& \mbox{if }k=0,\end{matrix}\right.\label{def_h}
\end{align} 
where $y[k] = [y_1[k],y_2[k],\dots,y_n[k]]^\top$. [Instead of defining $\mathbf{h}[k]$ as $(\tilde{V}[k])^{-1}\mathbf{y}[k]$, we adopt the definition in \eqref{def_h}, because $\tilde{V}[k]$ is not invertible at $k=0$.]
Let $\tilde{R}[k]\coloneqq (\tilde{V}[k+1])^{-1}\tilde{P}[k]\tilde{V}[k]$, $I_n$ denote the $n\times n$ identity matrix, $\mathbf{0}_{a\times b}$ denote the $a\times b$ all-zeros matrix, and \[\tilde{I} = \begin{bmatrix}I_n & \mathbf{0}_{n\times (N-n)}\\ \mathbf{0}_{(N-n)\times n} & \mathbf{0}_{(N-n)\times (N-n)}\end{bmatrix}.\] 

Let $\mathbf{\overline{x}}[k] \coloneqq \mathbf{1}^\top\mathbf{x}[k]/N$, $\mathbf{\tilde{x}}[k]=\mathbf{x}[k]-\mathbf{1}\mathbf{\overline{x}}[k]$, and $\mathbf{\hat{x}}[k] \coloneqq \mathbf{1}^\top\mathbf{x}[k]/\hat{n}$. Then, by substituting $({\hat{n}}/{N})\mathbf{\hat{x}}[k]+\mathbf{\tilde{x}}[k]$ for $\mathbf{x}[k]$, $\tilde{V}[k]\mathbf{x}[k]$ for $\bm{\lambda}[k]$ and $(\tilde{V}[k+1])^{-1}\tilde{P}[k]\tilde{V}[k]$ for $\tilde{R}[k]$, and by using~\eqref{virtual_nodes}--\eqref{robust_primal_dual_alg3} and \eqref{def_h}, we obtain that
\begin{subequations}\label{robust_primal_dual_alg_vector}
\begin{align}
\mathbf{p}[k+1]&=\Big[\mathbf{p}[k]-s\nabla \mathbf{f}(\mathbf{p}[k])+s\xi\frac{\hat{n}}{N}\mathbf{\hat{x}}[k]\mathbf{1}+s\xi \mathbf{\tilde{x}}[k]\Big]_{\mathbf{\underline{p}}}^{\mathbf{\overline{p}}},\label{p_update_r}\\
\mathbf{x}[k+1]&=(\tilde{V}[k+1])^{-1}\big(\tilde{P}[k]\tilde{V}[k]\mathbf{x}[k]-s\tilde{I}\tilde{P}[k]\mathbf{y}[k]\big) \nonumber\\
&=(\tilde{V}[k+1])^{-1}\tilde{P}[k]\tilde{V}[k]\mathbf{x}[k]\nonumber\\&\quad-s\tilde{I}(\tilde{V}[k+1])^{-1}\tilde{P}[k]\mathbf{y}[k] \nonumber\\
&=(\tilde{V}[k+1])^{-1}\tilde{P}[k]\tilde{V}[k]\mathbf{x}[k]\nonumber\\&\quad-s\tilde{I}(\tilde{V}[k+1])^{-1}\tilde{P}[k]\tilde{V}[k]\mathbf{h}[k] \nonumber\\
&= \tilde{R}[k]\mathbf{x}[k]-s\tilde{I}\tilde{R}[k]\mathbf{h}[k],\label{x_update_r}\\
\mathbf{h}[k+1]&=(\tilde{V}[k+1])^{-1}\mathbf{y}[k+1]\nonumber\\
&=(\tilde{V}[k+1])^{-1}\big(\tilde{P}[k]\mathbf{y}[k]+\hat{n}(\mathbf{p}[k+1]-\mathbf{p}[k])\big)\nonumber\\
&=(\tilde{V}[k+1])^{-1}\big(\tilde{P}[k]\tilde{V}[k]\mathbf{h}[k]+\hat{n}(\mathbf{p}[k+1]-\mathbf{p}[k])\big)\nonumber\\&=\tilde{R}[k]\mathbf{h}[k]+\hat{n}(\tilde{V}[k+1])^{-1}(\mathbf{p}[k+1]-\mathbf{p}[k]).\label{h_update_r}
\end{align}
\end{subequations}
Let 
\begin{align}\label{h_def}
\mathbf{\overline{h}}[k] \coloneqq \frac{\mathbf{1}^\top\mathbf{y}[k]}{\mathbf{1}^\top\mathbf{v}[k]} = \frac{\hat{n}}{n}\mathbf{1}^\top(\mathbf{p}[k]-\ell).
\end{align}
Since $\tilde{R}[k]$ is row-stochastic \cite{Nedic17}, the following relations hold:
\begin{align}
\mathbf{\hat{x}}[k]&=\frac{1}{\hat{n}}\mathbf{1}^\top\mathbf{x}[k] =\frac{1}{\hat{n}}\mathbf{1}^\top\tilde{R}[k]\mathbf{1}\mathbf{\overline{x}}[k],\label{robust_eq1}\\
\mathbf{\overline{h}}[k]&=\frac{1}{n}\mathbf{1}^\top\tilde{I}\tilde{R}[k]\mathbf{1}\mathbf{\overline{h}}[k].
\label{robust_eq2}
\end{align}
By using \eqref{h_def}, \eqref{robust_eq1} and \eqref{robust_eq2}, we find from \eqref{x_update_r} that
\begin{align}
\mathbf{\hat{x}}[k+1]&=\frac{1}{\hat{n}}\mathbf{1}^\top\mathbf{x}[k+1]=\frac{1}{\hat{n}}\mathbf{1}^\top\tilde{R}[k]\mathbf{x}[k]-s\frac{1}{\hat{n}}\mathbf{1}^\top\tilde{I}\tilde{R}[k]\mathbf{h}[k]\nonumber\\
&= \mathbf{\hat{x}}[k] - s\mathbf{1}^\top(\mathbf{p}[k]-\ell)+ \Big(\frac{1}{\hat{n}}\mathbf{1}^\top\tilde{R}[k]\mathbf{x}[k]-\mathbf{\hat{x}}[k]\Big)\nonumber\\&\quad-s\Big(\frac{1}{\hat{n}}\mathbf{1}^\top\tilde{I}\tilde{R}[k]\mathbf{h}[k]-\frac{n}{\hat{n}}\mathbf{\overline{h}}[k]\Big)\nonumber\\
&= \mathbf{\hat{x}}[k] - s\mathbf{1}^\top(\mathbf{p}[k]-\ell) + \frac{1}{\hat{n}}\mathbf{1}^\top\tilde{R}[k]\big(\mathbf{x}[k]-\mathbf{1}\mathbf{\overline{x}}[k]\big)\nonumber\\&\quad-s\frac{1}{\hat{n}}\mathbf{1}^\top\tilde{I}\tilde{R}[k]\Big(\mathbf{h}[k]-\mathbf{1}\mathbf{\overline{h}}[k]\Big),\\
&= \mathbf{\hat{x}}[k] - s\mathbf{1}^\top(\mathbf{p}[k]-\ell) + \frac{1}{\hat{n}}\mathbf{1}^\top\tilde{R}[k]\mathbf{\tilde{x}}[k]\nonumber\\&\quad-s\frac{1}{\hat{n}}\mathbf{1}^\top\tilde{I}\tilde{R}[k]\mathbf{\tilde{h}}[k],
\label{ave_x_r}
\end{align}
where $\mathbf{\tilde{h}}[k]\coloneqq\mathbf{h}[k]-\mathbf{1}\mathbf{\overline{h}}[k]$.
Then, we use~\eqref{p_update_r} and~\eqref{ave_x_r} to determine the nominal system, $\vec{\mathcal{H}}_1^r$, as follows:
\begin{subequations}\label{H1_dd_r}
\begin{empheq}[left={\vec{\mathcal{H}}_1^r:\empheqlbrace}]{alignat=4}
    \mathbf{p}[k+1]&=\mbox{}&&\Big[\mathbf{p}[k]-s\nabla \mathbf{f}(\mathbf{p}[k])\nonumber\\&&&+s\xi\frac{\hat{n}}{N}\mathbf{\hat{x}}[k]\mathbf{1}+s\xi \mathbf{\tilde{x}}[k]\Big]_{\mathbf{\underline{p}}}^{\mathbf{\overline{p}}},\\
    \mathbf{\hat{x}}[k+1] &=\mbox{}&& \mathbf{\hat{x}}[k] - s\mathbf{1}^\top(\mathbf{p}[k]-\ell)\nonumber\\&&& + \frac{1}{\hat{n}}\mathbf{1}^\top\tilde{R}[k]\mathbf{\tilde{x}}[k]\nonumber\\&&&-s\frac{1}{\hat{n}}\mathbf{1}^\top\tilde{I}\tilde{R}[k]\mathbf{\tilde{h}}[k].
\end{empheq}
\end{subequations} 
Now, we use~\eqref{x_update_r} and~\eqref{h_update_r} to determine the disturbance system, $\vec{\mathcal{H}}_2$, as follows: 
\begin{subequations}\label{H2_dd_r}
\begin{empheq}[left={\vec{\mathcal{H}}_2^r:\empheqlbrace}]{alignat=4}
\mathbf{h}[k]&=\mbox{}&&\tilde{R}[k-1]\mathbf{h}[k-1]\nonumber\\&&&+\hat{n}(\tilde{V}[k])^{-1}(\mathbf{p}[k]-\mathbf{p}[k-1]),\label{H2_dd_r_1}\\
\mathbf{x}[k+1]&=\mbox{}&&\tilde{R}[k]\mathbf{x}[k]-s\tilde{I}\tilde{R}[k]\mathbf{h}[k],\label{H2_dd_r_2}\\
\mathbf{e}[k]&=\mbox{}&&[\mathbf{\tilde{x}}[k]^\top,\mathbf{\tilde{h}}[k]^\top]^\top.
\end{empheq}
\end{subequations} 
Then, as illustrated in Fig.~\ref{fig:control_interpret_dd_r}, algorithm~\eqref{robust_primal_dual_alg}--\eqref{runsum} can be viewed as a feedback interconnection of~$\vec{\mathcal{H}}_1^r$ and $\vec{\mathcal{H}}_2^r$, where $(\mathbf{p^*},\mathbf{x^*})$ is the equilibrium of~\eqref{H1_dd_r} when $\mathbf{e}[k]\equiv 0$, $\forall k$.
\begin{figure}
    \centering 
	\includegraphics[trim=0cm 0cm 0cm 0cm, clip=true, scale=0.6]{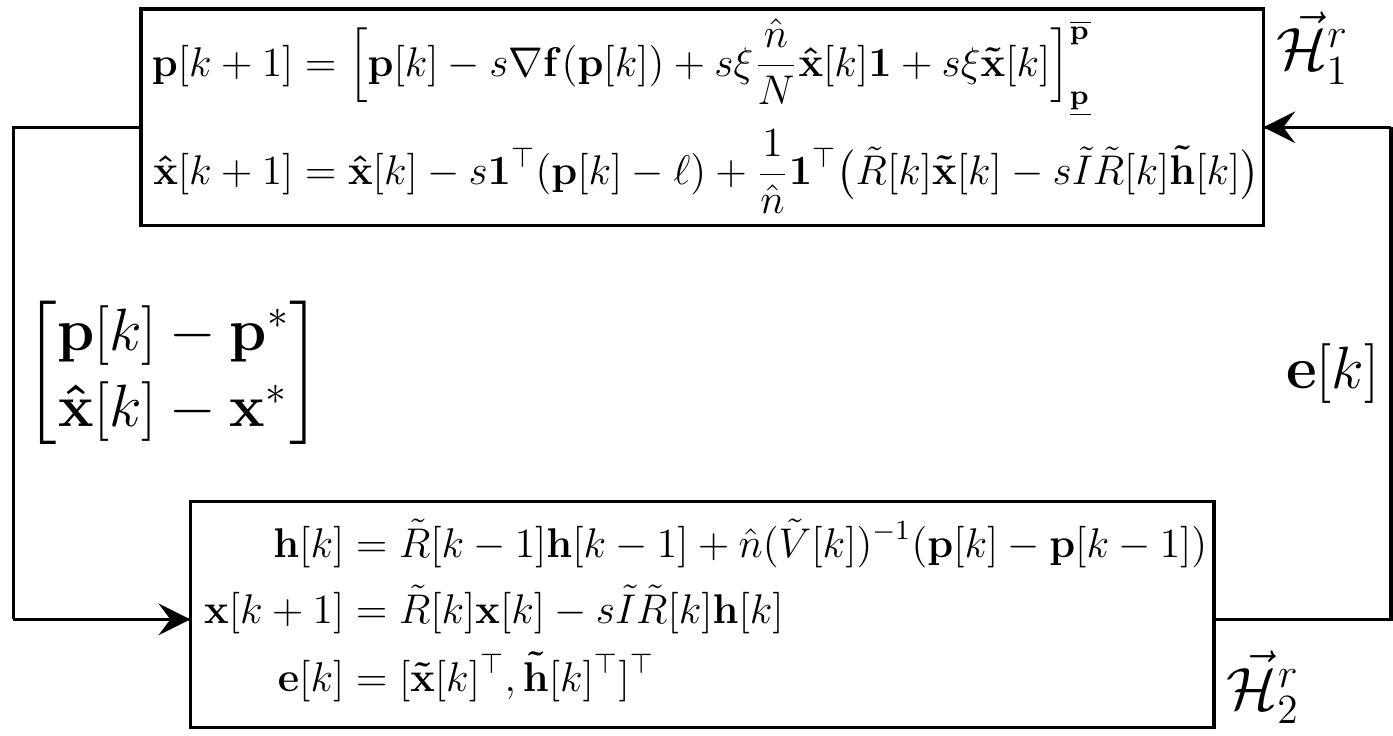} 
    \caption{Algorithm~\eqref{robust_primal_dual_alg}--\eqref{runsum} as a feedback system in the virtual domain.}
    \vspace{-5pt} 
    \label{fig:control_interpret_dd_r}
\end{figure}

\subsection{Convergence Analysis}
To establish the convergence results for algorithm~\eqref{robust_primal_dual_alg}--\eqref{runsum}, we show that $\vec{\mathcal{H}}_1^r$ and $\vec{\mathcal{H}}_2^r$ are finite-gain stable. This allows us to apply the small-gain theorem to prove that algorithm \eqref{robust_primal_dual_alg}--\eqref{runsum} converges to an optimal solution geometrically fast.

For our analysis, we need the following result, where we recall that $\tilde{R}[k]= (\tilde{V}[k+1])^{-1}\tilde{P}[k]\tilde{V}[k]$.
\begin{lemma}\label{lem:R}
For $i=1,\dots,N$, we have that
\begin{align}
\tilde{R}_{ij}[k]\geq \frac{1-\gamma}{n^2}\tau^{N(2B-1)+1}, \mbox{ }\forall j\in\widetilde{\mathcal{N}}_i^-[k]\cup\{i\}, k\geq 1.
\label{R_ineq}
\end{align}
\end{lemma}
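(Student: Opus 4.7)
The plan is to unfold the definition $\tilde{R}[k] = (\tilde{V}[k+1])^{-1}\tilde{P}[k]\tilde{V}[k]$ entrywise, so that for $j \in \widetilde{\mathcal{N}}_i^-[k] \cup \{i\}$ we have
\[
\tilde{R}_{ij}[k] \;=\; \frac{v_j[k]}{v_i[k+1]}\,\tilde{P}_{ij}[k],
\]
and then bound each of the three factors separately. Once all three bounds are available, the claimed inequality follows by multiplication.

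First I would bound $\tilde{P}_{ij}[k]$ from below. Exactly this bound is already recorded in~\eqref{P}: for every $i\in\{1,\dots,N\}$ and every $j\in\widetilde{\mathcal{N}}_i^-[k]\cup\{i\}$, one has $\tilde{P}_{ij}[k]\geq \tau$, where $\tau = \min(\gamma,1-\gamma)/n$. Next I would bound $v_j[k]$ from below using Lemma~\ref{lem:v}, which gives $v_j[k] \geq \frac{1-\gamma}{n}\,\tau^{N(2B-1)}$ for all $k\geq 1$ and all $j\in\{1,\dots,N\}$.

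The remaining ingredient is an upper bound on $v_i[k+1]$. This comes from column stochasticity of $\tilde{P}[k]$: since $\mathbf{v}[k+1] = \tilde{P}[k]\mathbf{v}[k]$, the total $\mathbf{1}^\top \mathbf{v}[k]$ is conserved across iterations, so
\[
\mathbf{1}^\top \mathbf{v}[k] \;=\; \mathbf{1}^\top \mathbf{v}[0] \;=\; n,
\]
where we used $v_i[0]=1$ for $i\in\mathcal{V}$ and $v_i[0]=0$ for $i\in\mathcal{S}$. Since all entries of $\mathbf{v}[k]$ are nonnegative, this immediately yields $v_i[k+1]\leq n$ for every $i$ and every $k\geq 0$.

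Putting the three bounds together gives
\[
\tilde{R}_{ij}[k] \;=\; \frac{v_j[k]}{v_i[k+1]}\,\tilde{P}_{ij}[k] \;\geq\; \frac{(1-\gamma)\,\tau^{N(2B-1)}/n}{n}\cdot\tau \;=\; \frac{1-\gamma}{n^2}\,\tau^{N(2B-1)+1},
\]
which is exactly~\eqref{R_ineq}. There is no real obstacle here; the only place requiring a moment of care is checking that the conservation argument $\mathbf{1}^\top\mathbf{v}[k]=n$ uses the correct initialization of the virtual nodes (so that the upper bound is $n$ and not $N$), and that Lemma~\ref{lem:v} applies to both the real indices $j\in\mathcal{V}$ and the virtual indices $j\in\mathcal{S}$, which it does by construction.
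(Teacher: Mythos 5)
Your proof is correct and follows essentially the same route as the paper's: factor $\tilde{R}_{ij}[k]=\tilde{P}_{ij}[k]v_j[k]/v_i[k+1]$, lower-bound $\tilde{P}_{ij}[k]$ by $\tau$ via~\eqref{P}, lower-bound $v_j[k]$ via Lemma~\ref{lem:v}, and upper-bound $v_i[k+1]$ by $n$ using $\mathbf{1}^\top\mathbf{v}[k]=n$. Your extra remark justifying the conservation of $\mathbf{1}^\top\mathbf{v}[k]$ from column stochasticity and the initialization is a welcome (if minor) elaboration of a step the paper states without comment.
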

\begin{proof}
From the definition of~$\tilde{R}[k]$, we have that $\tilde{R}_{ij}[k] = \tilde{P}_{ij}[k]v_j[k]/v_i[k+1]$. From Lemma~\ref{lem:v}, we have that \[v_j[k]\geq \frac{1-\gamma}{n}\tau^{N(2B-1)}, k\geq 1.\]
Since $\mathbf{1}^\top v[t]=n$, $\forall t\geq0$, it follows that $v_i[k+1]\leq n$.
We recall that, by \eqref{P}, $\tilde{P}_{ij}[k]\geq \tau$, $i=1,\dots, N$, $\forall j\in\widetilde{\mathcal{N}}_i^-[k]\cup\{i\}$, $\forall k$.
Hence,\[\tilde{R}_{ij}[k] \geq \frac{1-\gamma}{n^2}\tau^{N(2B-1)+1},\] $i=1,\dots, N$, $\forall j\in\widetilde{\mathcal{N}}_i^-[k]\cup\{i\}$, $\forall k$, yielding~\eqref{R_ineq}.
\end{proof}
By using \eqref{P} and the result in Lemma~\ref{lem:R}, the following lemmata can be established by borrowing much of the analysis from the proofs of \cite[Lemmas~15--16]{Nedic17}.
\begin{lemma}\label{lem:h_to_z_r}
\begin{align}
\|\mathbf{\tilde{h}}\|_2^{a,K} \leq \gamma_1\|\bm{\delta}\|_2^{a,K}+\gamma_2,
\label{h_to_z_r}
\end{align}
where $\bm{\delta}[k]\coloneqq \hat{n}(\mathbf{p}[k]-\mathbf{p}[k-1])$, for some~$\gamma_1$ and $\gamma_2$.
\end{lemma}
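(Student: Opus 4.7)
The plan is to adapt the argument of \cite[Lemma~15]{Nedic17} to the augmented (virtual-node) system. First, I would derive a self-contained linear recursion for $\mathbf{\tilde{h}}[k]$. Combining $\mathbf{h}[k]=\tilde{R}[k-1]\mathbf{h}[k-1]+(\tilde{V}[k])^{-1}\bm{\delta}[k]$ from \eqref{H2_dd_r_1} with $\mathbf{\overline{h}}[k]=\mathbf{\overline{h}}[k-1]+\tfrac{1}{n}\mathbf{1}^\top\bm{\delta}[k]$ (obtained by writing $\mathbf{\overline{h}}[k]=\mathbf{v}[k]^\top\mathbf{h}[k]/n$ and using the column-stochasticity of $\tilde{P}[k-1]$) together with the row-stochasticity $\tilde{R}[k-1]\mathbf{1}=\mathbf{1}$ yields
\[
\mathbf{\tilde{h}}[k]=\tilde{R}[k-1]\mathbf{\tilde{h}}[k-1]+M[k]\bm{\delta}[k],\quad M[k]\coloneqq(\tilde{V}[k])^{-1}-\tfrac{1}{n}\mathbf{1}\mathbf{1}^\top.
\]
Unrolling gives $\mathbf{\tilde{h}}[k]=\Phi(k,0)\mathbf{\tilde{h}}[0]+\sum_{t=1}^{k}\Phi(k,t)M[t]\bm{\delta}[t]$ with $\Phi(k,t)\coloneqq\tilde{R}[k-1]\cdots\tilde{R}[t]$ and $\Phi(k,k)\coloneqq I$. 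Lemma~\ref{lem:v} furnishes a uniform bound $\|M[k]\|_2\leq C_M$ for $k\geq 1$.

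Next, I would invoke the geometric weak ergodicity of backward products of row-stochastic matrices under $B$-step joint connectivity. Assumption~\ref{assume_comm_model1} together with the uniform positive lower bound from Lemma~\ref{lem:R} on the non-zero entries of $\tilde{R}[k]$ provides constants $C>0$ and $\lambda\in(0,1)$ such that
\[
\|\Phi(k,t)-\mathbf{1}\phi[t]^\top\|_2\leq C\lambda^{k-t},\qquad k\geq t\geq 0,
\]
where $\phi[t]\coloneqq\mathbf{v}[t]/n$ is identified as the absolute-probability sequence by verifying $\phi[k]^\top\tilde{R}[k-1]=\phi[k-1]^\top$ (a direct consequence of $\mathbf{v}[k]^\top(\tilde{V}[k])^{-1}=\mathbf{1}^\top$ and $\mathbf{1}^\top\tilde{P}[k-1]=\mathbf{1}^\top$). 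A short calculation then gives $\phi[t]^\top M[t]=0$ (using $\mathbf{v}[t]^\top(\tilde{V}[t])^{-1}=\mathbf{1}^\top$ and $\phi[t]^\top\mathbf{1}=1$) and $\phi[0]^\top\mathbf{\tilde{h}}[0]=0$, so the rank-one piece $\mathbf{1}\phi[t]^\top$ annihilates each contribution, leaving
\[
\|\Phi(k,t)M[t]\bm{\delta}[t]\|_2\leq CC_M\lambda^{k-t}\|\bm{\delta}[t]\|_2, \qquad \|\Phi(k,0)\mathbf{\tilde{h}}[0]\|_2\leq C\lambda^k\|\mathbf{\tilde{h}}[0]\|_2.
\]

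Substituting back, multiplying by $a^{-k}$ for any $a\in(\lambda,1)$, summing the resulting geometric series in $\lambda/a$, and taking $\max_{0\leq k\leq K}$ delivers \eqref{h_to_z_r} with explicit constants
\[
\gamma_1=\frac{CC_M}{1-\lambda/a},\qquad \gamma_2=C\|\mathbf{\tilde{h}}[0]\|_2.
\]
The principal obstacle is the weak-ergodicity step, i.e., extracting the geometric contraction rate $\lambda$ for products of non-identical, time-varying row-stochastic matrices. This relies crucially on the $B$-step joint connectivity (Assumption~\ref{assume_comm_model1}) and the uniform positive lower bound on the entries of $\tilde{R}[k]$ (Lemma~\ref{lem:R}); the argument can be ported essentially verbatim from \cite[Lemma~15]{Nedic17}, with only light additional bookkeeping to handle the virtual-node augmentation encoded by $\tilde{P}[k]$ and $\tilde{V}[k]$.
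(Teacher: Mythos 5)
Your proposal follows exactly the route the paper intends: the paper gives no standalone proof of this lemma and instead points to \cite[Lemma~15]{Nedic17}, to be combined with the entrywise lower bounds in \eqref{P} and Lemma~\ref{lem:R}, which is precisely the perturbed-consensus recursion, absolute-probability-sequence, and geometric weak-ergodicity argument you lay out (your identification of $\phi[t]=\mathbf{v}[t]/n$ and the cancellation $\phi[t]^\top M[t]=0$ are the correct bookkeeping for the virtual-node augmentation). The proposal is correct and essentially coincides with the paper's (deferred) proof.
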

\begin{lemma}\label{lem:x_to_h_r}
\begin{align}
\|\mathbf{\tilde{x}}\|_2^{a,K}\leq s\gamma_3\|\mathbf{\tilde{h}}\|_2^{a,K}+\gamma_4
\label{x_to_h_r}
\end{align}
for some~$\gamma_3$ and $\gamma_4$.
\end{lemma}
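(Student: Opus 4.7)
\textbf{Proof plan for Lemma \ref{lem:x_to_h_r}.} The plan is to derive a recursion for $\mathbf{\tilde{x}}[k]$ driven by $s\tilde{I}\tilde{R}[k]\mathbf{h}[k]$, exploit the row-stochasticity of $\tilde{R}[k]$ together with the uniform positivity from Lemma~\ref{lem:R} to obtain a geometric consensus contraction, and then split $\mathbf{h}[k]$ into its mean $\mathbf{1}\mathbf{\overline{h}}[k]$ plus the disagreement $\mathbf{\tilde{h}}[k]$ to separate the term proportional to $\|\mathbf{\tilde{h}}\|_2^{a,K}$ from a bounded constant.

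First I would verify that $\tilde{R}[k]$ is row-stochastic: since $\tilde{P}[k]\mathbf{v}[k]=\mathbf{v}[k+1]$ by column stochasticity of $\tilde{P}[k]$ and the update of $\mathbf{v}[k]$, we have $\tilde{R}[k]\mathbf{1}=(\tilde{V}[k+1])^{-1}\tilde{P}[k]\mathbf{v}[k]=\mathbf{1}$. Letting $\Pi\coloneqq I-\tfrac{1}{N}\mathbf{1}\mathbf{1}^\top$, apply $\Pi$ to both sides of \eqref{x_update_r} and use $\Pi\mathbf{1}=0$ and $\tilde{R}[k]\mathbf{1}=\mathbf{1}$ to obtain
\begin{equation*}
\mathbf{\tilde{x}}[k+1]=\Pi\tilde{R}[k]\mathbf{\tilde{x}}[k]-s\,\Pi\tilde{I}\tilde{R}[k]\mathbf{h}[k].
\end{equation*}
Iterating this recursion from the initial condition yields
\begin{equation*}
\mathbf{\tilde{x}}[k]=\Phi[k,0]\mathbf{\tilde{x}}[0]-s\sum_{t=0}^{k-1}\Phi[k,t+1]\,\Pi\tilde{I}\tilde{R}[t]\mathbf{h}[t],
\end{equation*}
where $\Phi[k,t]\coloneqq \Pi\tilde{R}[k-1]\cdots\Pi\tilde{R}[t]$ (and $\Phi[k,k]\coloneqq \Pi$).

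Next I would establish that the transition operator $\Phi[\cdot,\cdot]$ contracts geometrically. By Lemma~\ref{lem:R}, all entries of $\tilde{R}[k]$ along its augmented neighbor structure are bounded below by a positive constant independent of $k$, and by Assumption~\ref{assume_comm_model1} the unions of edge sets over consecutive windows of length $B$ keep the augmented graph strongly connected. Together with the lower bound $v_i[k]\ge (1-\gamma)\tau^{N(2B-1)}/n$ from Lemma~\ref{lem:v}, this is exactly the setting in which the standard mixing argument (cf.\ \cite[Lemma~2]{Nedic09} and \cite[Lemma~16]{Nedic17}) yields constants $C>0$ and $\rho\in(0,1)$ such that $\|\Phi[k,t]\|_2\le C\rho^{(k-t)/B}$ for all $k\ge t\ge 0$. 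I would invoke this result to bound $\|\mathbf{\tilde{x}}[k]\|_2$ by a geometric convolution of the perturbation sequence $\{\|\tilde{I}\tilde{R}[t]\mathbf{h}[t]\|_2\}$.

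Finally I would split $\mathbf{h}[t]=\mathbf{\tilde{h}}[t]+\mathbf{1}\mathbf{\overline{h}}[t]$ and note that $|\mathbf{\overline{h}}[t]|=\tfrac{\hat n}{n}|\mathbf{1}^\top(\mathbf{p}[t]-\ell)|$ is uniformly bounded by a constant $M$ determined by the box constraints $[\underline{\mathbf{p}},\overline{\mathbf{p}}]$ and $\ell$, because the projection in \eqref{p_update_r} keeps $\mathbf{p}[t]$ feasible. Using $\|\tilde{I}\tilde{R}[t]\|_2\le 1$ (rows of $\tilde{I}\tilde{R}[t]$ are either rows of $\tilde{R}[t]$, which are sub-stochastic after projection, or zero, giving an $\ell_\infty$ bound that combines with sparsity to give a uniform $\ell_2$ bound), I would obtain
\begin{equation*}
\|\mathbf{\tilde{x}}[k]\|_2\le C\rho^{k/B}\|\mathbf{\tilde{x}}[0]\|_2+sC\!\!\sum_{t=0}^{k-1}\!\rho^{(k-1-t)/B}\bigl(\|\mathbf{\tilde{h}}[t]\|_2+\sqrt{N}\,M\bigr).
\end{equation*}
Multiplying by $a^{-k}$, taking $\max_{0\le k\le K}$, and choosing $a\in(\rho^{1/B},1)$ so that the geometric series in $(\rho^{1/B}/a)^{k-t}$ converges, one collects a term proportional to $s\|\mathbf{\tilde{h}}\|_2^{a,K}$ plus a bounded constant absorbing the initial condition and the $\mathbf{\overline{h}}$-contribution, giving \eqref{x_to_h_r} with explicit $\gamma_3,\gamma_4$.

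\textbf{Main obstacle.} The crux is the uniform geometric contraction $\|\Phi[k,t]\|_2\le C\rho^{(k-t)/B}$. Individual $\tilde{R}[k]$ need not be contractions in the $2$-norm, and $\tilde{R}[k]$ is only a similarity transform of the column-stochastic $\tilde{P}[k]$, so the bounds on $C$ and $\rho$ depend delicately on the uniform lower bound for $v_i[k]$ from Lemma~\ref{lem:v} and the edge weight lower bound from Lemma~\ref{lem:R}. Tracking these constants and arguing that $\gamma_3$ does not blow up as $s\to 0$ (so that the small-gain condition $s\alpha_1\alpha_2<1$ remains achievable) is the main bookkeeping challenge; it is handled by appealing to the same ergodic-product estimates used in \cite[Lemma~16]{Nedic17}, which apply here because Lemma~\ref{lem:R} reproduces their hypothesis on the transformed weight matrix.
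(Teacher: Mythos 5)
Your overall route is the same as the paper's: the paper proves this lemma only by pointing to \cite[Lemma~16]{Nedic17}, whose hypotheses are supplied by~\eqref{P} and Lemma~\ref{lem:R}, and your skeleton---row-stochasticity of $\tilde{R}[k]$, the disagreement recursion obtained by applying $I-\tfrac{1}{N}\mathbf{1}\mathbf{1}^\top$ to~\eqref{H2_dd_r_2}, geometric decay of the ergodic products via the uniform entry lower bound and Assumption~\ref{assume_comm_model1}, then the weighted-norm bookkeeping---is exactly that machinery unpacked. The contraction estimate and the handling of the $\|\mathbf{\tilde{h}}\|_2^{a,K}$ term are fine (your claim $\|\tilde{I}\tilde{R}[t]\|_2\le 1$ is not literally correct, since the columns of $\tilde{R}[t]$ can have $1$-norm as large as $v_j[t]/\min_i v_i[t+1]$, but a uniform constant bound follows from Lemma~\ref{lem:v} and that is all you need).

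The genuine gap is your disposal of the mean component of the forcing term. First, unlike the plain directed case, here the mean part does \emph{not} cancel: with $\Pi\coloneqq I-\tfrac{1}{N}\mathbf{1}\mathbf{1}^\top$ one has $\Pi\tilde{I}\tilde{R}[k]\mathbf{1}\mathbf{\overline{h}}[k]=\mathbf{\overline{h}}[k]\,\Pi\tilde{I}\mathbf{1}\neq 0$ because $\tilde{I}\mathbf{1}$ is not proportional to $\mathbf{1}_N$ (whereas without $\tilde{I}$ one would get $\Pi\mathbf{1}=0$). Second, and more importantly, a persistent bounded forcing cannot be absorbed into a $K$-independent $\gamma_4$ under the norm $\|\cdot\|_2^{a,K}$: taking $t=k-1$ in your convolution already gives a contribution of order $sM$ to $\|\mathbf{\tilde{x}}[k]\|_2$ for every $k$, so $\max_{0\le k\le K}a^{-k}(\,\cdot\,)$ grows like $a^{-K}$ and your ``bounded constant absorbing \dots the $\mathbf{\overline{h}}$-contribution'' is unbounded in $K$. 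Only geometrically decaying terms (such as the initial-condition transient) may be folded into $\gamma_4$. The repair is to use~\eqref{h_def} together with $\mathbf{1}^\top\mathbf{p}^*=\mathbf{1}^\top\bm{\ell}$ to write $\mathbf{\overline{h}}[k]=\tfrac{\hat{n}}{n}\mathbf{1}^\top(\mathbf{p}[k]-\mathbf{p}^*)$, hence $|\mathbf{\overline{h}}[k]|\le \tfrac{\hat{n}}{n}\sqrt{N}\,\|\mathbf{z}[k]\|_2$, so that this contribution enters as an additional $s\cdot\mathrm{const}\cdot\|\mathbf{z}\|_2^{a,K}$ term. That still feeds correctly into the bound $\|\mathbf{e}\|_2^{a,K}\le s\alpha_2\|\mathbf{z}\|_2^{a,K}+\beta_2$ of Proposition~\ref{prop:H2_dd_r}, but it means the inequality you actually obtain is not~\eqref{x_to_h_r} as written; you must either carry the extra $\|\mathbf{z}\|_2^{a,K}$ term explicitly or justify why it is absent.
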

Following the analysis from the proofs of Propositions~\ref{prop:H1_dd}--\ref{prop:small_gain_dd} and using Lemmas~\ref{lem:h_to_z_r}--\ref{lem:x_to_h_r}, we can easily establish the following results.
\begin{proposition}\label{prop:H1_dd_r}
Let Assumption~\ref{objective_assumption} hold.
Then, under~\eqref{H1_dd_r}, we have that
\begin{align}
\textnormal{\textbf{R1. }} \|\mathbf{z}\|_2^{a,K}\leq \alpha_1\|\mathbf{e}\|_2^{a,K}+\beta_1,
\label{R1_dd_r}
\end{align}
for some positive~$\alpha_1$ and $\beta_1$, $a \in (0,1)$, sufficiently small~$s>0$, and $\forall\xi \in (0,\frac{n}{\hat{n}}]$,
where \[\mathbf{z}[k] \coloneqq \begin{bmatrix}\mathbf{p}[k]-\mathbf{p}^*\\ \mathbf{\hat{x}}[k]-\mathbf{x}^* \end{bmatrix}.\]
\end{proposition}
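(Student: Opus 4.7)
The plan is to mirror the argument used for Proposition~\ref{prop:H1_un}, since the nominal system $\vec{\mathcal{H}}_1^r$ has essentially the same structure as $\mathcal{H}_1$, only lifted to the virtual domain. First I would introduce the shorthands
\begin{align*}
\mathbf{G}[k] &\coloneqq \mathbf{p}[k]-s\nabla \mathbf{f}(\mathbf{p}[k])+s\xi\tfrac{\hat{n}}{N}\mathbf{1}\mathbf{\hat{x}}[k]+s\xi\mathbf{\tilde{x}}[k],\\
\mathbf{H}[k] &\coloneqq \mathbf{\hat{x}}[k]-s\mathbf{1}^\top(\mathbf{p}[k]-\bm{\ell})+\tfrac{1}{\hat{n}}\mathbf{1}^\top\tilde{R}[k]\mathbf{\tilde{x}}[k]-s\tfrac{1}{\hat{n}}\mathbf{1}^\top\tilde{I}\tilde{R}[k]\mathbf{\tilde{h}}[k],
\end{align*}
together with their equilibrium counterparts $\mathbf{G}(\mathbf{p}^*,\mathbf{x}^*)$ and $\mathbf{x}^*$, so that $\mathbf{p}[k+1] = [\mathbf{G}[k]]_{\underline{\mathbf{p}}}^{\overline{\mathbf{p}}}$, $\mathbf{p}^* = [\mathbf{G}(\mathbf{p}^*,\mathbf{x}^*)]_{\underline{\mathbf{p}}}^{\overline{\mathbf{p}}}$, and $\mathbf{\hat{x}}[k+1] = \mathbf{H}[k]$. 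The Projection Theorem then gives $\|\mathbf{z}[k+1]\|$ bounded by the norm of $[\mathbf{G}[k]^\top,\mathbf{H}[k]^\top]^\top-[\mathbf{G}(\mathbf{p}^*,\mathbf{x}^*)^\top,\mathbf{x}^{*\top}]^\top$.

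Next, I would apply the mean value theorem componentwise to $\nabla\mathbf{f}(\mathbf{p}[k])-\nabla\mathbf{f}(\mathbf{p}^*)$ (noting that strong convexity holds on the virtual coordinates too, since $f_i(p_i)=m p_i^2$ for $i\in\mathcal{S}$), and rearrange the difference above into the form
\begin{align*}
A[k]\mathbf{z}[k] + s\xi\,\mathbf{E}[k],
\end{align*}
where $A[k] = I - sB[k]$ with
\begin{equation*}
B[k] \coloneqq \begin{bmatrix} \nabla^2\mathbf{f}(\bm{\upsilon}[k]) & -\xi\tfrac{\hat{n}}{N}\mathbf{1}\\ \mathbf{1}^\top & 0 \end{bmatrix},
\end{equation*}
and where $\mathbf{E}[k]$ collects all the $\mathbf{\tilde{x}}[k]$- and $\mathbf{\tilde{h}}[k]$-dependent terms (the $s\xi\mathbf{\tilde{x}}[k]$ block in $\mathbf{G}[k]$ and the $\tilde{R}[k]$-dependent corrections in $\mathbf{H}[k]$). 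The key observation is that $\tilde{R}[k]$ is row-stochastic (hence uniformly bounded), so $\|\mathbf{E}[k]\|\leq c\|\mathbf{e}[k]\|$ for a constant $c$ independent of $k$.

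Having reduced the dynamics to exactly the form analyzed in Proposition~\ref{prop:H1_un}, I would reuse its two-step spectral argument verbatim: first show that $B[k]$ has eigenvalues with strictly positive real part by testing against a candidate eigenvector $[v^\mathrm{H},w^\mathrm{H}]^\mathrm{H}$ and using strong convexity together with $\xi\in(0,\tfrac{n}{\hat{n}}]$; then, via Schur triangularization of the ``frozen'' matrix $A=I-sB(\underline{\mathbf{p}})$ and the diagonal similarity $D_t=\diag(t,t^2,\dots,t^{N+1})$, construct an induced norm in which $\|A[k]\|\leq\gamma<1$ uniformly in $k$ for sufficiently small $s$. This yields the one-step contraction $\|\mathbf{z}[k+1]\|\leq\gamma\|\mathbf{z}[k]\|+s\xi c\|\mathbf{e}[k]\|$, which on multiplying by $a^{-(k+1)}$ and taking $\max_{0\leq k\leq K}$ converts, exactly as in the closing calculation of Proposition~\ref{prop:H1_un}, into~\eqref{R1_dd_r}.

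The main obstacle is verifying that the extra terms in $\mathbf{H}[k]$ involving $\tilde{R}[k]$ can be absorbed into the ``disturbance'' side $s\xi\mathbf{E}[k]$ rather than the ``nominal'' side $A[k]\mathbf{z}[k]$. This requires checking that $\tilde{R}[k]\mathbf{1}=\mathbf{1}$ so that the $\mathbf{\hat{x}}[k]$ component cancels cleanly, and using $\tilde{R}[k]\mathbf{\tilde{x}}[k]=\tilde{R}[k](\mathbf{x}[k]-\mathbf{1}\overline{\mathbf{x}}[k])$, together with uniform boundedness of $\|\tilde{R}[k]\|$ (which follows from column stochasticity of $\tilde{P}[k]$ and the lower bound on $v_i[k]$ from Lemma~\ref{lem:v}). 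Once this absorption is justified, the remainder is a direct port of the undirected proof with $n+1$ replaced by $N+1$.
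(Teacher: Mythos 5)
Your proposal follows essentially the route the paper intends: the paper offers no standalone proof of Proposition~\ref{prop:H1_dd_r}, deferring to the analysis of Propositions~\ref{prop:H1_dd}--\ref{prop:small_gain_dd}, and your port of the Proposition~\ref{prop:H1_un} argument to the virtual domain (Projection Theorem, componentwise mean value theorem, the eigenvalue argument for $B[k]$ using $\xi\frac{\hat{n}}{N}\leq\xi\frac{\hat{n}}{n}\leq 1$, Schur triangularization with the diagonal scaling $D_t$, and the final $\max_{0\leq k\leq K}$ manipulation) is exactly that argument, including the correct identification of the only genuinely new step, namely absorbing the $\tilde{R}[k]$-dependent corrections into the disturbance via row-stochasticity of $\tilde{R}[k]$. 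One bookkeeping slip: the correction term $\frac{1}{\hat{n}}\mathbf{1}^\top\tilde{R}[k]\mathbf{\tilde{x}}[k]$ in the $\mathbf{\hat{x}}$-update of~\eqref{H1_dd_r} carries no factor of $s$ or $\xi$, so the perturbation cannot be written as $s\xi\mathbf{E}[k]$; the correct one-step bound is $\|\mathbf{z}[k+1]\|\leq\gamma\|\mathbf{z}[k]\|+C\|\mathbf{e}[k]\|$ with $C$ of order one rather than order $s$. This does not affect the validity of~\eqref{R1_dd_r}, which only asserts existence of some positive $\alpha_1$, but the resulting $\alpha_1$ is of order $1/s$ rather than order one (since necessarily $a-\gamma=O(s)$), a distinction one must keep track of when this constant is later combined with \textbf{R2} in the small-gain argument.
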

\begin{proposition}\label{prop:H2_dd_r}
Let Assumptions~\ref{objective_assumption}, \ref{assume_comm_model1}, and \ref{assume_comm_model3} hold.
Then, under~\eqref{H2_dd_r}, we have that
\begin{align}
\textnormal{\textbf{R2. }}\|\mathbf{e}\|_2^{a,K}\leq s\alpha_2\|\mathbf{z}\|_2^{a,K}+\beta_2,
\label{R2_dd_r}
\end{align}
for some positive~$\alpha_2$ and $\beta_2$, $a \in (0,1)$, and sufficiently small~$s>0$.
\end{proposition}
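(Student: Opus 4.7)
My plan is to mirror the proof of Proposition~\ref{prop:H2_dd} for the non-robust directed case, chaining the two tracking lemmas supplied in the augmented virtual domain with a preliminary bound on $\bm\delta$ in terms of $\mathbf z$. The first step is to relate $\|\bm\delta\|_2^{a,K}$ to $\|\mathbf z\|_2^{a,K}$: because every virtual node $i\in\mathcal S$ has $\underline p_i=\overline p_i=0$, the corresponding coordinates of $\mathbf p[k]$ vanish for all $k$, so that $\|\mathbf p[k]-\mathbf p[k-1]\|_2=\|p[k]-p[k-1]\|_2$. Inserting $\pm p^*$, invoking the triangle inequality, and then passing to the weighted norm (multiplying by $a^{-k}$ and taking $\max_{0\le k\le K}(\cdot)$) exactly as in the derivation of~\eqref{delta_to_z} yields
\begin{align*}
\|\bm\delta\|_2^{a,K}\le 2\hat n\,\|\mathbf z\|_2^{a,K}+\|\bm\delta[0]\|_2.
\end{align*}

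Next, I would chain the two tracking lemmas. Plugging the $\bm\delta$ estimate into Lemma~\ref{lem:h_to_z_r} yields $\|\mathbf{\tilde h}\|_2^{a,K}\le 2\hat n\gamma_1\|\mathbf z\|_2^{a,K}+\gamma_1\|\bm\delta[0]\|_2+\gamma_2$, and feeding that into Lemma~\ref{lem:x_to_h_r} produces $\|\mathbf{\tilde x}\|_2^{a,K}\le 2s\hat n\gamma_1\gamma_3\|\mathbf z\|_2^{a,K}+s\gamma_3(\gamma_1\|\bm\delta[0]\|_2+\gamma_2)+\gamma_4$. Since $\mathbf e[k]=[\mathbf{\tilde x}[k]^\top,\mathbf{\tilde h}[k]^\top]^\top$, we have $\|\mathbf e[k]\|_2\le\|\mathbf{\tilde x}[k]\|_2+\|\mathbf{\tilde h}[k]\|_2$, so summing the two displays gives the advertised $\|\mathbf e\|_2^{a,K}\le s\alpha_2\|\mathbf z\|_2^{a,K}+\beta_2$, with $\alpha_2$ and $\beta_2$ read off by collecting like terms in $\hat n$, $\gamma_1,\dots,\gamma_4$ and the initial data.

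The main obstacle is that Lemma~\ref{lem:h_to_z_r} carries no explicit $s$ in front of $\|\bm\delta\|_2^{a,K}$, so the $\mathbf{\tilde h}$ contribution to $\|\mathbf e\|_2^{a,K}$ gains no factor of $s$ from the chaining above. This is reconciled by observing that $\mathbf{\tilde h}$ enters the nominal system $\vec{\mathcal H}_1^r$ only through the explicitly $s$-scaled term $-s\hat n^{-1}\mathbf 1^\top\tilde I\tilde R[k]\mathbf{\tilde h}[k]$, so that the missing factor is effectively absorbed into $\alpha_1$ when closing the loop; equivalently, bundling the non-$s$-scaled coefficient into $\alpha_2$ still leaves $s\alpha_1\alpha_2$ strictly less than one for $s$ sufficiently small, exactly as in Propositions~\ref{prop:small_gain_un} and~\ref{prop:small_gain_dd}. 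A secondary technical nuisance is keeping careful track of the initial conditions (in particular, the value of $\mathbf h[0]$ dictated by~\eqref{def_h} with its zero virtual-node block), since these feed through $\gamma_2$ and $\gamma_4$ into $\beta_2$ and must remain finite under Assumptions~\ref{objective_assumption}, \ref{assume_comm_model1}, and \ref{assume_comm_model3}.
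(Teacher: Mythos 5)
Your route is the one the paper itself intends: it gives no explicit proof of Proposition~\ref{prop:H2_dd_r}, deferring to the analogy with Proposition~\ref{prop:H2_dd} and to Lemmas~\ref{lem:h_to_z_r}--\ref{lem:x_to_h_r}, and your first two steps (the bound $\|\bm\delta\|_2^{a,K}\le 2\hat n\|\mathbf z\|_2^{a,K}+\|\bm\delta[0]\|_2$ and the chaining of the two lemmas) are exactly the analogues of \eqref{delta_to_z}, \eqref{h_to_z} and \eqref{x_to_h}. But the obstacle you flag is a genuine gap, and neither of your proposed reconciliations closes it. Summing your two displays gives $\|\mathbf e\|_2^{a,K}\le(2\hat n\gamma_1+2s\hat n\gamma_1\gamma_3)\|\mathbf z\|_2^{a,K}+\cdots$, whose leading coefficient does not vanish with $s$; forcing this into the form $s\alpha_2\|\mathbf z\|_2^{a,K}+\beta_2$ makes $\alpha_2$ of order $1/s$, so ``bundling the non-$s$-scaled coefficient into $\alpha_2$'' keeps $s\alpha_1\alpha_2$ small only if $\alpha_1=O(s)$ --- and for $\vec{\mathcal H}_1^r$ it is not, because $\tilde{\mathbf x}[k]$ enters the $\hat{\mathbf x}$-update through the un-scaled term $\frac{1}{\hat n}\mathbf 1^\top\tilde R[k]\tilde{\mathbf x}[k]$. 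Your alternative (absorbing the factor into $\alpha_1$) proves a different pair of inequalities and would require a block-structured small-gain argument that treats the $\tilde{\mathbf x}$- and $\tilde{\mathbf h}$-channels separately; it does not establish \eqref{R2_dd_r} as stated.

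To get the advertised factor of $s$ you need a sharper bound on $\bm\delta$ than the crude triangle inequality: since $\mathbf p[k]$ lies in the box and the projection is nonexpansive, $\|\mathbf p[k+1]-\mathbf p[k]\|_2\le s\big\|{-\nabla\mathbf f(\mathbf p[k])+\xi\tfrac{\hat n}{N}\hat{\mathbf x}[k]\mathbf 1+\xi\tilde{\mathbf x}[k]}\big\|_2\le s\big(C_1\|\mathbf z[k]\|_2+\xi\|\tilde{\mathbf x}[k]\|_2+C_2\big)$, using boundedness of $\nabla^2\mathbf f$ on the compact box. Passing to the weighted norm, substituting into Lemma~\ref{lem:h_to_z_r} and then Lemma~\ref{lem:x_to_h_r}, and absorbing the resulting $O(s^2)\|\tilde{\mathbf x}\|_2^{a,K}$ self-term for $s$ small enough yields $\|\tilde{\mathbf h}\|_2^{a,K}\le s\,O\big(\|\mathbf z\|_2^{a,K}\big)+O(1)$, and hence \eqref{R2_dd_r} with $\alpha_2$ independent of $s$. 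Your treatment of the zero virtual coordinates of $\mathbf p$ and of the initial data entering $\beta_2$ is otherwise correct.
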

\begin{proposition}\label{prop:small_gain_dd_r}
Let Assumptions~\ref{objective_assumption}, \ref{assume_comm_model1}, and \ref{assume_comm_model3} hold.
Then, under algorithm~\eqref{robust_primal_dual_alg}--\eqref{runsum},
\begin{align}
\|\mathbf{z}\|_2^{a,K}\leq \beta,
\label{z_relation_r}
\end{align}
for some~$\beta>0$, $a \in (0,1)$, sufficiently small~$s>0$, and $\forall\xi \in (0,\frac{n}{\hat{n}}]$. In particular, $(p_i[k],x_i[k])$ converges to $(\mathbf{p}_i^*,\mathbf{x^*})$, $i=1,\dots,n$, at a geometric rate~$\mathcal{O}(a^k)$.
\end{proposition}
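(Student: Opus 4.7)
The plan is to apply the discrete-time small-gain theorem in direct analogy with the proof of Proposition~\ref{prop:small_gain_un}, using the finite-gain bounds established in Propositions~\ref{prop:H1_dd_r} and~\ref{prop:H2_dd_r} as the two sides of the feedback loop depicted in Fig.~\ref{fig:control_interpret_dd_r}. The key observation is that these bounds are formally identical in shape to those used in the undirected case, so once they are available the closure argument is essentially mechanical.

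Concretely, I would substitute the R2 bound \eqref{R2_dd_r} into the R1 bound \eqref{R1_dd_r} to obtain
\[\|\mathbf{z}\|_2^{a,K}\leq \alpha_1\|\mathbf{e}\|_2^{a,K}+\beta_1\leq s\alpha_1\alpha_2\|\mathbf{z}\|_2^{a,K}+\alpha_1\beta_2+\beta_1.\]
For any $s$ small enough that the loop gain satisfies $s\alpha_1\alpha_2<1$, rearranging yields
\[\|\mathbf{z}\|_2^{a,K}\leq \frac{\alpha_1\beta_2+\beta_1}{1-s\alpha_1\alpha_2}\eqqcolon \beta,\]
which is precisely~\eqref{z_relation_r}. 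Because this bound is uniform in $K$, the definition of $\|\cdot\|_2^{a,K}$ gives $\|\mathbf{z}[k]\|_2\leq \beta\, a^k$ for all $k\geq 0$, so $\mathbf{z}[k]$ decays geometrically at rate~$\mathcal{O}(a^k)$. In particular, for every actual node~$i\in\mathcal{V}$, $p_i[k]\to p_i^*$ and $\mathbf{\hat{x}}[k]\to\mathbf{x}^*$ at the claimed rate.

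To upgrade the convergence of the aggregate quantity $\mathbf{\hat{x}}[k]$ into convergence of each per-node iterate $x_i[k]$, I would then feed the uniform bound on $\|\mathbf{z}\|_2^{a,K}$ back into R2 to see that $\|\mathbf{e}\|_2^{a,K}$ is also finite, so that $\mathbf{\tilde{x}}[k]$ (the first block of $\mathbf{e}[k]$) decays as $\mathcal{O}(a^k)$. Combining this with the already-established geometric decay of $\mathbf{\hat{x}}[k]-\mathbf{x}^*$ and the identity $x_i[k]=\mathbf{\overline{x}}[k]+\mathbf{\tilde{x}}_i[k]$ delivers geometric convergence of each actual-node iterate $x_i[k]$ to its limit. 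That this limit is the desired KKT solution follows exactly as in Lemma~\ref{lem:p_star_dd}: the virtual nodes have $\underline{p}_i=\overline{p}_i=0$ and $\ell_i=0$, hence contribute nothing to either the balance constraint or the objective, so the equilibrium of $\vec{\mathcal{H}}_1^r$ restricts to a KKT point of~\eqref{ED} in the actual coordinates.

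The main obstacle is not in this closure step, which is routine once R1 and R2 are in hand, but rather in the background machinery that makes those two bounds meaningful in the robust virtual-domain representation: the uniform lower bound on $v_i[k]$ from Lemma~\ref{lem:v}, the resulting row-stochastic-type bound on $\tilde{R}[k]$ from Lemma~\ref{lem:R}, and the tracking estimates of Lemmas~\ref{lem:h_to_z_r}--\ref{lem:x_to_h_r}, all of which are needed precisely because the agents no longer know $D_i^+[k]$. With those preconditions verified, the small-gain argument yields the geometric rate without further structural input.
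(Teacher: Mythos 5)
Your proposal is correct and follows exactly the route the paper takes: the paper establishes Proposition~\ref{prop:small_gain_dd_r} by the same small-gain closure used in Proposition~\ref{prop:small_gain_un}, substituting the bound of Proposition~\ref{prop:H2_dd_r} into that of Proposition~\ref{prop:H1_dd_r} and rearranging to get $\|\mathbf{z}\|_2^{a,K}\leq(\alpha_1\beta_2+\beta_1)/(1-s\alpha_1\alpha_2)$ for $s$ small enough that $s\alpha_1\alpha_2<1$. Your extra step of feeding the bound back into R2 to recover geometric decay of $\mathbf{\tilde{x}}[k]$, and hence of the per-node iterates $x_i[k]$, is a sound refinement of a detail the paper leaves implicit.
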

Finally, in the following, we establish that $p^*\coloneqq[\mathbf{p}_1^*,\dots,\mathbf{p}_n^*]$ is the solution of \eqref{ED}. [The proof is similar to the proof of Lemma~\ref{lem:p_star}]
\begin{lemma}\label{lem:p_star_dd}
Consider $(\mathbf{p}^*,\mathbf{x}^*)$, namely, the equilibrium of the nominal system~$\vec{\mathcal{H}}^r_1$ with $\mathbf{e}[k]\equiv 0$, $\forall k$.
Then, $p^*$ is the solution of \eqref{ED}.
\end{lemma}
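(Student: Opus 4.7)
The plan is to mimic the argument in the proof of Lemma~\ref{lem:p_star}, working in the augmented (virtual) domain and then restricting to the actual agents to recover the KKT conditions of~\eqref{ED}. First I would write down the equilibrium of $\vec{\mathcal{H}}_1^r$ under $\mathbf{e}[k]\equiv 0$ (equivalently $\mathbf{\tilde{x}}[k]\equiv 0$ and $\mathbf{\tilde{h}}[k]\equiv 0$), which gives
\[
\mathbf{p}^* = \Big[\mathbf{p}^* - s\nabla\mathbf{f}(\mathbf{p}^*) + s\xi\tfrac{\hat{n}}{N}\mathbf{\hat{x}}^*\mathbf{1}\Big]_{\underline{\mathbf{p}}}^{\overline{\mathbf{p}}},\qquad
\mathbf{\hat{x}}^* = \mathbf{\hat{x}}^* - s\mathbf{1}^\top(\mathbf{p}^* - \bm{\ell}).
\]
The second relation immediately yields the augmented balance condition $\mathbf{1}^\top(\mathbf{p}^*-\bm{\ell})=0$. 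Since the virtual indices $i\in\mathcal{S}$ satisfy $\underline{p}_i=\overline{p}_i=0$ and $\ell_i=0$, the projection forces $\mathbf{p}_i^*=0$ for those indices, so the augmented balance collapses to the original $\mathbf{1}^\top p^* = \mathbf{1}^\top\ell$ on the $n$ real coordinates.

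Next, I would apply the Projection Theorem \cite[Proposition~3.3.1]{NonlinearProgramming} to the fixed-point relation for $\mathbf{p}^*$ exactly as in the proof of Lemma~\ref{lem:p_star}: for each $i$ there exist multipliers $\mu_i^*,\nu_i^*\geq 0$ satisfying $\mu_i^*(\mathbf{p}_i^*-\overline{p}_i)=0$ and $\nu_i^*(\underline{p}_i-\mathbf{p}_i^*)=0$ and
\[
\nabla f_i(\mathbf{p}_i^*) - \xi\tfrac{\hat{n}}{N}\mathbf{\hat{x}}^* + \mu_i^* - \nu_i^* = 0.
\]
Setting $\lambda^* := \xi\tfrac{\hat{n}}{N}\mathbf{\hat{x}}^*$ and restricting the collection of relations above to $i\in\mathcal{V}$, together with the balance condition already derived, yields exactly the KKT conditions of~\eqref{ED}. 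For the virtual indices $i\in\mathcal{S}$, the trivial box constraint $\underline{p}_i=\overline{p}_i=0$ allows any real value on the right-hand side to be absorbed into $\mu_i^*-\nu_i^*$ with the appropriate choice of nonnegative multipliers, so these indices impose no additional restriction on $\mathbf{\hat{x}}^*$.

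Finally, since~\eqref{ED} is convex with strongly convex objective (Assumption~\ref{objective_assumption}) and affine constraints, and the feasibility assumption $\mathbf{1}^\top\underline{p}\leq \mathbf{1}^\top\ell\leq \mathbf{1}^\top\overline{p}$ guarantees Slater's condition, the KKT conditions are both necessary and sufficient \cite[Proposition~3.3.1]{NonlinearProgramming}; hence $p^* = [\mathbf{p}_1^*,\dots,\mathbf{p}_n^*]^\top$ is the unique solution of~\eqref{ED}. I expect the only delicate step to be the bookkeeping that cleanly separates the virtual coordinates from the real ones when mapping the augmented equilibrium back to the original problem; the rest is essentially identical to the argument already used for Lemma~\ref{lem:p_star}.
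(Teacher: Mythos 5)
Your proposal is correct and follows essentially the same route as the paper, which simply defers to the proof of Lemma~\ref{lem:p_star}: write the equilibrium conditions of the nominal system, extract the KKT system of~\eqref{ED}, and note that the virtual coordinates are pinned to zero by their degenerate box constraints and so impose no additional restrictions. The only nit is a citation slip: the fixed-point characterization of the projection is \cite[Proposition~2.1.3]{NonlinearProgramming}, whereas \cite[Proposition~3.3.1]{NonlinearProgramming} is the KKT optimality result you correctly invoke at the end.
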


\subsection{Numerical Simulations}
\label{subsec:simulations_dd_r}
Next, we present the numerical results that illustrate the performance of the proposed robust distributed primal-dual algorithm~\eqref{robust_primal_dual_alg}--\eqref{runsum} using the same test system that was used previously in Section~\ref{subsec:simulations}. 
With regard to the communication model, every pair of nodes are connected by a single or two opposite unidirectional communication links if there is an electrical line between them. We assign the orientations of the communication links such that the nominal communication graph~$\mathcal{G}^{(0)}$ is strongly connected. Communication links fail with probability~$0.2$ independently (and independently between different time steps). We also assume that out-degrees,~$D_i^+[k]$, are unknown to DERs.

We compare the performance of algorithm~\eqref{robust_primal_dual_alg}--\eqref{runsum}, for convenience referred to as~$\mathbf{A}_1$, against that of the distributed algorithms proposed in \cite{DuYa18}, \cite{WuJo17}, \cite{KaHu12}, referred to as~$\mathbf{A}_2$, $\mathbf{A}_3$, and $\mathbf{A}_4$, respectively. In~$\mathbf{A}_1$, we use $\gamma=0.9$, $\hat{n}=20$, $s=0.02$, and $\xi=0.2$.

Algorithms~$\mathbf{A}_1$ and $\mathbf{A}_2$ use a constant stepsize~$s$. In contrast, $\mathbf{A}_3$ and $\mathbf{A}_4$ need to use a diminishing stepsize in order to guarantee convergence. However, if the stepsize is constant and sufficiently small, $\mathbf{A}_3$ and $\mathbf{A}_4$ can still achieve convergence within a small error. We tested the performance of~$\mathbf{A}_3$ using different diminishing stepsizes of the form~$s[k] = a/(k+b)$, where $a>0$ and $b>0$. To test $\mathbf{A}_4$, we used $(\alpha_k, \beta_k) = (0.003,0.3)$ (see, e.g., \cite{KaHu12}) and $(\alpha_k, \beta_k) = (\frac{20}{k+1000}, 0.3)$, which in this numerical example worked better than the stepsizes used in the numerical simulations in \cite{KaHu12}.   

In Fig.~\ref{fig:num_results_dd}, we provide the convergence error, namely, the Euclidean distance between the exact and iterative solutions, $\|p[k]-p^*\|_2$, for all algorithms. It can be seen that $\mathbf{A}_1$ outperforms $\mathbf{A}_2$, $\mathbf{A}_3$ and $\mathbf{A}_4$ and has geometric convergence speed. $\mathbf{A}_2$ fails to converge because out-degrees,~$D_i^+[k]$, are unknown to DERs. Through numerical simulations, we observed that it is in general difficult to choose the right values for~$a$ and $b$ in order for~$\mathbf{A}_3$ to operate well. In fact, if the ratio~$a/b$ is large, $\mathbf{A}_3$ might exhibit an oscillatory behavior. But setting~$a/b$ to a small value results in a slow convergence. 

\begin{figure}
    \centering
	\includegraphics[trim=0.2cm 0cm 0cm 0cm, clip=true, scale=0.48]{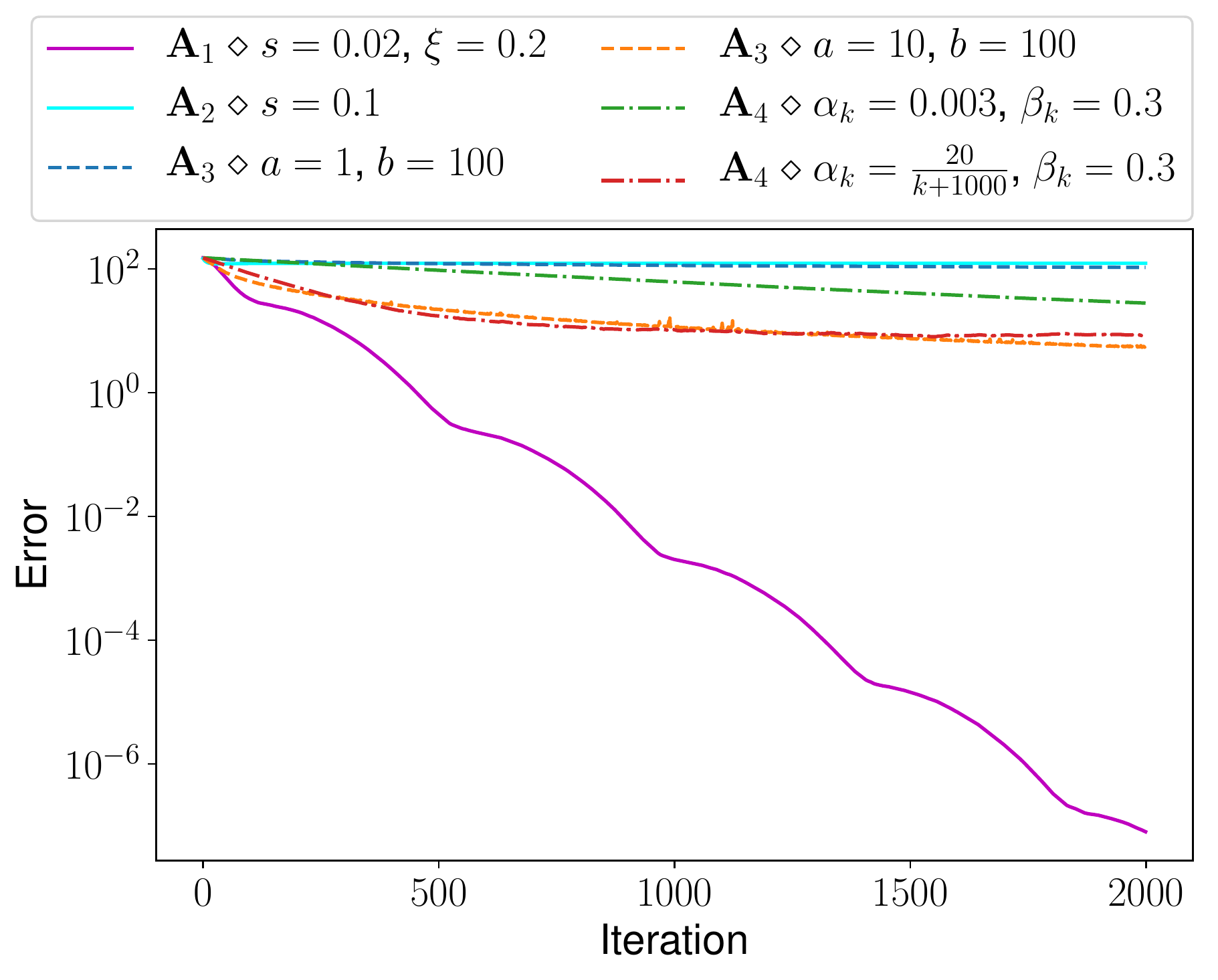} 
    \caption{Trajectory of~$\|p[k]-p^*\|_2$ for algorithms~$\mathbf{A}_1$--$\mathbf{A}_4$.}
    \vspace{-5pt} 
    \label{fig:num_results_dd}
\end{figure}

\section{Conclusion}
\label{sec:conclusion}
We presented distributed algorithms for solving the DER coordination problem over time-varying communication graphs. The algorithms have geometric convergence rate. One important future direction is to extend the proposed algorithms to solve more complex possibly multi-period DER coordination problems with additional constraints, e.g., line flow constraints, voltage constraints, or reactive power balance constraints.

\bibliographystyle{IEEEtran}
\bibliography{References}

\end{document}